\title{A Circuit Approach to Constructing Blockchains on Blockchains}
\author{Ertem Nusret Tas}{Stanford University, USA}{nusret@stanford.edu}{https://orcid.org/0000-0001-6061-9700}{}
\author{David Tse}{Stanford University, USA}{dntse@stanford.edu}{https://orcid.org/0000-0003-1460-5900}{}
\author{Yifei Wang}{Stanford University, USA}{wangyf18@stanford.edu}{https://orcid.org/0000-0002-0364-0893}{}
\authorrunning{E.N. Tas, Y. Wang and D. Tse} %
\keywords{interchain consensus protocols, serial composition, triangular composition, circuits} %
\begin{document}

\maketitle

\begin{abstract}
Recent years have witnessed an explosion of blockchains, each with an open ledger that anyone can read from and write to. In this multi-chain world, an important question emerges: how can we build a more secure overlay blockchain by reading from and writing to a given set of blockchains? Drawing an analogy with switching circuits, we approach the problem by defining two basic compositional operations between blockchains, serial and \lvl compositions, and use these operations as building blocks to construct general overlay blockchains. Under the partially synchronous setting, we have the following results: 1) the serial composition, between two certificate-producing blockchains, yields an overlay blockchain that is safe if at least one of the two underlay blockchains is safe and that is live if both of them are live; 2) the \lvl composition between three blockchains, akin to parallel composition of switching circuits, yields an overlay blockchain that is safe if all underlay blockchains are safe and that is live if over half of them are live; 3) repeated composition of these two basic operations can yield all possible tradeoffs of safety and liveness for an overlay blockchain built on an arbitrary number of underlay chains.
The results are also extended to the synchronous setting\let\thefootnote\relax\footnote{The authors are listed alphabetically.}.
\end{abstract}

\section{Introduction}
\label{sec:introduction}

\subsection{Background}

Bitcoin, invented by Nakamoto in 2008 \cite{bitcoin}, is the first blockchain with a public ledger, which anybody can read from and write arbitrary data.
Since then, there has been a proliferation of such blockchains. 
Each of them is a consensus protocol run by its own set of validators.
Together, these blockchains form a {\em multi-chain world}, communicating with each other through bridging protocols which read from and write to the blockchains.

As a consensus protocol, a fundamental property of a blockchain is its {\em security}: a blockchain is secure if the ledger it provides is safe and live. 
The security is supported by the blockchain's set of validators. 
In a multi-chain world, a natural question arises: given a set of existing blockchains, how to build a more secure protocol, an {\em overlay} blockchain, by only reading from and writing to the ledgers of the individual {\em underlay} blockchains? 
In other words, how to build a blockchain on blockchains?

This problem has received attention recently, and there have been two main approaches to this problem in the literature. 
The first approach is {\em interchain timestamping}.
In the context of two blockchains, data on one blockchain is timestamped to another blockchain, and a more secure ledger is obtained by reading the ledger of the first chain using the timestamps on the second chain to resolve forks.
An interchain timestamping protocol was proposed in \cite{bitcoin-timestamp} to allow a Proof-of-Stake (PoS) chain to borrow security from Bitcoin. 
In that work, Bitcoin is assumed to be secure and the problem was to determine the optimal security properties that can be achieved by the PoS chain. 
A more symmetric formulation is considered in \cite{tas2023interchain}, where none of the individual chains is assumed to be secure. 
Moreover, the interchain timetamping protocol is extended to more than $2$ chains, where the timestamping proceeds in a {\em sequential} manner, where the chains are ordered and the first chain timestamps to a second chain which timestamps to a third chain, etc. 
The main security result in \cite{tas2023interchain} is that the overlay blockchain is safe if at least one of the underlay blockchains is safe, and is live if all of the underlay blockchains are live.

In the second approach, an analogy is drawn between the multiple blockchains and the multiple validators in a blockchain, 
and an overlay blockchain is built by running a consensus protocol on top of the underlay blockchains by treating them as validators. 
This idea was first sketched out in Recursive Tendermint \cite{recursive-tendermint} in the context of the Cosmos ecosystem, consisting of numerous application specific blockchains each running the Tendermint consensus protocol \cite{tendermint}. 
Recently, this idea was made more precise and concrete by Trustboost \cite{trustboost}, where the validator role of each underlay blockchain is instantiated by a specialized smart contract. 
These simulated validators send messages between the underlay blockchains via a cross-chain communication protocol (CCC) to implement a variant~\cite{iths} of the Hotstuff consensus protocol \cite{yin2018hotstuff}. 
The main security result in \cite{trustboost} is that, in a partially synchronous network,  the overlay blockchain is secure (safe and live) if more than  $2/3$ of the underlay blockchains are secure.

\subsection{Problem Motivation}

Even though interchain timestamping and Trustboost both propose a construction of blockchains on blockchains, their security statements are quite different in nature. 
First, the conditions for safety and liveness are separate for the interchain timestamping protocol, while they are coupled in Trustboost.
Since loss of safety and loss of liveness may have different impacts on a blockchain, separating out when safety and liveness are achieved is useful. 

Second, when the safety condition of the overlay blockchain depends {\em only} on the safety of the underlay blockchains, one can immediately infer the {\em accountable safety}~\cite{casper} (also known as the forensics property~\cite{forensics}) of the overlay blockchain in terms of the accountable safety of the underlay chains.
Accountable safety states that if the adversary controls a large fraction of the validators and causes a safety violation, all protocol observers can irrefutably identify the adversarial validators responsible for the safety violation.
It is thus a strengthening of the traditional safety guarantees of consensus protocols.
When the overlay blockchain's safety depends only on the underlay chains' safety, a safety violation on the overlay would imply safety violations on (some of) the underlays.
Therefore, if the underlay chains satisfy accountable safety, when the overlay's safety is violated, all protocol observers would identify the responsible adversarial validators of the underlay chains, implying accountable safety for the overlay blockchain.

Whereas there are protocols satisfying accountable safety~\cite{snapandchat,forensics}, adversarial validators responsible for liveness violations cannot be held accountable in the same sense as accountable safety~\cite{btc-pos}.
Thus, to infer the accountable safety of an overlay blockchain, its safety should depend \emph{only} on the safety but not the liveness of the underlay blockchains.
Indeed, if the overlay blockchain loses safety due to liveness violations on the underlay blockchains, it might not be possible to identify the responsible adversarial validators. 

Finally, separating safety and liveness of the overlay blockchain and characterizing their dependence on the safety and liveness of the underlay chains enables achieving greater resilience than when security is based on the number of secure underlay chains, with safety and liveness \emph{coupled}.
For illustration, Trustboost~\cite{trustboost} shows security of the overlay blockchain only when over $2/3$ of the underlay chains are secure, \ie, \emph{both} safe and live.
Note that this is optimal if the overlay chain's security were to be based on the number of secure underlay chains.
However, by separating safety and liveness, we can achieve safety for the overlay blockchain if over $2/3$ of the underlay chains are safe, and liveness if over $2/3$ of the underlay chains are live, where the sets of safe and live underlay chains need not be the \emph{same}.
This implies that the overlay blockchain can be secure, even when up to $2/3$ of the underlay chains are not both safe and live, \ie, secure!
While this may sound puzzling, there are no hidden tricks at play here.
Indeed, any two quorums of underlay chains required for the liveness of the overlay blockchain must intersect at an underlay chain whose safety is required for the overlay's safety.
In contrast, using our notation, Trustboost would require both liveness quorums to intersect with a safety quorum at over $2/3$ of the underlay chains.

Interchain timestamping protocols provide an inspiration for security statements separating out safety and liveness, but they only achieve one particular tradeoff between safety and liveness: they favor safety strongly over liveness.
This is because safety of the overlay blockchain requires only one of the underlay blockchains to be safe, while liveness of the overlay blockchain requires all of the underlay blockchains to be live.
Therefore, two natural questions arise: 1) What are all the tradeoffs between safety and liveness which can be achieved? 2) How can we construct overlay blockchains that can achieve all the tradeoffs? 
The main contributions of this paper are to answer these two questions.

\subsection{Security Theorems}

Consider overlay blockchains instantiated with $k$ underlay chains (\cf Section~\ref{sec:prelim} for a formal definition of overlay blockchains). 
We say a tuple $(k, s, l)$ is achievable if one can construct an overlay blockchain such that

\noindent
a. If $s$ or more underlay blockchains are safe, the overlay blockchain is safe.

\noindent
b. If $l$ or more underlay blockchains are live with constant latency after the global stabilization time (GST), the overlay blockchain is live with constant latency after GST. 

\noindent
Going forward, when referring to the liveness of a blockchain, we mean liveness with constant latency after GST.

We identify all achievable tuples and provide a protocol achieving them (Fig.~\ref{fig:achieve}). 
\begin{theorem}
\label{thm:informal}
Consider the partially synchronous setting. 
For any integers $k\geq1$, $l$ and $s$ such that $\flr{k/2}+1\leq l\leq k$ and $s\geq 2(k-l)+1$, the tuple $(k,s,l)$ is achievable.
\end{theorem}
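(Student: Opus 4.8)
The plan is to reduce the claim to its ``boundary'' tuples and then exhibit an explicit overlay built from serial and \lvl compositions.

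First, a monotonicity observation lets me restrict attention to the extreme case $s=2(k-l)+1$: if some overlay on $k$ chains is safe whenever at least $s_0$ underlay chains are safe and live whenever at least $l$ are live, then that same overlay also witnesses $(k,s,l)$ for every $s\ge s_0$, since ``at least $s$ safe'' implies ``at least $s_0$ safe''. So it suffices to construct, for every $k\ge1$ and every $l$ with $\flr{k/2}+1\le l\le k$, an overlay achieving $(k,s,l)$ with $s:=2(k-l)+1$. Write $f:=k-l$, the number of liveness faults to tolerate; then $s=2f+1$ is odd, $f=\flr{s/2}$, and the hypothesis $l\ge\flr{k/2}+1$ is precisely $s\le k$.

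Second, the building block. For each $s$-element subset $I$ of the $k$ underlay chains I build a gadget $M_I$ that reads only the chains in $I$, is safe if all $s$ of them are safe, and is live if at least $\lceil s/2\rceil=f+1$ of them are live; this is exactly the guarantee of a \lvl composition, but over $s$ chains rather than three. Since $s$ is odd, the $s$-input majority function is monotone and self-dual, hence (classically) computed by a monotone formula all of whose gates are ternary majorities; iterating \lvl compositions according to such a formula yields $M_I$, the liveness rule of \lvl composing these ternary majorities into $\mathrm{MAJ}_s$ while the safety rule (``all inputs safe'') composes into ``every chain of $I$ safe''. (If the paper's \lvl composition is already stated for an arbitrary odd number of chains, then $M_I$ is just one such composition.)

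Finally, let $B$ be the iterated serial composition of all the gadgets $M_I$, ranging over the $s$-subsets $I$. Iterating the serial-composition guarantee along the composition tree, $B$ is safe if some $M_I$ is safe and live if every $M_I$ is live. Safety: if at least $s$ underlay chains are safe, pick $I$ to be $s$ of them; then all chains of $M_I$ are safe, so $M_I$, hence $B$, is safe. Liveness: if at least $l=k-f$ underlay chains are live, then at most $f$ are not, so every $s$-subset contains at most $f$ non-live chains, i.e.\ at least $s-f=f+1$ live chains, whence every $M_I$, and therefore $B$, is live. Thus $B$ achieves $(k,s,l)$, which together with the first paragraph proves the theorem. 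The one genuinely technical point is the building block in the third paragraph: realizing the ``all-safe / majority-live'' $s$-chain gadget purely from ternary \lvl compositions, which rests on writing $\mathrm{MAJ}_s$ ($s$ odd) as a composition of ternary majorities together with the fact that an all-\lvl circuit has safety predicate equal to the conjunction over all chains it touches; one must also verify the serial/\lvl composition statements are phrased for general overlays-on-overlays so they can be chained along the circuit. Everything else is the short counting above.
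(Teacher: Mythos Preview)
Your proof is correct and follows the same two-step plan as the paper: build an $(s,s,\lceil s/2\rceil)$ gadget on $s=2(k-l)+1$ chains, then take the iterated serial composition of these gadgets over all $s$-element subsets of $[k]$---this second step is precisely the paper's Lemma~\ref{lem:leave}, with the identical pigeonhole count for liveness (at most $f=k-l$ non-live chains, so every $s$-subset retains $\ge f+1$ live ones). The one substantive difference is in how the gadget is realized: the paper supplies an explicit recursion (Lemma~\ref{lem:root}) that interleaves serial and \lvl steps to reach $(2f+1,2f+1,f+1)$, whereas you invoke the classical fact that $\mathrm{MAJ}_s$ for odd $s$ lies in the clone generated by ternary majority and assemble the gadget purely from \lvl compositions arranged according to such a formula. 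Both routes are valid; yours is conceptually cleaner (the liveness predicate is literally the formula's output, and safety is the conjunction over all leaves) but leans on an external structural result, while the paper's construction is fully self-contained and more explicit about sizes.
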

In particular, the tuple $(k,\clr{\frac{2k}{3}}, \clr{\frac{2k}{3}})$ is achievable, \ie, there is an overlay blockchain that is safe if more than $2/3$ of the underlay chains are safe, and is live if more than $2/3$ of the underlay chains are live. 
This implies that the overlay is safe and live if more than $2/3$ of the underlay chains are safe and more than $2/3$ of the underlay chains are live. 
Note that this is a strictly stronger security guarantee than Trustboost, which is guaranteed to be safe and live if more than $2/3$ of the underlay chains are both safe and live; \ie, the {\em same} chains need to be safe and live in this latter statement. 
Moreover, Theorem \ref{thm:informal} includes also asymmetric operating points where $s \neq \ell$. 

The next theorem gives a matching impossibility result.
\begin{theorem}[Informal, Theorem~\ref{lem:converse-symmetric-psync}]
\label{thm:impos}
Consider the partially synchronous network.
For any integers $k\geq1$, $l$ and $s$ such that $\flr{k/2}+1\leq l\leq k$ and $s<2(k-l)+1$, no protocol can satisfy the following properties simultaneously:

\noindent
a. If $s$ underlay blockchains are safe and \emph{all underlay blockchains are live}, the overlay blockchain is safe.

\noindent
b. If $l$ underlay blockchains are live and \emph{all underlay blockchains are safe}, the overlay blockchain is live.

The same result holds for any integers $k \geq 1$ and $l \leq k/2$.
\end{theorem}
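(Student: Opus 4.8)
The plan is a partition / indistinguishability argument in the style of classical CAP-type lower bounds, where a \emph{non-safe} underlay chain is one whose (Byzantine) validators may equivocate and present conflicting ledgers to different observers, while a \emph{non-live} underlay chain is one that crashes (stops extending its ledger). Suppose, toward a contradiction, that an overlay protocol $\Pi$ satisfies both (a) and (b). Fix two conflicting transactions $\mathsf{tx},\mathsf{tx}'$, two honest overlay observers $c_1,c_2$, and write $f := k-l$.

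Consider first the regime $\flr{k/2}+1\le l\le k$, so that $2f\le k-1$ and, by hypothesis, $s\le 2f$. Partition the $k$ underlay chains into three sets $A$ and $B$ with $|A|=|B|=f$, and $C$ with $|C| = k-2f = 2l-k \ge 1$; note $|A\cup C| = |B\cup C| = l$ and $|A\cup B| = 2f \ge s$. I would first run two ``clean'' executions. In world $W_1$ the chains in $B$ crash, the chains in $A\cup C$ (exactly $l$ of them) are live, every chain is safe, the network is synchronous from time $0$, and $\mathsf{tx}$ is submitted at time $0$; by (b), $c_1$ includes $\mathsf{tx}$ in its overlay ledger by some finite time $T_1$. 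World $W_2$ is symmetric, with the roles of $A$ and $B$ swapped and $\mathsf{tx}'$ submitted, so $c_2$ includes $\mathsf{tx}'$ by some finite time $T_2$. I would then build the ``merged'' world $W_3$: the chains in $A$ and $B$ are safe and live, the chains in $C$ are live but not safe, and the network becomes synchronous only after time $\max(T_1,T_2)+1$. Before that time the adversary schedules messages so that (i) the joint execution of the chains in $A\cup C$ together with $c_1$ is bit-for-bit identical to $W_1$ up to $T_1$ --- each chain in $C$ exposes to $c_1$ exactly its $W_1$-ledger, and the messages from the chains in $A$ to $c_2$ are withheld --- and (ii) the joint execution of the chains in $B\cup C$ together with $c_2$ is identical to $W_2$ up to $T_2$, with each chain in $C$ exposing to $c_2$ its conflicting $W_2$-ledger and the messages from the chains in $B$ to $c_1$ withheld. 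These two sub-executions can be scheduled without interfering precisely because the only chains common to both are those in $C$, which are permitted to be non-safe and can therefore maintain both ``copies.'' In $W_3$ the chains in $A\cup B$ are safe --- each exhibits a single, monotonically growing ledger, merely delayed to one observer --- so at least $2f\ge s$ chains are safe, and \emph{all} $k$ chains are live, since every chain (including those in $C$) keeps including submitted transactions and its ledger reaches both observers after the network stabilizes. By indistinguishability $c_1$ includes $\mathsf{tx}$ by $T_1$ and $c_2$ includes $\mathsf{tx}'$ by $T_2$; since $\mathsf{tx}$ and $\mathsf{tx}'$ conflict, the ledgers of $c_1$ and $c_2$ are irreconcilable, i.e., the overlay is not safe, contradicting (a).

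For the regime $l\le k/2$ (and arbitrary $s$) the same template applies and no equivocation is needed: partition into $A$ with $|A|=l$ and $B$ with $|B|=k-l\ge l$; in $W_1$ let $B$ crash and $A$ be live with all chains safe (so $c_1$ confirms $\mathsf{tx}$ by $T_1$), in $W_2$ let $A$ crash and $B$ be live with all chains safe (so $c_2$ confirms $\mathsf{tx}'$ by $T_2$), and in $W_3$ schedule messages so that the execution of $A$ with $c_1$ matches $W_1$ up to $T_1$ and that of $B$ with $c_2$ matches $W_2$ up to $T_2$, with the cross-messages withheld until the network stabilizes at time $\max(T_1,T_2)+1$. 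Now every chain shows a single ledger, so all $k$ chains are safe (hence at least $s$ for any $s$) and all are live, yet $c_1$ and $c_2$ end up with conflicting ledgers, again contradicting (a).

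The step I expect to be the crux is the construction and verification of $W_3$: one must simultaneously check that it has at least $s$ safe chains, that \emph{all} chains are live, and that the two ``half-brain'' sub-executions inherited from $W_1$ and $W_2$ compose without any honest-to-honest message exposing the inconsistency before the network stabilizes --- which works out exactly because the set of chains forced to behave inconsistently is the overlap $C$ of size $2l-k$, and $2l-k \le k-s$ is precisely the non-safety budget permitted by $s\le 2(k-l)$. Some care is also needed to pin down the fault model for underlay chains (equivocation versus crash) and the notion of overlay safety for external observers so that $W_1,W_2,W_3$ are well-defined and $T_1,T_2$ are finite, allowing the stabilization time of $W_3$ to be chosen after both.
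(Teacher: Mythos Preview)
Your proposal is correct and is essentially the same argument as the paper's. The paper first proves a general converse (Theorem~\ref{lem:converse-general-psync}) via exactly the three-world indistinguishability argument you describe --- with $\ind(l^1)=A\cup C$, $\ind(l^2)=B\cup C$, and the equivocating set $Q=\ind(l^1)\cap\ind(l^2)=C$ of size $2l-k$ --- and then specializes to the permutation-invariant statement by choosing $l^1=1^{l}0^{k-l}$, $l^2=0^{k-l}1^{l}$, $s^3=1^{k-l}0^{2l-k}1^{k-l}$, which is precisely your partition $A,B,C$; the $l\le k/2$ case likewise matches, with the disjoint live sets eliminating the need for any equivocation.
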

Theorem~\ref{thm:impos} shows the optimality of the result in Theorem~\ref{thm:informal} in a strong sense:  even if we allow the safety or liveness of the overlay blockchain to depend on \emph{both} safety and liveness of the underlay chains (Fig.~\ref{fig:achieve}), the security guarantee of the overlay blockchain cannot be improved.
In other words, under partial synchrony, liveness of the underlay chains have no effect on the safety of the overlay blockchain.
Theorems~\ref{thm:informal} and~\ref{thm:impos} are proven in Sections~\ref{sec:circuit-symmetric-psync} and~\ref{sec:converse-symmetric-psync} respectively.

We also characterize the security properties achievable in the \emph{synchronous} network.
\begin{theorem}[Informal, Theorems~\ref{lem:circuit-symmetric-sync} and~\ref{lem:converse-symmetric-sync}]
\label{thm:informal-sync}
Consider the synchronous network.
For any integers $k\geq1$, $l$, $s$ and $b$, one can construct an overlay blockchain as described below if and only if $\flr{k/2}+1\leq l\leq k$, $s \geq 2(k-l)+1$, and $b \geq k-l+1$:

\noindent
a. If $s$ underlay blockchains are safe, \emph{or} $b$ underlay blockchains are both safe and live, the overlay blockchain is safe.

\noindent
b. If $l$ underlay blockchains are live, the overlay blockchain is live.
\end{theorem}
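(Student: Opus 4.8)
The plan is to prove the two directions separately: a construction realizing requirements (a)--(b) for every admissible $(k,s,l,b)$, and an impossibility argument ruling out the rest. Writing $f := k-l$, the admissible region is $0 \le f \le \flr{(k-1)/2}$ (equivalently $\flr{k/2}+1 \le l \le k$), $s \ge 2f+1$, and $b \ge f+1$; note $f \le (k-1)/2$ forces $k-f \ge f+1$, which will matter below.

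\textbf{Construction.} First I would reduce to extremal parameters by monotonicity: a protocol satisfying (a) with safety threshold $s$ also satisfies it with any $s' \ge s$, and likewise for $b$; a protocol satisfying (b) with liveness threshold $l$ also satisfies it for any $l' \ge l$; and increasing $s$ or $b$ keeps $(k,s,l,b)$ in the region for fixed $l$. Hence it suffices to build, for each admissible $f$, one overlay $\Pi_f$ on $k$ chains that is safe if $2f+1$ chains are safe or $f+1$ chains are safe-and-live, and live if $k-f$ chains are live. The ``$2f+1$ safe, $k-f$ live'' behavior is exactly Theorem~\ref{thm:informal} with $(s,l)=(2f+1,\,k-f)$, which is admissible since $2f+1 = 2(k-(k-f))+1$ and $k-f \ge \flr{k/2}+1$; call the resulting partially synchronous overlay $\Pi_f^{\circ}$. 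To graft on the ``$f+1$ safe-and-live $\Rightarrow$ safe'' disjunct I would introduce a synchrony-specific composition operation modeled on interchain timestamping with timeouts (\cite{bitcoin-timestamp,tas2023interchain}): it reads a chain $C_1$ but admits a prefix only once that prefix has been timestamped onto a chain $C_2$ within a fixed deadline, so the result is safe whenever $C_1$ is safe \emph{or} $C_2$ is safe-and-live, and live whenever both $C_1$ and $C_2$ are live. I would then serially compose $\Pi_f^{\circ}$ with a gadget $B_f$ assembled --- by serial and \lvl compositions, inducting on $k$ as in Section~\ref{sec:circuit-symmetric-psync} --- from such timeout-compositions over the $(f+1)$-subsets of underlay chains, so that $B_f$ is safe whenever some $(f+1)$-subset is entirely safe-and-live and live whenever $k-f$ chains are live. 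Here the inequality $k-f \ge f+1$ is crucial: whenever $k-f$ chains are live, some $(f+1)$-subset is entirely live, which is what lets $B_f$ (hence the serial composite) keep its liveness threshold at $k-f$ rather than something larger.

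\textbf{Impossibility.} For the converse I would treat the three failure modes by partition-and-hybrid arguments in the style of the proof of Theorem~\ref{lem:converse-symmetric-psync} (Section~\ref{sec:converse-symmetric-psync}). If $l \le \flr{k/2}$: take two disjoint size-$l$ chain sets, make one live in one execution and the other live in a second, use (b) to force conflicting finalized transactions, and splice into an execution in which every underlay chain is safe --- the adversary only reschedules delivery of the overlay's own writes within the synchrony bound --- so that (a) is violated. If $s \le 2f$: this is exactly the obstruction of Theorem~\ref{thm:impos} --- partition into $A,B$ of size $f$ and a nonempty core of size $2l-k$, stall (while keeping safe) $A$ in one execution and $B$ in the other so that (b) still forces liveness, and splice into an execution where all chains are safe (so (a) applies) but the two observers inherit the $A$- and $B$-executions respectively; keeping all chains safe does not help the adversary, so the partially synchronous argument transfers verbatim. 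If $b \le f$, equivalently $b+l \le k$: take disjoint sets $B_0$ ($|B_0| = b$) and $L_0$ ($|L_0| = l$); in execution one all of $B_0 \cup L_0$ is honest and (b) (via $L_0$) forces the overlay to finalize $\mathsf{tx}_1$, while in execution two $B_0$ stays honest-and-safe-and-live but $L_0$ together with the remaining chains are corrupted to be live-but-unsafe and driven toward a conflicting $\mathsf{tx}_2$ as seen by a second observer; (a) (via $B_0$) forces safety and (b) (via the $\ge l$ live chains) forces liveness, and a quorum-intersection argument on the witnesses $B_0$ and $L_0$ shows the two observers' ledgers cannot both be prefixes of a common chain, contradicting (a).

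\textbf{Main obstacle.} I expect the hard part to be the construction of the synchrony gadget $B_f$ and the proof that wiring it into the circuit adds the ``$f+1$ safe-and-live $\Rightarrow$ safe'' guarantee without raising the overall liveness threshold above $k-f$: this is the only place the synchronous-network assumption is genuinely used, and making the timeout-composition interact cleanly with the partially synchronous subcircuit $\Pi_f^{\circ}$ --- in particular ruling out spurious timeouts that would hurt liveness --- is the delicate step. The $b \le f$ case of the converse is the second most delicate point, since (unlike the $s$-bound) it is new to the synchronous setting and the contradiction comes from quorum intersection between the safe-and-live and the liveness witnesses rather than a naive counting argument.
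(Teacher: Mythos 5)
Your overall architecture for the achievability direction — serially compose a partial-synchrony circuit $\Pi_f^{\circ}$ (safe if $2f{+}1$ safe, live if $k{-}f$ live, via Theorem~\ref{thm:informal}) with a synchrony gadget $B_f$ (safe if some $(f{+}1)$-subset is all safe-and-live, live if $k{-}f$ live), so that the serial composition's OR of safety and AND of liveness lines up with clauses (a)--(b) — is a clean decomposition, and it is genuinely different from the paper's route, which proves a general quorum-system theorem (Theorem~\ref{lem:circuit-general-sync}) by induction on the number of liveness quorums, threading the synchrony primitive through the $\lvl$ branches, and then specializes to the permutation-invariant case. If your $B_f$ existed with the stated properties, your plan would work. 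The gap is that the synchrony primitive you propose cannot yield such a $B_f$.

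Your timeout gadget $T(C_1,C_2)$ --- safe when $C_1$ is safe or $C_2$ is safe-and-live, live when both $C_1$ and $C_2$ are live --- is a sound gadget (it is essentially Babylon-style checkpointing), but its liveness condition is an AND over its inputs. The paper's synchrony primitive, the $\lvs$ composition of Section~\ref{sec:lvs-composition}, has the opposite tradeoff: safe iff \emph{both} constituents are safe-and-live, live iff \emph{either} constituent is live. That asymmetry is exactly what lets $B_f$ keep its liveness threshold at $k{-}f$: iterating $\lvs$ over an $(f{+}1)$-subset $T$ gives a protocol that is safe iff all of $T$ is safe-and-live and live iff any one chain in $T$ is live, and serially composing these over all $(f{+}1)$-subsets gives the desired $B_f$ (live precisely when at most $f$ chains are down). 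With your gadget the AND-liveness propagates; for $k=3$, $f=1$, any serial combination over the three pairs is live only when all three chains are live, and a $3$-$\lvl$ of the three pair-gadgets needs two of them live, which already fails when only two of the three underlays are live. So your construction does not reach the required $(k,2f{+}1,k{-}f,f{+}1)$ point, and the paper's $\lvs$ cannot be replaced by your gadget.

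On the converse, your case-splitting into the three failure modes is fine, but the claim that for $s \le 2(k{-}l)$ ``the partially synchronous argument transfers verbatim'' is not correct, and your recollection of that argument's World 3 (``all chains safe'') is off even for the partial-synchrony version: in Theorem~\ref{lem:converse-general-psync} the chains in $\ind(l^1)\cap\ind(l^2)$ are explicitly \emph{not} safe, since an unsafe chain is what allows the adversary to expose conflicting execution traces to the two clients. Under synchrony the adversary is strictly weaker (no pre-GST partition), so the impossibility is harder, not easier, and the paper's Theorem~\ref{lem:converse-general-sync} has to also make the chains in $\ind(l^1)$ fail the safe-and-live condition so as not to trigger the $b$ disjunct of (a). Your $b\le k{-}l$ sketch has the right flavor (exhibit a world where $b$ chains are safe-and-live and $l$ chains are live yet two observers disagree), but it needs the same three-world, two-client indistinguishability scaffolding to be a proof rather than a plan.
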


Theorem~\ref{thm:informal-sync} shows that under synchrony, unlike partial synchrony, the overlay blockchain has better safety guarantees when the underlay chains are \emph{both} safe and live.
On the other hand, Theorem~\ref{thm:informal-sync} implies that if we require the safety (liveness) of the overlay blockchain to depend only on the number of safe (live) underlay chains (\ie, restrict $b$ to be $0$), we cannot achieve any better resilience under synchrony compared to partial synchrony.
Security in the synchronous network is discussed further in Sections~\ref{sec:synchrony} and~\ref{sec:converse-sync}.

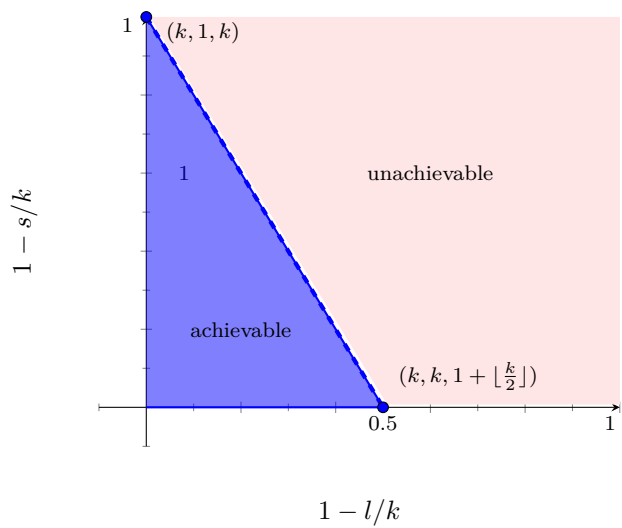
\begin{figure}
\centering
\begin{minipage}{0.45\textwidth}
\begin{center}
\begin{tikzpicture}
  \begin{axis}[ymin=-0.5,ymax=5,xmax=5,xmin=-0.5,xticklabel=\empty,yticklabel=\empty,
               minor tick num=1,axis lines = middle,xlabel=$1-l/k$,ylabel=$1-s/k$,x label style={at={(axis description cs:0.5,-0.1)},anchor=north},
    y label style={at={(axis description cs:-0.1,.5)},rotate=90,anchor=south}]
\addplot [-,>={Latex[round]},domain=0:2.5,samples=2,ultra thick,color=blue,dashed] {5-2*x};
\addplot [only marks,mark=*,fill=blue] coordinates { (2.5,0) };
\addplot [only marks,mark=*,fill=blue] coordinates { (0,5) };
\node[font=\fontsize{8}{8}\selectfont] at (axis cs:4.9,-.2){1}; 
\node[font=\fontsize{8}{8}\selectfont] at (axis cs:2.5,-.2){0.5}; 
\node[font=\fontsize{8}{8}\selectfont] at (axis cs:-.2,4.9){1}; 
\node[font=\fontsize{8}{8}\selectfont] at (axis cs:.4,3){1};
\addplot [thick,color=mylightred,fill=mylightred, 
                    fill opacity=1]coordinates {
            (0.05,5)
            (2.55,0.05)
            (5,0.05)
            (5,5)};
\addplot [thick,color=blue,fill=blue, 
                    fill opacity=0.5]coordinates {
            (0,5)
            (2.5,0)
            (0,0)};
\node[font=\fontsize{8}{8}\selectfont] at (axis cs:3,3){unachievable}; 
\node[font=\fontsize{8}{8}\selectfont] at (axis cs:1,1){achievable}; 
\node[font=\fontsize{8}{8}\selectfont] at (axis cs:3.4,.4){$(k,k,1+\lfloor\frac{k}{2}\rfloor)$};
\node[font=\fontsize{8}{8}\selectfont] at (axis cs:.6,4.8){$(k,1,k)$};
  \end{axis}

\end{tikzpicture}
\end{center}
\end{minipage}
\caption{Region of safety-liveness guarantee. The integer grids in the blue area consists of all points which are achievable, while the integer points in the red area are not achievable under partial synchrony. We highlight the two extreme achievable tuples $(k,1,k)$ and $(k,k,1+\lfloor \frac{k}{2} \rfloor)$. }\label{fig:achieve}
\end{figure}

\subsection{Construction via Blockchain Circuits}

We now give insight into our methods using the example of the $(k,s,l)$ tuples under partial synchrony. 
For $k=2$, the only achievable tuple in Theorem \ref{thm:informal} is $(2,1,2)$, which can be achieved by timestamping.
For $k=3$, we have $(3,1,3)$ and $(3,3,2)$ as achievable tuples.
$(3,1,3)$ can be achieved by sequential interchain timestamping across $3$ chains. 
This is the strongly safety favoring overlay blockchain (extremal of the tradeoff in Figure \ref{fig:achieve}). 
$(3,3,2)$ represents a liveness-favoring overlay blockchain: it is safe if {\em all} $3$ underlay blockchains are safe, and is live if at least $2$ of the $3$ underlay blockchains are live. 
No existing construction is known to achieve this operating point.
Our solution to achieve all tuples in Theorem \ref{thm:informal} consists of two steps and described in Sections~\ref{sec:protocol-primitives} and~\ref{sec:partial-synchrony}:

\noindent
1. We provide a construction that achieves $(3,3,2)$ by drawing an analogy to Omission-Fault Tolerant (OFT) protocols, where validators only commit omission faults (analogous to loss of liveness of a safe underlay blockchain) but no Byzantine faults.
    
\noindent    
2. We show that by repeatedly composing the $(2,1,2)$ and $(3,3,2)$ solutions, we can build overlay blockchains that achieve any tuple in Theorem \ref{thm:informal}. 

As an inspiration to our approach, we can draw an analogy to switching circuit design in Claude E. Shannon's masters' thesis~\cite{shannon1938symbolic} (Table~\ref{tab:comp}). 
In this spectacular masters' thesis, Shannon used serial and parallel composition of switches to create an OR and an AND gate respectively, and then use these gates as building blocks to create more complex circuits which can be designed using Boolean algebra. 
Drawing the analogy, the timestamping solution to $(2,1,2)$ can be viewed as a {\em serial composition} of two blockchains, and the OFT solution to $(3,3,2)$, called the \lvl composition due to the use of three blockchains, can be viewed as a {\em parallel composition} of three blockchains for partial synchrony (Curiously, unlike switching circuits, no parallel composition of $2$ blockchains can exist \emph{under partial synchrony}, as ruled out by Theorem~\ref{thm:impos}. See Section~\ref{sec:view} for more discussion.)

Our serial and parallel compositions require the composed underlay blockchains to satisfy certain properties (\eg, hosting smart contracts) outlined in Section~\ref{sec:protocol-primitives}.
These properties are satisfied by blockchains that support general-purpose smart contracts (\eg, EVM in Ethereum) and run on PBFT-style consensus protocols~\cite{pbft} such as Tendermint~\cite{tendermint}.
In this context, our circuit compositions can be readily implemented by Cosmos blockchains~\cite{cosmos} that support CosmWasm smart contracts and run Tendermint as their consensus protocols.

\begin{table}
\centering
\resizebox{\linewidth}{!}{
   \begin{tabular}{c|c|c}
          & Switching Circuits & Blockchain Circuits  \\
          \hline
          \hline
          Goal & Computation & Security \\
          \hline
          Basic components & switches & blockchains\\
       \hline
       Component state  & $X \in \{0,1\}$ & $X = (S,L) \in \{0,1\}^2$\\
      & $X=1$ iff switch is open  & $S=1$ iff chain is safe\\
      
        &  & $L=1$ iff chain is live\\
       \hline
       Serial composition & $X_1 + X_2 = X_1$ OR $X_2$ & $X_1 + X_2 = (S_1+S_2, L_1L_2)$\\
       \hline
       Parallel composition & $X_1X_2 = X_1$ AND $X_2$ & $X_1X_2X_3 = (S_1S_2S_3, L_1L_2+L_2L_3+L_3L_1)$\\
       \hline
       Syntheis  & Boolean formulas & generalized quorum systems\\
       \hline
       Completeness & All truth table assignments & All achievable compositions
    \end{tabular}}
    \vspace{5pt}
     \caption{Comparison between switching circuits and blockchain circuits. We note that the parallel composition for blockchain circuits is more complicated than $X_1X_2=(S_1S_2,L_1+L_2)$, which would have been the natural analogue of the parallel composition. However, such a composition is impossible to achieve (Section \ref{sec:view}).}\label{tab:comp}
\end{table}

\subsection{Outline}
Our paper is organized as follows. 
Related works are summarized in Section~\ref{sec:related}. 
We present preliminary definitions in Section~\ref{sec:prelim}.
We describe the serial composition for achieving the tuple $(2,1,2)$ and the \lvl composition for achieving $(3,3,2)$
in Section~\ref{sec:protocol-primitives}.
Using them as \emph{gates}, we build circuit compositions achieving all possible security properties under partial synchrony and synchrony in Sections~\ref{sec:partial-synchrony} and~\ref{sec:synchrony}.
The converse results for unachievable properties under partial synchrony and synchrony are in Sections~\ref{sec:converse-psync} and~\ref{sec:converse-sync}.
Section~\ref{sec:efficiency} investigates scalability of large circuits based on serial and \lvl compositions.

\section{Related Works}
\label{sec:related}

\smallskip
\noindent
\textbf{Timestamping}. A timestamping protocol allows a \emph{consumer} chain to obtain timestamps for its blocks by checkpointing \cite{karakostas-checkpointing,gipp15a,bitcoin-timestamp,btc-pos,tas2023interchain} them on a \emph{provider} chain; so that in case there is a fork in the consumer chain, the fork can be resolved by choosing the one with the earlier timestamp (other uses of timestamping include reducing the latency of Nakamoto consensus~\cite{ledger-combiners}).
The provider chain is thus used as a timestamping server that provides security to the consumer chain.
Examples of timestamping protocols include Polygon \cite{polygon} checkpointing onto Ethereum, Stacks \cite{stacks} and Pikachu \cite{pikachu} checkpointing to PoW Ethereum and Babylon \cite{btc-pos} checkpointing to Bitcoin. %
Authors of \cite{tas2023interchain} design an interchain timestamping protocol to achieve \emph{mesh security} \cite{mesh-sec-github,sunny-mesh}, in which Cosmos zones provide and consume security to and from each other in a mesh architecture.
The protocol strongly favors safety over liveness and cannot achieve all possible security properties.

\smallskip
\noindent
\textbf{Trustboost.} Trustboost \cite{trustboost} proposes a family of protocols where multiple constituent blockchains
interact to produce a combined ledger with boosted trust.
Each blockchain runs a smart contract that emulates a validator of an overlay consensus protocol, Information Theoretic HotStuff (IT-HS) \cite{iths}, that outputs the ledger with boosted trust.
As long as over two-thirds of the blockchains are secure (safe and live), Trustboost satisfies security; thus its security guarantees are implied by our circuit construction.
Trustboost is implemented using Cosmos zones as the underlay blockchains and the inter-blockchain communication protocol (IBC) as the method of cross-chain communication; so that the emulated validators can exchange messages.
In this paper, we separate the safety/liveness conditions of the component blockchains for achieving safety/liveness guarantees of the interchain circuit construction.
Trustboost does not make any claims when the number of chains $k\leq 3$ or when a chain loses just one of its security properties (either safety or liveness), while our blockchain circuit approach covers all possible choices of achievable $(k,s,l)$ tuples, especially the two basic cases $(2,1,2)$ and $(3,3,2)$.
Trustboost also relies on external bots/scripts to notify the constituent blockchains about the overlay protocol's timeouts, whereas our approach does not use any external parties beyond the underlay blockchains.

As one can trade-off the safety and liveness resilience of HotStuff by tuning its quorum size, a natural question is if a similar trade-off for $(k,s,l)$ points can be achieved for Trustboost by tuning the quorum size of its overlay protocol (IT-HS).
However, to achieve these points, the overlay protocol must ensure liveness as long as $l$ blockchains are live, without requiring their safety.
This necessitates changing the overlay protocol to prioritize liveness in the case of safety violations\footnote{For instance, HotStuff must relax the liveness rule of the $\textsc{SafeNode}$ function to return true as long as the view number of the $\mathsf{prepareQC}$ is larger than or \emph{equal to} the locked view, which is different from the current specification in \cite{yin2018hotstuff}.}.
Then, a new security analysis is needed for the modified overlay protocol so that Trustboost can continue to leverage its security.
In contrast, the \lvl composition of our circuits builds on a liveness-favoring protocol as is (\cf Section~\ref{sec:view}).

\smallskip
\noindent
\textbf{Cross-staking.} 
Cross-staking was proposed as a technique to enhance the security of the Cosmos blockchains (zones) in the context of mesh security.
A consumer zone allows validators of a provider zone to stake their tokens on the consumer zone via IBC and validate the consumer chain.
However, this requires validators to run full nodes of multiple blockchains, thus resulting in a large overhead on that of interchain protocols and our blockchain circuit approach, where the validators of the constituent blockchains only run light clients.

\smallskip
\noindent
\textbf{Thunderella.}
Thunderella~\cite{thunderella} is a SMR consensus protocol, composed of an asynchronous, quorum-based protocol and a synchronous, longest chain based protocol.
The synchronous protocol ensures that Thunderella satisfies security, albeit with latency $O(\Delta)$, at all times with $1/2$ resilience under the $\Delta$-synchronous sleepy network model~\cite{sleepy}, whereas the asynchronous path helps achieve fast progress with latency dependent only on the actual network delay $\delta$, if over $3/4$ of the validators are honest and awake.
Thus, its goal is to support different latency regimes under different assumptions by having the validators execute two protocols, rather than to improve security by combining different chains in a black-box manner (\cf interchain consensus protocols, Section~\ref{sec:prelim}).

\smallskip
\noindent
\textbf{Robust Combines.}
Our approach of combining existing underlay chains to design a more secure overlay protocol is conceptually related to cryptographic combiners~\cite{Her05,HKN+05}, which combine many instances of a cryptographic primitive to obtain a more secure candidate for the same primitive.
The output satisfies correctness and security, if these properties are guaranteed for at least one of the original candidates.
In contrast, our circuit composition decouples safety and liveness and analyzes the dependence of the overlay protocols' safety and liveness \emph{separately} on the same properties of the underlay chains.

\section{Preliminaries}
\label{sec:prelim}
In this section, we introduce several preliminary definitions. 
We use $[k]$ to represent the set $\{1,2,\dots,k\}$. 
We denote the elements within a sequence $s$ of $k$ non-negative integers by the indices $i \in [k]$: $s = (s_1, \ldots, s_k)$.
For two such sequences, we write $s \leq s'$ if for all $i \in [k]$, $s_i \leq s'_i$.
Similarly, $s < s'$ if $s \leq s'$ and there exists an index $i^* \in [k]$ such that $s_{i^*} < s'_{i^*}$.
We denote a permutation function on the sequences $s$ by $\sigma$.
There are two types of participants in our model: validators and clients.

\smallskip
\noindent 
\textbf{Validators and Clients.} 
Validators take as input transactions from the environment $\mathcal{Z}$ and execute a blockchain protocol (also known as total order broadcast).
Their goal is to ensure that the clients output a single sequence of transactions.
Validators output consensus messages (\eg blocks, votes), and upon query, send these messages to the clients.
After receiving consensus messages from sufficiently many validators, each client individually outputs a sequence of \emph{finalized} transactions, called the ledger and denoted by $L$.
Clients can be thought of as external observers of the protocol, which can go online or offline at will.

\smallskip
\noindent 
\textbf{Blocks and Chains.} Transactions are batched into blocks and the blockchain protocol orders these blocks instead of ordering transactions individually.
Each block $B_k$ at height $k$ has the following structure: $B_k=(x_k,h_{k}),$
where $x_k$ is the transaction data and $h_{k}=H(B_{k-1})$ is a parent pointer (\eg, a hash) to another block. 
There is a genesis block $B_0 = (\bot, \bot)$ that is common knowledge.
We say that $B$ extends $B'$, denoted by $B' \preceq B$, if $B'$ is the same as $B$, or can be reached from $B$ by following the parent pointers.
Each block that extends the genesis block defines a unique chain.
Two blocks $B$ and $B'$ (or the chains they define) are \emph{consistent} if either $B \preceq B'$ or $B' \preceq B$.
Consistency is a transitive relation.

A client $\client$ \emph{finalizes} a block $B$ at some time $t$ if it outputs $B$ and the chain of blocks preceding $B$ as its ledger at time $t$, \ie, if $\client$'s latest chain contains $B$ for the first time at time $t$. 
The ledger in $\client$'s view is determined by the order of the transactions in this chain.

A blockchain protocol is said to \emph{proceed in epochs of fixed duration} if whenever the protocol is live, a new block is confirmed in the view of any client at a rate of at most one block every $T$ seconds for some constant $T$, \ie, the protocol has bounded chain growth rate.
Such examples include Tendermint \cite{tendermint} and Streamlet \cite{streamlet}, where a new block is proposed by an epoch leader every $2\Delta$ time, where $\Delta$ is a protocol parameter.
These protocols enable the clients to track time by inspecting the timestamps on the blocks.
PBFT-style protocols such as PBFT~\cite{pbft} and HotStuff~\cite{yin2018hotstuff} can also be made to proceed in epochs of fixed duration (despite not being so) by artificially introducing delays before the proposal are broadcast.

\smallskip
\noindent
\textbf{Adversary.} We consider a computationally-bounded adversary $\mathcal{A}$ that can corrupt a fraction of the validators called \emph{adversarial}. 
The remaining ones that follow the protocol are called \emph{honest}.
Adversary controls message delivery subject to the network delay.

\smallskip
\noindent
\textbf{Networking.} 
In a \emph{partially synchronous} network~\cite{DLS88}, the adversary can delay messages arbitrarily until a global stablizaton time (GST) chosen by the adversary. 
After GST, the network becomes synchronous and the adversary must deliver messages sent by an honest validators to its recipients within $\Delta$ time, where $\Delta>0$ is a known delay bound\footnote{We assume synchronized clocks as bounded clock offset can be captured by the delay $\Delta$, and clocks can be synchronized using the process in~\cite{DLS88}.}.
The network is called \emph{synchronous} if GST is known and equal to zero.

\smallskip
\noindent 
\textbf{Security.} Let $L_t^\client$ denote the ledger output by a client $\client$ at time $t$.
We say that a protocol is \emph{safe} if for any times $t,t'$ and clients $\client,\client'$, $L_t^\client$ and $L_{t'}^{\client'}$ are consistent, and 
for any client $\client$, $L_t^\client \preceq L_{t'}^{\client}$ for all $t'\geq t$. 
We say that a protocol is \emph{live} if there is a time $\Tconf>0$ such that for any transaction $\tx$ input to all honest validator at some time $t$, it holds that $\tx\in L^\client_{t'}$ for any client $\client$ and times $t'\geq \max(\GST,t)+\Tconf$.
Note that a protocol satisfying liveness also ensures that clients keep outputting valid transactions; because clients refusing to output invalid transactions as part of their ledgers will not output \emph{anything} after the first invalid transaction.
When we talk about the ledger of a specific protocol $\PI_A$ output by a client $\client$ at time $t$, we will use the notation $L^{\client}_{A, t}$. 

\smallskip
\noindent
\textbf{Certificates.}
We adopt the definition of certificates from~\cite{lewispyeroughgardenccs}.
\begin{definition}[Definition 3.2 of~\cite{lewispyeroughgardenccs}]
We say that a blockchain protocol with confirmation rule $C(.)$ \emph{generates certificates} if the following holds with probability $>1-\epsilon$ when the protocol is run with security parameter $\epsilon$, under the conditions for which safety is satisfied:
There do not exist conflicting ledgers $L_1$ and $L_2$, a time $t$ and sets of consensus messages $M_1$ and $M_2$ broadcast by time $t$, such that $L_i$ is a prefix of the confirmed ledger determined by $C(.)$ on $M_i$, \ie, $C(M_i)$ for $i \in \{1,2\}$.
\end{definition}
An example of a safety condition is over $2/3$ of the validators being honest (\eg, for PBFT~\cite{pbft}), whereas a confirmation rule example, applied to the consensus messages, is confirming a block if there are commit messages for it from over $2/3$ of the validators.
In a certificate-generating protocol, any client $\client$ that finalizes a ledger $L$ can convince any other client to finalize $L$ by showing a subset of the consensus messages.
These messages form a \emph{certificate} for $L$.

All protocols that are safe under partial synchrony generate certificates~\cite{lewispyeroughgardenccs}.
For example, in PBFT-style protocols~\cite{pbft}, Tendermint \cite{tendermint}, HotStuff \cite{yin2018hotstuff} and Streamlet \cite{streamlet}, clients finalize a block upon observing a quorum of commit messages from over $2/3$ of the validator set.
This quorum of commit messages on the block acts as a certificate for the block.
When these protocols are safe, there cannot be two quorums, \ie, certificates, attesting to the finality of conflicting blocks.
In contrast, Nakamoto consensus \cite{bitcoin} does not generate certificates.
As clients confirm a chain \emph{only if} they do not receive a longer chain, no set of messages by themselves suffice to convince clients of the confirmation of a blockchain, as there might always exist a longer but hidden chain of blocks.

\smallskip
\noindent
\textbf{Interchain Consensus Protocols.}
An interchain consensus protocol (interchain protocol for short) is a blockchain protocol, called the \emph{overlay} protocol, executed on top of existing blockchain protocols, called the underlay chains.
Its participants are the clients and validators of the constituent underlay chains $\PI_i$, $i \in [n]$.
All clients and validators observe all underlay chains, but each validator is responsible for participating in the execution of one of these blockchain protocols, which ensures scalability.
Clients and honest validators of each underlay chain run a client of every other chain, and can read from and write to the output ledgers of the other chains.
This restricted communication is captured by the notion of \emph{cross-chain communication (CCC)}~\cite{ccc,trustboost}: each underlay chain $\PI_i$, $i \in [n]$, only exposes read and write functionalities to its finalized ledger.
Clients and validators of every other chain, $\PI_j$, $j \neq i$, verify the finality of $\PI_i$'s ledger via certificates (\eg, by verifying a quorum of signatures on the finalized blocks in PBFT-style protocols~\cite{pbft}), whereas the internal mechanisms and the validator set of $\PI_i$ remain hidden from the interchain protocol, except as used by the CCC to validate certificates.
They write to $\PI_i$ by broadcasting their transactions to all $\PI_i$ validators as input in the presence of a public-key infrastructure, or by using trustless relays that can produce a proof of transmission by collecting replies from sufficiently many $\PI_i$ validators.
Clients of the interchain protocols use only their views of the finalized ledgers of the underlay chains to determine the overlay blockchain's ledger.

The CCC functionality can be implemented by a trusted controller that relays data across chains, or by committees subsampled from among the validators.
A prominent CCC example is the Inter-Blockchain Communication protocol (IBC) of Cosmos~\cite{ibc}, where the messages are transmitted by relayers~\cite{ibc-relayer} akin to controllers.
However, IBC does not require the relayers to be trusted for safety, as it allows the receiver chain's validators to verify messages by inspecting if they were included in the finalized sender chain blocks.

\section{Protocol Primitives}
\label{sec:protocol-primitives}

We build the overlay protocols by composing simpler protocols in two different ways: serial composition and \lvl composition.
In this section, we describe these compositions and their implications for security.

\subsection{Serial Composition}
\label{sec:serial}

\begin{algorithm}[h]
    \captionsetup{font=small} 
    \caption{The algorithm used by a bootstrapping client $\client$ to output the ledger $L_s$ of the serial composition $\PI_s$ instantiated with two constituent blockchains $\PI_A$ and $\PI_B$ at some time $t$. 
    The algorithm takes $L^\client_{B, t}$, the finalized $\PI_B$ ledger output by $\client$ at time $t$, as its input and outputs the ledger $L_s$. 
    The function $\textsc{GetSnapshots}$ returns the snapshots of the $\PI_A$ ledger included in $L^\client_{B, t}$ along with their certificates. 
    The function $\textsc{isCertified}$ returns true if the input ledger is accompanied by a valid certificate.}
    \label{alg.serial}
    \begin{algorithmic}[1]\small
    \Function{\sc OutputChain}{$L^\client_{B, t}$}
        \Let{\snp_1, \ldots, \snp_m}{\textsc{GetSnapshots}(L^\client_{B, t})}
        \Let{L_s}{\bot}
        \For{$i=1,\ldots,m$}
            \If{$\textsc{isCertified}(\snp_i)$}
                \label{line:sanitization}
                \Let{L_s}{\textsc{Clean}(L_s,\snp_i)}
            \EndIf
        \EndFor
        \State\Return $L_s$ 
    \EndFunction
    \end{algorithmic}
\end{algorithm}

\begin{figure}[!htb]
    \hspace*{-2cm}
    \includegraphics[width=1.3\linewidth]{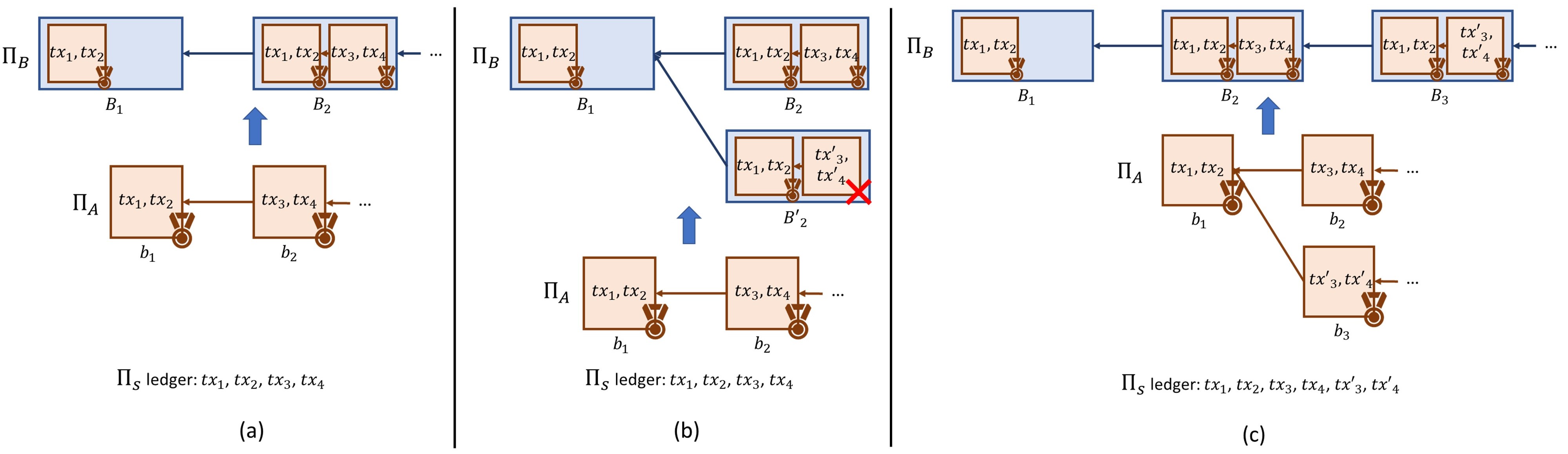}
    \caption{Serial composition. The $\PI_A$ blocks (brown) are denoted by $b_1, b_2, \ldots$ and the $\PI_B$ blocks (blue) are denoted by $B_1, B_2, \ldots$. Certificates of the $\PI_A$ blocks are denoted by the medals. \textbf{In (a)}, both $\PI_A$ and $\PI_B$ are safe. Thus, every client observes the same $\PI_B$ ledger with certified snapshots $\snp_1 = (\tx_1, \tx_2)$ and $\snp_2 = (\tx_1, \tx_2, \tx_3, \tx_4)$. Upon sanitizing the snapshots, clients obtain $\textsc{Clean}(\snp_1, \snp_2) = (\tx_1, \tx_2, \tx_3, \tx_4)$ as the $\PI_s$ ledger. \textbf{In (b)}, the $\PI_B$ ledger is not safe, and two clients $x$ and $y$ observe conflicting $\PI_B$ ledgers $L^{x}_{B,t_1} $ and $L^{y}_{B,t_2}$ with blocks $B_1, B_2$ and $B_1, B'_2$ respectively. The blocks $B_1$, $B_2$ and $B'_2$ contain the certified snapshots $\snp_1 = (\tx_1, \tx_2)$, $\snp_2 = (\tx_1, \tx_2, \tx_3, \tx_4)$ and $\snp'_2 = (\tx_1, \tx_2)$ respectively. Note that $(\tx'_3, \tx'_4)$ is not part of the certified snapshot $\snp'_2$ as they are not included in a certified $\PI_A$ block. Upon sanitizing the snapshots, clients again obtain consistent $\PI_s$ ledgers $L^{x}_{B,t_1} = \textsc{Clean}(\snp_1, \snp_2) = (\tx_1, \tx_2, \tx_3, \tx_4)$ and $L^{y}_{B,t_2} = \textsc{Clean}(\snp_1, \snp'_2) = (\tx_1, \tx_2)$. \textbf{In (c)}, the $\PI_A$ ledger is not safe, and two clients $x$ and $y$ observe conflicting $\PI_A$ ledgers $L^{x}_{A,t_1} $ and $L^{y}_{A,t_2}$ with blocks $b_1, b_2$ and $b_1, b_3$ respectively. However, both clients observe the same $\PI_B$ ledger with blocks $B_1, B_2$, $B_3$ and their certified snapshots $\snp_1, \snp_2$, $\snp_3$. Hence, upon sanitizing the snapshots, clients obtain the same (consistent) $\PI_s$ ledgers $L^{x}_{s,t_1} = L^{y}_{s, t_2} = \textsc{Clean}(\snp_1, \snp_2, \snp_3) = (\tx_1, \tx_2, \tx_3, \tx_4, \tx'_3, \tx'_4)$.}
\label{fig:serial}
\end{figure}

We describe the safety-favoring serial composition $\PI_{s}$ with two constituent certificate-generating blockchain protocols, $\PI_A$ and $\PI_B$ (Fig.~\ref{fig:serial}; \cf Alg.~\ref{alg.serial}).
The $\PI_A$ validators receive transactions from the environment and other validators, and the clients of $\PI_A$ output a certified $\PI_A$ ledger.
Each $\PI_B$ validator acts as a client of $\PI_A$, and consider the $\PI_A$ ledger in its view, called a \emph{snapshot}, and its certificate, as a \emph{transaction input} to $\PI_B$\footnote{For instance, if $\PI_B$ is a blockchain protocol, the snapshots and their certificates will be included in the blocks by the block proposers (\cf Section~\ref{sec:efficiency} for more efficient implementations).} (Fig.~\ref{fig:serial}a).
At any time step $t$, each client $\client$ of the serial composition (which is a client of both $\PI_A$ and $\PI_B$), online at time $t$, inspects the certified snapshots of the $\PI_A$ ledger within its $\PI_B$ ledger.
Then, $\client$ reads the certified $\PI_A$ snapshots in the order they appear in its $\PI_B$ ledger, copies these snapshots and finally eliminates the duplicate transactions appearing in multiple snapshots by calling a sanitization function.
The sanitization function $\clean(L_A,L_B)$ takes two ledgers $L_A$ and $L_B$, concatenates them, eliminates the duplicate transactions that appear in $L_B$ and keeps their first occurrence in $L_A$ (cf. \cite{ebbandflow} \cite{tas2023interchain}). %
Finally, the client outputs the remaining transactions as \emph{its} $\PI_s$ ledger (its view of the $\PI_s$ ledger at that time).
The serial composition satisfies the following security properties:
\begin{theorem}
\label{thm:serial-security}
Consider the serial composition $\PI_s$ instantiated with the certificate-generating blockchain protocols $\PI_A$ and $\PI_B$.
Then, under partial synchrony,
\begin{enumerate}
    \item $\PI_s$ satisfies safety if at least \emph{one} of $\PI_A$ or $\PI_B$ is safe.
    \item $\PI_s$ satisfies liveness with constant latency after GST if \emph{both} $\PI_A$ and $\PI_B$ are live with constant latency after GST.
    \item $\PI_s$ generates certificates.
    \item $\PI_s$ proceeds in epochs of fixed duration if $\PI_A$ and $\PI_B$ proceed in epochs of fixed duration.
\end{enumerate}
\end{theorem}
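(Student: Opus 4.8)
The plan is to prove each of the four claims in Theorem~\ref{thm:serial-security} separately, since they are largely independent. The core objects are the certified $\PI_A$-snapshots embedded as transactions inside the $\PI_B$ ledger, and the function $\textsc{Clean}$ applied to them in the order they appear in $\PI_B$. Throughout, I will use the fact (stated in the excerpt) that a safe certificate-generating protocol never admits two certificates for inconsistent ledgers, and that certified ledgers of such a protocol form a chain under $\preceq$.

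\textbf{Safety (claim 1).} I would split into two cases. \emph{Case $\PI_B$ safe:} any two clients $\client,\client'$ at times $t,t'$ see $\PI_B$-ledgers $L^\client_{B,t}\preceq L^{\client'}_{B,t'}$ (or vice versa); since $\textsc{OutputChain}$ is a deterministic, prefix-respecting function of the $\PI_B$ ledger (it walks the certified snapshots left to right and $\textsc{Clean}$ only ever appends), the outputs are consistent, and monotonicity in $t$ follows the same way. \emph{Case $\PI_A$ safe (but $\PI_B$ possibly not):} now two clients may see forked $\PI_B$ ledgers, hence different \emph{sequences} of certified snapshots $\snp_1,\dots,\snp_m$ and $\snp'_1,\dots,\snp'_{m'}$. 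The key point is that every $\snp_i$ and every $\snp'_j$ is a \emph{certified} $\PI_A$ ledger, so by safety of $\PI_A$ they are all pairwise consistent, i.e. they form a chain; in particular the set of transactions accumulated is determined by the longest snapshot seen, and the relative order of transactions is the $\PI_A$ order, which is the same for both clients. I would prove a small lemma: if $\snp_1,\dots,\snp_m$ are pairwise-consistent $\PI_A$-ledgers then $\textsc{Clean}(\bot,\snp_1,\dots,\snp_m)$ equals the longest of them (as a sequence), and more generally that the $\textsc{Clean}$-output of a consistent family is prefix-monotone under adding more, longer snapshots. This reduces Case $\PI_A$ to the same prefix argument as Case $\PI_B$. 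The figures (\ref{fig:serial-1}b, \ref{fig:serial-2}) are exactly the two sub-cases and can be cited as intuition.

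\textbf{Liveness (claim 2), certificates (claim 3), epochs (claim 4).} For liveness, suppose both chains are live with parameter $\Tconf$ after GST and a transaction $\tx$ enters an honest validator at time $t$. By the broadcast step it reaches every honest $\PI_A$ validator within $\Delta$, so by liveness of $\PI_A$ it is in every client's certified $\PI_A$ ledger by time $\max(\GST,t)+\Tconf+\Delta$; the $\PI_B$ validators, acting as $\PI_A$-clients, input that certified snapshot into $\PI_B$, and by liveness of $\PI_B$ it is finalized in $\PI_B$ within another $\Tconf$, after which every client's $\textsc{OutputChain}$ includes $\tx$ (it survives $\textsc{Clean}$ because $\textsc{Clean}$ only removes \emph{duplicates}, keeping a first occurrence). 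Folding the $\Delta$ into $\Tconf$ (or stating $2\Tconf$ up to additive $\Delta$ as the paper does) gives the bound. For certificates: a certificate for an $\PI_s$ ledger is the $\PI_B$ certificate for the relevant $\PI_B$ prefix together with the $\PI_A$ certificates for the contained snapshots; whenever $\PI_s$ is safe, the case analysis above shows no adversary can exhibit $\PI_B$- and $\PI_A$-certificates yielding inconsistent $\textsc{OutputChain}$ outputs, which is exactly the certificate property. For claim 4, note a new $\PI_s$ block can appear only when a new certified $\PI_A$ snapshot is finalized in a new $\PI_B$ block, so the $\PI_s$ chain-growth rate is bounded by $\max$ of the two underlay rates (up to the propagation delay).

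\textbf{Main obstacle.} The delicate part is the $\PI_A$-safe/$\PI_B$-unsafe case: I must argue carefully that $\textsc{Clean}$ applied to a forked sequence of certified $\PI_A$-snapshots is insensitive to the forking of $\PI_B$, i.e. the order in which snapshots are presented does not matter as long as they are pairwise consistent and one takes first occurrences. The subtlety is that $\textsc{Clean}$ is order-sensitive in general; the resolution is precisely that consistency of the snapshots forces them into a single $\preceq$-chain, so reordering a consistent family and deduplicating always yields the maximal element with the $\PI_A$-ordering — but this must be proved (by induction on $m$), handling the case where a later snapshot is a \emph{prefix} of an earlier one (as in Fig.~\ref{fig:serial-1}b with $\snp'_2\preceq\snp_2$) so that it contributes nothing new. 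Once that combinatorial lemma about $\textsc{Clean}$ is in hand, all four parts are short.
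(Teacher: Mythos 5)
Your proposal is correct and follows essentially the same route as the paper's proof in Appendix~\ref{sec:proof-serial}: the ``combinatorial lemma'' you identify about $\textsc{Clean}$ on a pairwise-consistent family is exactly the paper's Propositions~\ref{prop:consistency-2} and~\ref{prop:consistency} (for a consistent pair, $\textsc{Clean}(L^x,L^y)$ is just the longer ledger), applied iteratively over the snapshots in the two safety cases, and your liveness, certificate, and epoch arguments match the paper's (the extra $\Delta$ you worry about is absorbed because liveness is measured from the time the transaction reaches an honest validator).
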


Proof of Theorem~\ref{thm:serial-security} is given in
Appendix~\ref{sec:proof-serial}.
Proof of the statements 1 and 2 are illustrated by Fig.~\ref{fig:serial} that covers the cases when $\PI_B$ and $\PI_A$ are not safe, yet $\PI_s$ is safe.
Statements 3 and 4 are needed for further composability of the serial composition with other serial and \lvl compositions (\cf the conditions on Theorems~\ref{thm:serial-security} and~\ref{thm:parallel-combined-security}). 
Appendix~\ref{sec:no-certificate} 
describes an attack against the serial composition when $\PI_A$ is not certificate-generating.

For the serial composition $\PI_s$, we require liveness only for the transactions input to all honest $\PI_A$ validators.
In general, liveness must be guaranteed only for the transactions input to \emph{all} honest validators of the underlay protocols.
If validators have access to a public-key infrastructure that identifies each other, then any transaction input to a single honest validator of an underlay protocol can be broadcast to all validators of all underlay protocols, and thus can be included in the ledgers.

\subsection{\Lvl Composition}
\label{sec:view}

A natural liveness-favoring analogue of the serial composition of two blockchains would be a composition that ensures liveness if either of the two chains is live, and safety if both chains are safe.
However, no interchain protocol can satisfy these guarantees, even under synchrony.
Below, we provide the intuition behind this result (\cf Theorems~\ref{lem:converse-general-sync} and~\ref{lem:converse-symmetric-sync} for details).

Consider two blockchains $\PI_A$ and $\PI_B$ that are not live, but safe.
Here, safety of a protocol (\eg, $\PI_A$) means that different clients' views of the $\PI_A$ ledger are consistent, yet it is possible that the $\PI_A$ ledger output by a client conflicts with a $\PI_B$ ledger output by another client.
The protocol $\PI_A$ emulates the behavior of a live blockchain towards a client $\client_1$, whereas it is stalled in $\client_2$'s view, \ie, $L^{\client_2}_{A,t} = \emptyset$ for all times $t$.
In the meanwhile, $\PI_B$ emulates the behavior of a live blockchain towards a different client $\client_2$, whereas it is stalled in $\client_1$'s view, \ie, $L^{\client_1}_{B, t} = \emptyset$ for all times $t$.
Since the \lvl composition is conjectured to be live when either of the blockchains is live, both $\client_1$ and $\client_2$ output transactions based on their observations of the $\PI_A$ and $\PI_B$ ledgers respectively (as far as $\client_1$ is concerned, $\PI_A$ \emph{looks} live, and as far as $\client_2$ is concerned, $\PI_B$ \emph{looks} live).
However, when these $\PI_A$ and $\PI_B$ ledgers are different and conflicting, this implies a safety violation even though both $\PI_A$ and $\PI_B$ are safe, \ie, $\client_1$ and $\client_2$'s $\PI_A$ ($\PI_B$) ledgers are consistent (as $\emptyset$ is a prefix of every ledger).

Given the example above which shows the impossibility of a composition that is live if either chain is live and safe if both are safe, we relax the properties expected of a liveness-favoring composition in two ways:
(i) the \emph{\lvl} composition of $3$ blockchains ensures liveness if $2$ of the $3$ constituent chains are live, and safety if all chains are safe, under partial synchrony (Theorem~\ref{thm:parallel-combined-security}), whereas (ii) the \emph{\lvs} composition of $2$ blockchains, \emph{under synchrony}, ensures liveness if either of the constituent chains is live, and safety if \emph{both} chains are safe and live (Section~\ref{sec:lvs-composition}, Theorem~\ref{thm:lvs-security}).
Here, we focus on the \lvl composition.

For inspiration towards a minimal \lvl composition with these guarantees, we consider a setting, where the protocol participants are validators rather than blockchains.
We observe that a natural analogue of a blockchain that is not live, but safe, is a validator with \emph{omission faults}.
Since the \lvl composition for blockchains requires the safety of all constituent protocols for safety, its analogue for validators would tolerate only omission faults.
Thus, our \lvl composition is motivated by omission fault tolerant (OFT) consensus protocols \cite{lamport2019part,budhiraja1993primary,oki1988viewstamped}.  
Before presenting the composition, we briefly describe these OFT protocols for validators, which we extend to the blockchain setting.

\subsubsection{The OFT Protocol for Validators.}
\label{sec:oft-protocol}
The OFT protocol is a leader-based blockchain protocol that generates certificates under a partially synchronous network.
It is run by $3$ validators mirroring the most basic \lvl composition. %
It proceeds in epochs of fixed duration $3\Delta$. 
In a nutshell, it works as follows:

Each epoch $v$ has a unique leader that proposes a block at the beginning of the epoch %
, \ie, at time $3\Delta v$.
Upon observing a proposal for epoch $v$, validators broadcast acknowledge messages for the proposed block at time $3\Delta v + \Delta$. 
Upon observing a \emph{certificate} of $2$ unique acknowledge messages from epoch $v$ for the epoch's proposal, validators and clients finalize the proposed block and its prefix chain.
If a validator does not observe a certificate of $2$ acknowledge messages for an epoch $v$ proposal by time $3\Delta v + 2\Delta$, it broadcasts a leader-down message for epoch $v$, where the message contains the block with the highest epoch number among the ones it previously voted for.
Leader-down messages enable the leader of the next epoch to identify the correct block to propose on to preserve safety.
A detailed protocol description is presented in
Appendix~\ref{sec:oft-protocol-detailed}.

\subsubsection{From OFT Protocol to the \Lvl Composition.}
\label{sec:oft-to-parallel-const}

\begin{figure}[!htb]
\centering
    \centering
    \includegraphics[width=\linewidth]{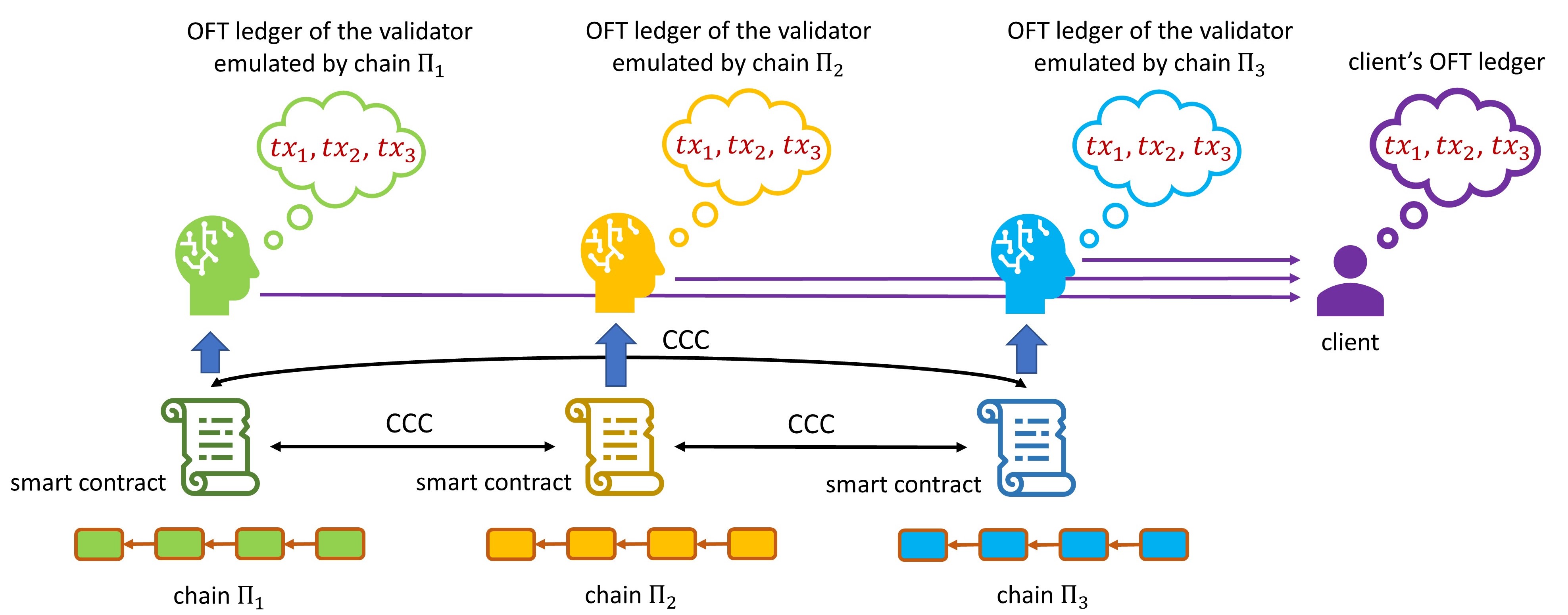}
    \caption{\Lvl composition. An overlay OFT protocol run on top of $3$ underlay blockchains. A smart contract on each of the underlays emulates a validator of the OFT protocol and outputs a finalized OFT ledger. The client reads the underlay chains' ledgers and outputs the OFT ledger finalized by a majority of the emulated validators.}
    \label{fig:lvs}
\end{figure}

We next describe a \lvl composition for $3$ blockchains.
It consists of $3$ underlay blockchain protocols, $\PI_A$, $\PI_B$ and $\PI_C$, run by validators and an overlay protocol, \ie, the OFT protocol, run on top of these chains (Fig.~\ref{fig:lvs}).
Each underlay protocol executes a smart contract that \emph{emulates a validator} of the overlay OFT protocol (\cf
Appendix~\ref{sec:connect} 
for a discussion on validator emulation).
These emulated validators exchange messages via the CCC abstraction.
There is a PKI that identifies on each underlay chain the $2$ other chains emulating a validator (\eg, by means of the public keys of the other chains' validators).

\smallskip
\noindent
\emph{Blockchains.}
The \lvl composition requires the underlay protocols to run general-purpose smart contracts and to proceed in epochs of fixed duration $T$.
This is because the overlay OFT protocol requires each emulated validator to keep track of the time passed since it entered any given epoch.
In general, it is impossible to emulate the validators of any overlay protocol secure under partial synchrony, if the underlay protocol has no means of keeping track of the real time (\cf
Appendix~\ref{sec:async-impossibility}).
We achieve this functionality by using underlay protocols that proceed in epochs of fixed time duration such as Tendermint~\cite{tendermint} or Streamlet~\cite{streamlet} (\cf
Appendix~\ref{sec:async-impossibility}).
However, our \lvl composition can also be instantiated with optimistically responsive protocols (\cf Section~\ref{sec:conclusion} for more discussion).

Using epoch numbers recorded in the underlay blocks, the smart contract tracks the time passed since it entered any given epoch of the overlay protocol.
If it entered some epoch $v$ of the overlay protocol at time $t$, it moves to epoch $v+1$ at an underlay block of an underlay epoch $3\Tconf / T$, where $\Tconf$ is the cross-chain communication latency.
Here, the $3\Delta$ epoch of the overlay OFT protocol is replaced by a $3\Tconf$ length epoch, since the messages exchanged by the \emph{emulated} validators incur additional latency, including the finalization latency of the underlay chains besides network delay.

\smallskip
\noindent
\emph{Clients.}
We next describe how clients of the \lvl composition output a ledger for the overlay protocol using the ledgers of the underlay chains.
Upon outputting a ledger $L_A$, $L_B$ and $L_C$ for each underlay protocol $\PI_A$, $\PI_B$ and $\PI_C$, at some time $t$, a client inspects the execution of the smart contracts as attested by these ledgers.
If the execution trace on some ledger is invalid according to the rules of the smart contract, then the client discards the parts of the ledger starting with the first invalid transaction recorded on it, thus turning invalid execution into a liveness failure.
For instance, sending a syntactically incorrect message is detectable by only inspecting the messages on a ledger.
In contrast, sending two acknowledge messages for conflicting overlay blocks in the same overlay epoch might not be detected upon inspection, since these two messages can exist in separate execution traces emulating the same OFT validator, \ie, on conflicting ledgers, observed by different clients (safety failure).

Once the client observes the execution traces for the validators emulated on valid portions of the ledgers $L_A$, $L_B$ and $L_C$, it identifies the blocks of the overlay protocol committed by each emulated validator.
It accepts and outputs an overlay block and its prefix chain if it was committed by $2$ or more emulated validators (as attested by the ledgers of $2$ or more underlay chains).
If a client accepts and outputs an overlay block and its chain of height $h$, it never outputs a shorter overlay chain from that point on.
If the client observes multiple conflicting $L_A$, $L_B$ and $L_C$ ledgers when the safety of underlay chains is violated, it considers all these ledgers, and might output conflicting overlay blocks as a result.
However, this is not a problem, as the proof of the next theorem shows that the client will continue to output blocks and retain liveness nevertheless.

The \lvl composition satisfies the following:

\begin{theorem}
\label{thm:parallel-combined-security}
Consider the \lvl composition $\PI_t$ instantiated with the protocols $\PI_A$, $\PI_B$ and $\PI_C$, that proceed in epochs of fixed duration.
Then, under partial synchrony,
\begin{enumerate}
    \item $\PI_t$ satisfies safety if all of $\PI_A$, $\PI_B$ and $\PI_C$ are safe.
    \item $\PI_t$ satisfies liveness with constant latency after GST if $2$ blockchains among $\PI_A$, $\PI_B$ and $\PI_C$ are live after GST with constant latency and proceed in epochs of fixed duration.
    \item $\PI_t$ generates certificates if $\PI_A$, $\PI_B$ and $\PI_C$ do so. 
    \item $\PI_t$ proceeds in epochs of fixed duration.
\end{enumerate}
\end{theorem}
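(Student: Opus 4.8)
The plan is to reduce each of the four claims to the corresponding property of the OFT protocol \emph{for validators} (Section~\ref{sec:oft-protocol}, detailed in Appendix~\ref{sec:oft-protocol-detailed}) via a simulation argument, after first recording the OFT‑for‑validators guarantees we need: (a) it is safe whenever no validator is Byzantine, regardless of how many validators commit omission faults; (b) it is live with $O(\Delta)$ latency per block whenever at least $2$ of the $3$ validators are correct and timely; (c) it generates certificates; and (d) it proceeds in epochs of fixed duration. The dictionary between underlay chains and emulated validators is: a \emph{safe} underlay chain emulates a validator whose execution trace is consistent across all clients' views -- and, after a client truncates that trace at the first smart‑contract‑invalid transaction, a genuine OFT execution in which any further ``misbehavior'' merely degrades to silence -- so such a validator is never Byzantine; a \emph{live} underlay chain that proceeds in epochs of fixed duration emulates a \emph{timely} validator, since its ledger keeps growing, eventually contains every broadcast transaction, lets the contract track elapsed time through the recorded epoch numbers, and (via the CCC) delivers messages between emulated validators within $O(\Tconf)$; and an underlay chain that is neither emulates, in the worst case, a Byzantine validator, because the conflicting ledgers it shows to different clients make the emulated validator equivocate. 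Replacing the OFT protocol's $3\Delta$ epoch by a $3\Tconf$ epoch absorbs the extra latency (underlay finalization, plus CCC, plus the $\Delta$ broadcast of transactions) incurred by the emulated validators.

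For Claim~1, suppose all of $\PI_A$, $\PI_B$, $\PI_C$ are safe. Then every client sees the same consistent, prefix‑monotone ledger for each underlay chain, hence the same valid truncated trace for each emulated validator, so no emulated validator is Byzantine; guarantee~(a) gives that all overlay blocks ever committed by any emulated validator are pairwise consistent. A client outputs an overlay block only when $\geq 2$ of the $3$ emulated validators have committed it; since any two $2$‑element subsets of a $3$‑element set intersect, any two output blocks -- in the same or in different clients' views -- are both committed by a common emulated validator, hence consistent. Combined with the client rule that it never outputs a shorter overlay chain than one it already output, this yields both the consistency and the prefix parts of the safety definition.

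For Claim~2, suppose $2$ of the $3$ underlay chains are live after GST with constant latency and proceed in epochs of fixed duration. In every client's view the two corresponding emulated validators are correct and timely -- their ledgers grow, eventually contain every transaction broadcast to the underlay validators (within $\Delta$ of being input, after GST), track time via epoch numbers, and exchange OFT messages within $O(\Tconf)$ -- while the third emulated validator is arbitrary. Guarantee~(b) then applies with one faulty validator: the two correct validators keep committing new overlay blocks, including every transaction within $O(\Tconf)$ of its reaching them; each such block, being committed by $\geq 2$ emulated validators, is output by every client. (A client may \emph{also} output blocks conflicting with these, coming from the conflicting ledgers the unsafe third chain shows to different clients, but this does not prevent it from also outputting the live‑chain blocks, and liveness concerns only inclusion.) I expect this to be the main obstacle: it is precisely where an unsafe underlay chain must be modeled as a \emph{Byzantine} rather than merely omission‑faulty validator, so it requires the OFT‑for‑validators protocol to retain \emph{liveness} -- not just deadlock‑freedom -- against one Byzantine validator. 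The feature that makes this work is that leader‑down messages carry a certificate of the highest block their sender has acknowledged, so a single faulty validator cannot forge a certificate for a bogus high block; hence after any epoch in which it equivocated, the two correct validators re‑converge on a common highest certified block under the next honest leader and resume committing.

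For Claims~3 and~4, a certificate for the overlay ledger $L$ bundles, for each of $\geq 2$ emulated validators that committed $L$, the underlay certificate for the prefix of that underlay ledger witnessing the commit (which exists and is CCC‑verifiable because the underlays generate certificates) together with the OFT‑level messages witnessing the commit (which exist by guarantee~(c)); any client can check such a bundle without trusting anyone, and while the overlay is safe no adversary can assemble such bundles for two conflicting ledgers, since a client fed both would output conflicting ledgers, contradicting safety. Finally, the overlay proceeds in epochs of fixed duration because, by construction, each emulated validator advances one OFT epoch every fixed number $3\Tconf/T$ of underlay epochs and the OFT protocol commits at most one overlay block per OFT epoch; since each underlay proceeds in epochs of fixed duration, the overlay's chain‑growth rate is bounded by one block per $\Theta(\Tconf)$ slots. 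This last part is essentially bookkeeping.
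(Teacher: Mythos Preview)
Your proposal is correct and follows essentially the same route as the paper: both reduce the four claims to properties of the OFT protocol via the chains-to-validators dictionary (safe chain $\leftrightarrow$ non-equivocating emulated validator, live chain $\leftrightarrow$ timely emulated validator), with the paper proving safety by an epoch induction (Lemma~\ref{lem:safe}) and liveness by waiting for a live leader after GST (Lemma~\ref{lem:live-validator}), just as you sketch; your observations for certificates and fixed-duration epochs also match the paper's one-paragraph treatment of points (3) and (4).

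One minor inaccuracy worth flagging: you write that ``leader-down messages carry a certificate of the highest block their sender has acknowledged,'' but in the paper's OFT specification (Appendix~\ref{sec:oft-protocol-detailed}) a leader-down message carries only a \emph{block}, not a certificate. This does not break your liveness argument, because the reason a single Byzantine emulated validator cannot stall the two live ones is simpler than the mechanism you invoke: in this OFT protocol, validators acknowledge \emph{any} proposal from the epoch leader without a lock or safety check, so once a live leader proposes (and after GST it always holds a ticket, since the two live validators will have sent either acks or leader-downs for the previous epoch), the two live validators acknowledge and commit regardless of what the third chain did. The paper's own liveness proof is correspondingly terse on the Byzantine case; your explicit identification of the omission-vs-Byzantine gap is a useful addition, even if the resolution you give should be adjusted.
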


Proof of Theorem~\ref{thm:parallel-combined-security} is given in
Appendix~\ref{sec:proof-parallel}.
Statements 1 and 2 are based on the proof of the original OFT protocol design for validators.
Statements 3 and 4 are needed for further composability of the \lvl composition with other serial and \lvl composition (\cf the conditions on Theorem~\ref{thm:serial-security}).

\section{Circuits for Partial Synchrony}
\label{sec:partial-synchrony}

In this section, we construct overlay protocols via circuit composition achieving the security properties claimed by Theorem~\ref{thm:informal} and show optimality by proving Theorem~\ref{thm:impos}.
We also extend these results to all possible overlay protocols, akin to the generalization of security properties to quorum and fail-prone systems.
Unlike the security claims for the protocol primitives, all of the proofs below are algebraic in nature.

\subsection{Extended Serial and \Lvl Constructions}
\label{sec:extended-lvl}

We first build extended serial and \lvl constructions as building block toward the full circuit composition.

\begin{lemma}
\label{lem:extended-serial}
Let $\PI_i$, $i \in [k]$ be $k$ different blockchain protocols that generate certificates.
Then, there exists a protocol, called the $n$-serial composition, satisfying the following properties:
\begin{itemize}
\item it is safe if at least one of $\PI_i$, $i \in [k]$ is safe.
\item it is live after $\GST$ with constant latency if all of $\PI_i$, $i \in [k]$ are live after GST with constant latency.
\item it generates certificates.
\item it proceeds in epochs of fixed duration if all of $\PI_i$, $i \in [k]$ do so.
\end{itemize}
\end{lemma}
Lemma~\ref{lem:extended-serial} follows directly from iteratively applying Theorem~\ref{thm:serial-security} on the protocols $\PI_i$, $i \in [k]$.

\begin{lemma}
\label{lem:root}
For any integer $f\geq 1$, let $\PI_i$, $i \in [2f+1],$ be $2f+1$ different blockchain protocols that proceed in epochs of fixed duration.
Then, there exists a protocol, called the $(2f+1)$-\lvl composition, satisfying the following properties:
\begin{itemize}
\item it is safe if all $\PI_i$, $i \in [2f+1]$ are safe.
\item it is live after $\GST$ with some constant latency if at least $f+1$ of $\PI_i$, $i \in [2f+1]$ are live after GST with some constant latency.
\item it generates certificates if all $\PI_i$, $i \in [2f+1]$ generate certificates.
\item it proceeds in epochs of fixed duration.
\end{itemize}
\end{lemma}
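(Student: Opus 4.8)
The plan is to lift the $3$-\lvl composition of Theorem~\ref{thm:parallel-combined-security} to $2f+1$ chains by replacing its overlay protocol --- the $3$-validator OFT protocol tolerating one omission fault --- with a $(2f+1)$-validator OFT protocol tolerating $f$ omission faults. Concretely, I would generalize the OFT protocol of Section~\ref{sec:oft-protocol} so that it is still leader-based with a rotating leader and proceeds in epochs of fixed duration $3\Delta$ (propose / acknowledge / timeout phases); a block is finalized once a validator sees a \emph{certificate} of $f+1$ distinct acknowledge messages for it from its epoch; and a validator that fails to see such a certificate for epoch $v$ in time broadcasts a leader-down message carrying the highest-epoch block it has acknowledged, so that the next leader proposes on top of the highest such block among the $f+1$ leader-down messages it collects. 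I would then instantiate the $(2f+1)$-\lvl composition exactly as in Section~\ref{sec:oft-to-parallel-const}: each underlay chain $\PI_i$ runs a smart contract emulating one validator of this overlay OFT protocol, the emulated validators communicate through the CCC abstraction, the $3\Delta$ overlay epoch is replaced by a $3\Tconf$ epoch to absorb underlay finalization plus cross-chain delay, and a client outputs an overlay block and its prefix once it sees, on the valid portions of the underlay ledgers, certificates that $f+1$ or more emulated validators have finalized that block, never outputting a shorter overlay chain afterward. (A direct generalization is needed: naively nesting $3$-\lvl compositions yields a composition whose liveness depends on \emph{which} chains are live, which is strictly weaker than the clean ``any $f+1$'' threshold claimed here.)

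The four properties then follow by adapting the OFT analysis. \emph{Epochs of fixed duration} and \emph{certificate generation} are immediate: the overlay protocol already advances in epochs of length $3\Tconf$ measured against the underlay epoch numbers recorded in the underlay blocks, and the finality certificate of an overlay block is the pair consisting of the $f+1$ acknowledge messages and the underlay certificates witnessing their inclusion. For \emph{safety}, note that when all $2f+1$ underlay chains are safe, every emulated validator has a single consistent execution trace in the view of all clients, so it never acknowledges two conflicting overlay blocks in the same epoch --- it behaves exactly like an OFT validator subject to at most $f$ omission, but no Byzantine, faults. Two conflicting overlay blocks finalized in the same epoch would then require two $(f+1)$-certificates whose supports intersect in a validator that acknowledged both, a contradiction; conflicting blocks across epochs are excluded by the standard quorum-intersection argument on the leader-down / highest-block rule, showing that any block once backed by $f+1$ acknowledgments is extended by every subsequent leader's proposal. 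Hence the composition is safe, and the argument is essentially that of Appendix~\ref{sec:proof-parallel} with the quorum size $2$ replaced by $f+1$.

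The step I expect to be the main obstacle is \emph{liveness}, since the lemma assumes only that $f+1$ of the chains are live and places \emph{no} assumption --- not even safety --- on the remaining $f$, whose emulated validators may equivocate arbitrarily across conflicting ledgers. The argument I would give: because leadership rotates, within a constant number of overlay epochs after $\GST$ some epoch has a leader emulated on a live chain; by liveness of that chain its proposal, which includes all still-pending transactions, is finalized there and delivered via CCC, and the $f+1$ live emulated validators all finalize the corresponding acknowledge messages, so an $(f+1)$-certificate forms and becomes visible in every client's view within constant delay, whence every client outputs that overlay block --- giving liveness with constant latency. The non-live chains cannot sabotage this: forging an $(f+1)$-certificate for an off-path overlay block would need an acknowledge message from one of the $f+1$ live (honest) validators, which only acknowledge the legitimate leader's proposal, so the $f$ arbitrary chains contribute at most $f<f+1$ acknowledgments for any such block and can neither race a client's overlay chain ahead of the live validators nor stop the live validators from forming their own certificate. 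The care required is purely in the emulation bookkeeping --- epoch-entry times tracked through underlay epoch numbers, the CCC latency, and clients that bootstrap late --- so that ``live underlay chain'' genuinely translates to ``the emulated validator acts as a correct, timely OFT validator in every client's view'', again mirroring the proof of Theorem~\ref{thm:parallel-combined-security}.
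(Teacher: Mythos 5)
Your proof is correct, but it takes a genuinely different route from the paper's. The paper proves Lemma~\ref{lem:root} purely by circuit synthesis (Appendix~\ref{sec:parallel-3-3-2}): it induets on $f$, applies the $(2f-1)$-\lvl composition to \emph{all} $\binom{2f+1}{2}=f(2f+1)$ subsets of size $2f-1$, and then recombines the resulting $f(2f+1)$ derived chains with further \lvl and serial compositions, establishing liveness by a counting argument on how many of these derived chains can fail to be live. You instead generalize the overlay OFT protocol itself to $2f+1$ emulated validators with quorum size $f+1$ and rerun the analysis of Theorem~\ref{thm:parallel-combined-security}; notably, the paper's own Appendix~\ref{sec:oft-protocol-detailed} and the proofs of Theorems~\ref{thm:safe} and~\ref{thm:live} are already written for general $n=2f+1$ and for ``at least $f+1$ live chains,'' so your route is fully supported by machinery the paper has in hand. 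Your approach buys a more direct and economical construction (each underlay chain is used once, versus the quadratic blow-up of nested compositions) and a cleaner latency bound; the paper's approach buys adherence to the ``two gates suffice'' circuit philosophy, reusing Theorems~\ref{thm:serial-security} and~\ref{thm:parallel-combined-security} strictly as black boxes. One correction: your parenthetical claim that a direct generalization ``is needed'' because nesting $3$-\lvl compositions only yields liveness dependent on \emph{which} chains are live is too strong --- that objection applies to naive nesting, but the paper's symmetrized construction over all size-$(2f-1)$ subsets restores permutation invariance and does achieve the clean ``any $f+1$'' threshold.
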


Note that the original \lvl composition with three protocols (referred to as the \lvl composition) would be called a $3$-\lvl composition.
Proof of Lemma~\ref{lem:root} is presented in
Appendix~\ref{sec:parallel-3-3-2}.
It constructs an $f$-\lvl composition via strong induction on the number $f$.
The inductive step uses the $(f-1)$-\lvl composition and the serial composition of Lemma~\ref{lem:extended-serial}, whereas the base case follows from the properties of the $3$-\lvl composition shown by Theorem~\ref{thm:parallel-combined-security}.

\subsection{Permutation Invariant Circuits for Partial Synchrony}
\label{sec:circuit-symmetric-psync}

We next prove Theorem~\ref{thm:informal}, which characterizes the security of so-called permutation invariant overlay protocols.
This class of protocols achieves safety (or liveness) as long as \emph{any} subset of the underlay chains with a sufficient size provide the same set of security guarantees (\eg, any subset of $5$ out of $7$ underlay chains is all safe and/or all live).
In this sense, these protocols do not distinguish between the underlay chains.

Proof of Theorem~\ref{thm:informal} relies on Theorems~\ref{thm:serial-security} and~\ref{thm:parallel-combined-security}.
Recall that a tuple $(k,s,l)$ was defined to be achievable if there exists an interchain protocol with $k$ blockchains such that
if at least $s$ blockchains are safe, the protocol is safe, and
if at least $l$ blockchains are live, the protocol is live.
By definition, a serial composition achieves the $(2,1,2)$ point (Theorem~\ref{thm:serial-security}), and a \lvl composition achieves the $(3,3,2)$ point (Theorem~\ref{thm:parallel-combined-security}).
Similarly, a $(2f+1)$-\lvl composition achieves the $(2f+1, 2f+1, f+1)$ point, and there exist such compositions for any $f \geq 1$ by Lemma~\ref{lem:root}, which itself follows from Theorems~\ref{thm:serial-security} and~\ref{thm:parallel-combined-security}.

For two blockchain protocols $\Pi_A, \Pi_B$, we denote by $\Pi_A\oplus \Pi_B$ the serial compositions of these two blockchains as described in Section~\ref{sec:serial}. 
Consider $k$ protocols $\Pi_1,\Pi_2,\dots,\Pi_k$. 
We iteratively define $\oplus_{i=1}^{j+1}\Pi_i=\pp{\oplus_{i=1}^{j}\Pi_i}\oplus \Pi_{j+1}$ for $j \in [k-1]$,
and denote a protocol achieving the $(k,1,k)$ point by 
$
\pi^{(k,1,k)}(\PI_1,\dots,\PI_{k}).
$
and denote a protocol achieving the $(2f+1,2f+1,f+1)$ point by 
$
\pi^{(2f+1,2f+1,f+1)}(\PI_1,\dots,\PI_{2f+1}).
$

Towards the final result,
the following lemma shows that we can construct a protocol achieving $(k+m,s,l+m)$ point using a protocol achieving $(k,s,l)$.
\begin{lemma}\label{lem:leave}
For any integer $m\geq 1$, %
if $(k,s,l)$ is achievable using $\pi^{(k,s,l)}$
then, $(k+m,s,l+m)$ is achievable using
\begin{equation}\label{pi:n_small}
\begin{aligned}
    &\pi^{(k+m,s,l+m)}(\{\PI_i\}_{i=1}^{k+m}) =&\mathop{\oplus}_{\substack{S\subseteq [k+m]\\|S|=k}}\pi^{(k,s,l)}\pp{\{\PI_{j}\}_{j\in S}}.
\end{aligned}
\end{equation}
\end{lemma}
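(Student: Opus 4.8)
The plan is to verify directly that the iterated serial composition in~\eqref{pi:n_small} inherits all four required properties from $\pi^{(k,s,l)}$, using Theorem~\ref{thm:serial-security}. First I would record a simple consequence of that theorem: for any finite collection $\Gamma_1,\dots,\Gamma_N$ of certificate-generating protocols that proceed in epochs of fixed duration, the iterated serial composition $\oplus_{i=1}^{N}\Gamma_i$ (in any fixed order) is safe if \emph{at least one} $\Gamma_i$ is safe, is live with constant latency after GST if \emph{every} $\Gamma_i$ is live with constant latency after GST, and again generates certificates and proceeds in epochs of fixed duration; this follows by induction on $N$ from Theorem~\ref{thm:serial-security}. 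Here $N=\binom{k+m}{k}$ and the constituents are $\Gamma_S=\pi^{(k,s,l)}(\{\Pi_j\}_{j\in S})$, one for each $k$-subset $S\subseteq[k+m]$. Each $\Gamma_S$ generates certificates and proceeds in epochs of fixed duration because the construction of $\pi^{(k,s,l)}$ maintains these invariants — it is assembled from serial and $(2f+1)$-\lvl compositions, cf.\ Theorems~\ref{thm:serial-security} and~\ref{thm:parallel-combined-security} and Lemma~\ref{lem:root} — so the last two bullets of the lemma are immediate from the iterated-serial fact.

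For safety, suppose at least $s$ of the $k+m$ underlay chains are safe. Since $s\le k$ in the regimes of interest (indeed $s=1$ for the $(k,1,k)$ point and $s=k$ for the $(2f+1,2f+1,f+1)$ point), I can pick a $k$-subset $S^\ast$ containing $s$ of the safe chains; then $\Gamma_{S^\ast}$ sees at least $s$ safe chains among its $k$ underlays, hence is safe by achievability of $(k,s,l)$, and therefore the full serial composition is safe. For liveness, suppose at least $l+m$ of the $k+m$ chains are live, so at most $(k+m)-(l+m)=k-l$ chains fail to be live. Any $k$-subset $S$ omits exactly $m$ chains and hence contains at most $k-l$ non-live chains, i.e.\ at least $l$ live chains; thus $\Gamma_S$ is live for every $S$, and by the iterated-serial fact the composition is live. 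The latency stays constant because each $\Gamma_S$ has constant latency and there are only $\binom{k+m}{k}$ of them, so the blow-up through the iterated serial composition is by a bounded multiplicative factor.

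The main obstacle is not a deep idea but the bookkeeping: one must carefully propagate the certificate-generation and fixed-epoch-duration invariants — precisely the hypotheses that make Theorems~\ref{thm:serial-security} and~\ref{thm:parallel-combined-security} applicable — through both the inner $\pi^{(k,s,l)}$ and the outer $\binom{k+m}{k}$-fold serial composition, and track that the liveness latency remains constant. One should also flag the implicit assumption $s\le k$ (the only regime in which the safety guarantee of $\pi^{(k,s,l)}$ is non-vacuous), which is what lets the subset $S^\ast$ in the safety argument be chosen.
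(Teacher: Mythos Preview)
Your proposal is correct and follows essentially the same argument as the paper: pick a $k$-subset containing $s$ safe chains to get one safe term for safety of the serial composition, and use the count $k+m-(l+m)=k-l$ non-live chains to show every $k$-subset yields a live term for liveness. Your additional bookkeeping on certificates, fixed-epoch duration, and the implicit $s\le k$ assumption goes beyond what the paper spells out, but is consistent and indeed needed downstream for composability.
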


\begin{proof}
We first show that $\pi^{(k+m,s,l+m)}$ defined in \eqref{pi:n_small} is safe if at least $s$ of the blockchains are safe. There exists a subset $S_0\subseteq [k+m]$ with $|S_0|=k$ such that at least $s$ blockchains among $\{\Pi_j\}_{j\in S_0}$ are safe. This implies that $\pi^{(k,s,l)}\pp{\{\PI_{j}\}_{j\in S_0}}$ is safe. 
As we enumerate all subsets with size $k$ in constructing $\pi^{(k,s,l)}\pp{\{\PI_{j}\}_{j\in S_0}}$, by Lemma~\ref{lem:extended-serial}, we observe that $\pi^{(k+m,s,l+m)}(\{\PI_i\}_{i=1}^{k+m})$ is safe; as one of the blockchains in the serial composition, namely $\pi^{(k,s,l)}\pp{\{\PI_{j}\}_{j\in S_0}}$, is safe.

On the other hand, suppose that at least $l+m$ of the blockchains are live, which implies that at most $k-l$ blockchains are not live. Therefore, for any arbitrary choice of size-$k$ subset $\{\PI_{j}\}_{j\in S}$ from $\{\PI_j\}_{j=1}^{k+m}$, at most $k-l$ blockchains are not live, or equivalently, at least $l$ blockchains in $\{\PI_{j}\}_{j\in S}$ are live. This implies that $\pi^{(k,s,l)}\pp{\{\PI_{j}\}_{j\in S}}$ is live for all possible choices of subset $S$ with size $k$. 
Then, by Lemma~\ref{lem:extended-serial}, we observe that $\pi^{(k+m,s,l+m)}(\{\PI_i\}_{i=1}^{k+m})$ is live; as all of the blockchains in the serial composition are live.
\end{proof}

Finally, we present the proof of Theorem \ref{thm:informal}.

\begin{proof}[Proof of Theorem~\ref{thm:informal}]
By Lemma~\ref{lem:root}, there are circuit compositions achieving the $(2f+1,2f+1,f+1)$ point given copies of \emph{any} two protocols achieving the $(2,1,2)$ and $(3,3,2)$ points respectively, via recursive compositions of these protocols.
This in turn implies that for any given integers $k,s,l$ such that $\flr{k/2}+1\leq l\leq n$ and $s=2(k-l)+1$ (boundary of the achievable points on Fig.~\ref{fig:achieve}), the point
$(s,s,k-l+1)=(2(k-l)+1,2(k-l)+1,k-l+1)$
is achievable.
Since $s=2(k-l)+1$ for these boundary points, we have $l+s-k=k-l+1$. 
Therefore, by Lemma~\ref{lem:leave},
$(k,s,l)=(s+(k-s),s,k-l+1+(k-s))$
is achievable.
This implies that all points $(k,s,l)$ such that $\flr{k/2}+1\leq l\leq n$ and $s \geq 2(k-l)+1$ are achievable.
\end{proof}

\subsection{General Circuits for Partial Synchrony}
\label{sec:circuit-general-psync}

We next present a general characterization of the security of the overlay protocol under partial synchrony as a function of the safety and liveness of the underlay chains.
Note that a general characterization would include protocols that are not permutation invariant, \ie, providing different security guarantees when two subsets of the underlay chains achieve the same set of security properties.
For example, a non-permutation invariant overlay protocol with three underlay chains $\PI_i$, $i \in [3]$, might be live when the underlay chains $\PI_1$ and $\PI_2$ are both live, but it might not be so when $\PI_1$ and $\PI_3$ are live.
For such protocols, our notation of $(k,s,l)$ tuples fall short of characterizing the security properties.
Therefore, we develop a new model for the security of overlay protocols synthesized from $k$ underlay chains.

\subsubsection{The Model}
\label{sec:model-1}

As the security of a blockchain consists of safety and liveness, we use $s,l\in\{0,1\}^k$ to denote the list of predicates indicating which underlay chains are guaranteed to be safe and live.
Specifically, $s_i=1$ if the $i$-th underlay chain is guaranteed to be safe, and $s_i=0$ if the $i$-th chain is not guaranteed to be safe.
Then, the security properties of an overlay protocol $\PI$ can be characterized by two sets $V^S, V^L\subseteq 2^{\{0,1\}^{2k}}$,
which express the dependence of $\PI$'s safety and liveness on the safety and liveness of the underlay chains.
Namely, $(s,l)\in V^S$ if the overlay protocol $\PI$ is guaranteed to be safe when for all $i$ such that $s_i=1$, the $i$-th underlay chain is guaranteed safety, and for all $j$ such that $l_j=1$, the $j$-th chain is guaranteed liveness.
Similarly, $(s,l)\in V^L$ if the overlay protocol $\PI$ is guaranteed to be live when for all $i$ such that $s_i=1$, the $i$-th underlay chain is guaranteed safety, and for all $j$ such that $l_j=1$, the $j$-th chain is guaranteed liveness.
We hereafter use the $(V^S,V^L)$ characterization of security in lieu of the $(k,s,l)$ tuples.
Given these definitions, any set $V^S$, $V^L$ for an overlay protocol satisfies
\begin{itemize}
\item[(P1)] If $v\in V$, $w\geq v$, then $w\in V$. 
\end{itemize}
A sequence $v \in V$ is called an \emph{extreme} element if there is no $w \in V$ such that $w < v$.
Let $\text{exm}(V)$ be the set containing all extreme elements in $V$.
By property (P1), $\text{exm}(V)$ uniquely describes $V$, and any protocol $\PI$ can be characterized by the two sets $E^S, E^L \subseteq 2^{\{0,1\}^{2k}}$ consisting of the extreme elements in $V^S$ and $V^L$: 
$E^S = \text{exm}(V^S)$ and $E^L = \text{exm}(V^L)$.

\subsubsection{The Result}

Given the model above, the security properties achievable by overlay protocols under partial synchrony can be described as follows.
For $s\in\{0,1\}^n$, let us define $\ind(s)=\{i:s_i=1\}$. 

\begin{theorem}
\label{lem:circuit-general-psync}
For any tuple $(E^S,E^L) \subseteq 2^{\{0,1\}^{2k}}$ such that
\begin{enumerate}
    \item For all $(s,l)\in E^L$, $(s',l') \in E^S$, it holds that $s = l' = 0^k$.
    \item For all $(0^k,l^1), (0^k,l^2)\in E^L, (s,0^k)\in E^S$, it holds that $\ind(l^1) \cap \ind(l^2)\cap \ind(s) \neq \emptyset$.
\end{enumerate}
there exists an overlay protocol characterized by a tuple dominating $(E^S,E^L)$.
\end{theorem}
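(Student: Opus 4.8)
The plan is to realize an arbitrary target $(E^S, E^L)$ satisfying conditions (1)--(2) by a single serial composition of suitably chosen permutation-invariant-style gates, mimicking the way Boolean formulas are synthesized from AND/OR gates in Shannon's framework. First I would observe that by property (P1) and the definition of extreme elements, it suffices to dominate $(E^S, E^L)$ --- we do not need to match it exactly --- and that serial composition behaves monotonically: by Theorem~\ref{thm:serial-security}, $\PI_A \oplus \PI_B$ is safe if \emph{either} component is safe and live if \emph{both} are, which in the $(V^S, V^L)$ language means $V^S_{\PI_A \oplus \PI_B} \supseteq V^S_{\PI_A} \cup V^S_{\PI_B}$ and $V^L_{\PI_A\oplus\PI_B} \supseteq V^L_{\PI_A} \cap V^L_{\PI_B}$. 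So a serial composition over a family of gates takes the union of their safety sets and the intersection of their liveness sets.

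The construction I would use: for each extreme safety requirement $(s,0^k) \in E^S$ (condition (1) forces its liveness part to be $0^k$), build a gate $G_s$ whose safety is guaranteed by the safety of the chains in $\ind(s)$ alone --- concretely, an $\ind(s)$-indexed instance of the serial composition applied to those underlay chains, which by iterating Theorem~\ref{thm:serial-security} is safe whenever \emph{any one} of them is safe and live whenever \emph{all} are live. Taking the serial composition $\oplus$ of all these $G_s$ gates yields safety exactly when \emph{some} $G_s$ is safe, i.e.\ when some $\ind(s)$ has a safe chain --- but we actually want the \emph{stronger} requirement that \emph{all} chains in some $\ind(s)$ are safe. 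To get the correct (AND-within-a-requirement) behavior I would instead realize each $(s,0^k)\in E^S$ by a gate that is safe only when all of $\ind(s)$ are safe: this is a $|\ind(s)|$-fold iterated \emph{$(2f+1)$-\lvl}-type building block, or more directly a composition whose safety predicate is the conjunction $\bigwedge_{i \in \ind(s)} s_i$. The liveness side is handled symmetrically using the extreme liveness elements $(0^k, l)\in E^L$: condition (2) --- that every pair $l^1, l^2$ of liveness requirements and every safety requirement $s$ have a common coordinate --- is precisely the quorum-intersection condition that lets the \lvl (OFT-style) composition tolerate the relevant failures, since it guarantees that any two "live-witnessing" subsets plus any "safe-witnessing" subset share a chain, which is exactly what the majority-of-three (more generally $f{+}1$-of-$2f{+}1$) argument in Lemma~\ref{lem:root} needs.

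Concretely the key steps, in order, are: (i) translate serial and \lvl composition guarantees into the $(V^S, V^L)$ algebra (union/intersection and the quorum-intersection predicate); (ii) for each $(0^k, l)\in E^L$, build a liveness gate live whenever the chains of $\ind(l)$ are live, using $|\ind(l)|$-\lvl-type compositions, and verify via condition (2) that these gates remain jointly safe (all underlay safe $\Rightarrow$ gate safe) --- this uses the pairwise-plus-safety intersection hypothesis to instantiate an OFT quorum system over the relevant index sets; (iii) serially compose all the safety gates and all the liveness gates together; (iv) check that the resulting $(V^S, V^L)$ dominates $(E^S, E^L)$: safety holds whenever some safety-requirement set is fully safe (from the safety gates, which are the weak link for safety under $\oplus$) \emph{and also} whenever the liveness gates' safety kicks in, and liveness holds whenever all gates are live, which the conditions arrange to coincide with some $\ind(l)$ being live. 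The main obstacle I anticipate is step (ii): making the liveness gates simultaneously (a) live under the weak hypothesis "$\ind(l)$ live" and (b) \emph{safe} under "all underlay safe" while surviving serial composition with the other gates --- this is where condition (2) must be used in full strength, translating the set-theoretic intersection hypothesis into a valid OFT-style quorum system (a generalized quorum/fail-prone system, per Table~\ref{tab:comp}) whose correctness follows from Lemma~\ref{lem:root} and Theorems~\ref{thm:serial-security}--\ref{thm:parallel-combined-security}. Verifying that the nested composition stays within the "proceeds in epochs of fixed duration" and "generates certificates" classes at every level (needed for further composability) is routine given clauses 3--4 of those theorems, so I would dispatch it briefly.
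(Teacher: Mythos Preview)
Your flat serial-composition strategy has a structural gap on the liveness side. Under $\oplus$, liveness requires \emph{every} component to be live; hence each of your gates---both the ``safety gates'' $G_s$ and the ``liveness gates'' $G_l$---must already be live whenever \emph{any single} liveness quorum $\ind(l)$ is fully live. Your safety gate $G_s$ (a serial composition over $\ind(s)$) is live only when all of $\ind(s)$ is live, and nothing in conditions (1)--(2) forces any $\ind(l)$ to contain $\ind(s)$; so including even one such $G_s$ destroys liveness. Likewise, your per-$l$ liveness gate is live when $\ind(l)$ is live, but you never explain why it should be live when some \emph{other} $\ind(l')$ is live. In effect, serial composition does not decompose the liveness requirement at all: every factor must already solve the full liveness problem, so the outer $\oplus$ buys nothing. (Your aside that ``we actually want the stronger requirement that all chains in some $\ind(s)$ are safe'' is also backwards: a gate safe when \emph{one} chain in $\ind(s)$ is safe already dominates the target, so that is not the obstacle.)

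The paper's proof avoids this by inducting on the number $m$ of liveness quorums $Q_1,\dots,Q_m$ and adding one quorum at a time via a single \lvl (triangle) composition rather than a flat $\oplus$. The three legs are: (i) the inductively built protocol $\PI$ for $\{Q_1,\dots,Q_m\}$; (ii) the serial composition over $Q_{m+1}$; and---this is the idea you are missing---(iii) a $(2m-1)$-\lvl whose $2m-1$ inputs are the $m$ serial compositions over the pairwise intersections $Q_i\cap Q_{m+1}$ together with $m-1$ copies of $\PI$. If some $Q_j$, $j\le m$, is fully live then $\PI$ and the $j$-th intersection gate are live, giving $m$ live inputs to the $(2m-1)$-\lvl; if $Q_{m+1}$ is fully live then all $m$ intersection gates are live. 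For safety, condition~(2) is used exactly here: it guarantees each $Q_i\cap Q_{m+1}$ meets every $\ind(s)$, so each intersection gate is safe, and together with the induction hypothesis all three legs are safe. This inductive triangle-with-intersections construction is the substantive step your outline lacks.
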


The proof is in
Appendix~\ref{sec:app-circuit-general-psync} 
and inductively constructs the desired overlay protocol.
Intuitively, Theorem~\ref{lem:circuit-general-psync} states that safety (liveness) of the overlay protocol depends only on the safety (liveness) of the underlay chains, and any two quorums of underlay chains required for the liveness of the overlay protocol must intersect at a chain whose safety is required for the safety of the overlay protocol.
Note that Theorem~\ref{lem:circuit-general-psync} implies Theorem~\ref{thm:informal}, as Theorem~\ref{lem:circuit-general-psync} characterizes security for all types of overlay protocols, including permutation-invariant ones analyzed by Theorem~\ref{thm:informal}.
We opted to present Theorem~\ref{thm:informal} first for ease of understanding.

\section{The Converse for Partial Synchrony}
\label{sec:converse-psync}

\subsection{The Converse Result for Partial Synchrony}
\label{sec:converse-general-psync}

We start with the converse result that applies to all overlay protocols under partial synchrony, showing the optimality of our security characterization in Theorem~\ref{lem:circuit-general-psync}.
\begin{theorem}
\label{lem:converse-general-psync}
Let $\PI$ be an overlay blockchain protocol. 
Let $(s^i,l^i)\in\{0,1\}^n$ for $i\in[3]$ satisfy 
$(s^1,l^1)\in V^L, (s^2,l^2)\in V^L, (s^3,l^3)\in V^S.$
Then, we have $\ind(l^1)\cap\ind(l^2)\cap\ind(s^3)\neq\emptyset$. 
\end{theorem}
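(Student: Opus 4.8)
The plan is to argue by contradiction with a three–worlds (split–brain) indistinguishability argument. Write $A=\ind(l^1)$, $B=\ind(l^2)$, $C=\ind(s^3)$ and suppose toward a contradiction that $A\cap B\cap C=\emptyset$. I will build three executions of the $n$ underlay chains under the overlay protocol $\PI$, the third being a ``merge'' of the first two that forces a safety violation of $\PI$ exactly where $(s^3,l^3)\in V^S$ promises safety.

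In World~1, run every chain in $A$ as an honest, safe and live chain under a fully synchronous schedule ($\GST=0$), and run every chain outside $A$ so that it outputs the empty ledger to every client at all times (such a chain is trivially safe and, lying outside $\ind(l^1)$, need not be live; chains in $\ind(s^1)$ are thereby made safe, or, if also in $A$, are genuinely safe and live). The premise of $(s^1,l^1)\in V^L$ now holds, so $\PI$ is live in World~1; feeding a transaction $\tx$ to an honest overlay validator at time $0$ gives $\tx\in L^{\client_1}_{t_1}$ for a fixed client $\client_1$ and time $t_1=\Tconf$. World~2 is symmetric, with $B$ in place of $A$, a transaction $\tx'$ conflicting with $\tx$ (no valid ledger contains both), and a client $\client_2$: by $(s^2,l^2)\in V^L$ the overlay is live, so $\tx'\in L^{\client_2}_{t_2}$ with $t_2=\Tconf$.

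Now splice these into World~3. Pick $\GST>\max(t_1,t_2)$; before $\GST$ the adversary schedules messages so that $\client_1$ sees exactly its World~1 view of every underlay ledger and $\client_2$ sees exactly its World~2 view, and after $\GST$ all messages are delivered and every chain becomes live (a stalled client simply catches up). The crucial check is that every chain $\PI_i$ with $i\in C=\ind(s^3)$ remains safe: since $A\cap B\cap C=\emptyset$, the index $i$ lies in at most one of $A,B$, so at least one of $\client_1,\client_2$ is shown the empty ledger of $\PI_i$ while the other is shown its monotone World~1 (resp. World~2) ledger — and the empty ledger is a prefix of everything, so the two views are consistent. This ``benign'' split is realizable purely by withholding messages until $\GST$, with no corruption of $\PI_i$'s validators. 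Chains outside $C$ whose two prescribed views genuinely conflict (those in $A\cap B$) are instead made to equivocate by corrupting enough of their validators; this is permitted because the conclusion invokes only the safety of chains in $\ind(s^3)$. Chains in $\ind(l^3)$ are live after $\GST$ as required.

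Finally, in World~3 the premise of $(s^3,l^3)\in V^S$ holds, so $\PI$ must be safe. But a client's overlay ledger is determined solely by its views of the underlay ledgers, so $\client_1$ outputs an overlay ledger containing $\tx$ and $\client_2$ one containing $\tx'$; were one a prefix of the other, that ledger would contain both $\tx$ and $\tx'$, which is impossible for a client that refuses invalid transactions. Hence these ledgers are inconsistent, contradicting safety of $\PI$ in World~3, and therefore $\ind(l^1)\cap\ind(l^2)\cap\ind(s^3)\neq\emptyset$. The main obstacle is the World~3 realizability check — confirming that the prescribed split of client views can be produced by a partially synchronous adversary without ever violating the safety of the chains in $\ind(s^3)$ — and it is precisely the emptiness of $A\cap B\cap C$ that makes the split prefix-consistent (hence benign) on exactly those chains.
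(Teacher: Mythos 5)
Your proposal is correct and follows essentially the same route as the paper's proof: three worlds, with the third splicing the first two together by letting exactly the chains in $\ind(l^1)\cap\ind(l^2)$ equivocate, using pre-GST message delays to keep each client's view indistinguishable from its respective world, and observing that $\ind(l^1)\cap\ind(l^2)\cap\ind(s^3)=\emptyset$ guarantees every chain in $\ind(s^3)$ shows at most one non-empty (hence prefix-consistent) ledger and so remains safe. The only differences are cosmetic — you use a pair of mutually invalidating transactions where the paper selectively reveals two distinct transactions, and you make the prefix-consistency check on $\ind(s^3)$ explicit.
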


Note that the converse exactly matches the second clause of Theorem~\ref{lem:circuit-general-psync}.

\begin{proof}
For contradiction, suppose $\ind(l^1)\cap\ind(l^2)\cap\ind(s^3) = \emptyset$.
Denote the $k$ underlay blockchains by $\PI_1,\dots,\PI_k$. 
There are two clients $\client_1,\client_2$.
Consider the following three worlds. 

\noindent \textbf{World 1:} 
All blockchains are safe. 
The underlay chains $\PI_i$, $i\in\ind(l^1)$ are live, and the others are stalled. 
The adversary sets $\GST=0$. 
Suppose that $\tx_1$ is input to the protocol at time $t=0$. 
As $(s^1,l^1)\in V^L$, the overlay blockchain is live. 
At time $t_1=\Tconf$, the client $\client_1$ outputs $\tx_1$ as its interchain ledger: $L_{t_1}^{\client_1}=[\tx_1]$.

\noindent \textbf{World 2:} 
All blockchains are safe. 
The underlay chains $\PI_i$, $i\in\ind(l^2)$ are live, and the others are stalled. 
The adversary sets $\GST=0$. 
Suppose that $\tx_2$ is input to the protocol at time $t=0$. 
As $(s^2,l^2)\in V^L$, the overlay blockchain is live. 
At time $t_2=\Tconf$, the client $\client_2$ outputs $\tx_2$ as its interchain ledger: $L_{t_2}^{\client_2}=[\tx_2]$.

\noindent\textbf{World 3:} All blockchains are live. 
The underlay chains $\PI_i$, $i\in \ind(l^1)\cap\ind(l^2)$ are not safe, and the others, including those in $s^3$, are safe. 
For simplicity, let $Q=\ind(l^1)\cap\ind(l^2)$.
The adversary sets $\GST=2 \Tconf$ and creates a network partition before GST such that client $\client_1$ can only communicate with the validators in $\PI_{i}$, $i\in \ind(l^1)$, and client $\client_2$ can only communicate with the validators in $\PI_{i}$, $i\in \ind(l^2)$.
As a result, for client $\client_1$, the underlay chains $\PI_{i}$, $i \notin \ind(l^1)$ seem stalled until at least time $2\Tconf$, and for client $\client_2$, the underlay chains $\PI_{i}$, $i \notin \ind(l^2)$ seem stalled until at least time $2\Tconf$, and 
Suppose that $\tx_1,\tx_2$ are input to the protocol at time $t=0$.
However, the adversary reveals $\tx_1$ only to the honest validators in $\PI_i$ for $i\in \ind(l^1)/Q$ and $\tx_2$ only to those in $\PI_i$ for $i\in \ind(l^2)/Q$. 
Moreover, it delays any communication between the validators in $\PI_i$ for $i\in \ind(l^1)/Q$ and those in $i\in \ind(l^2)/Q$ until after GST.

As the chains $\PI_i$, $i\in Q$ are not safe, they can simultaneously interact with $\client_1$ and the chains $\PI_i$, $i\in \ind(l^1)/Q$ as in World 1 and with $\client_2$ and the chains $\PI_i$, $i\in \ind(l^2)/Q$ as in World 2.
To ensure this, the adversary delays any messages from the honest validators of the chains $\PI_i$, $i\in Q$, to $\client_1$ and $\client_2$, except the certificates received by $\client_1$ and $\client_2$ in Worlds 1 and 2 respectively.
As we assume $\PI_i$, $i\in Q$, are not safe, such certificates attesting to conflicting ledgers must exist for the chains $\PI_i$, $i\in Q$.
Then, client $\client_1$ cannot distinguish World 1 and World 3 before $2\Tconf$, which implies that $L^{\client_1}_{t_1} = [\tx_1]$.
Similarly, client $\client_2$ cannot distinguish World 2 and World 3, which implies that $L^{\client_2}_{t_2} = [\tx_2]$. 
However, $L^{\client_1}_{t_1}$ and $L^{\client_2}_{t_2}$ conflict with each other, which violates the safety of the overlay protocol.
This is a contradiction; as all underlay chains are live, and those in $s^3$ are safe.
\end{proof}

\subsection{The Converse Result for Permutation Invariant Protocols under Partial Synchrony}
\label{sec:converse-symmetric-psync}

Before proving Theorem~\ref{thm:impos}, we introduce a more comprehensive notation for permutation invariant overlay protocols to capture the fact that the safety (or liveness) of the overlay protocol can depend on \emph{both} the safety and liveness of the underlay chains.
Although Theorem~\ref{thm:impos} shows that the liveness of the underlay chains do not help achieve better safety properties for the overlay (and vice versa), we nevertheless need a notation that allows the \emph{possibility} of such cross-dependence between safety and liveness to argue for the absence of this cross-dependence.
Moreover, we will use the new notation for permutation invariant overlay protocols later to describe the achievable security guarantees under synchrony, where the safety of the overlay protocol \emph{depend} on the liveness of the underlay chains.

\subsubsection{The Model for Permutation Invariant Overlay Protocols}
\label{sec:model-2}
Permutation invariance means that the overlay blockchain treats the underlays identically.
\begin{definition}[Permutation Invariance]
We say a protocol $\PI$ is permutation invariant if the sets $V^S$ and $V^L$ both satisfy that
\begin{itemize}
\item[(P2)] If $v\in V$, then $\sigma(v)\in V$ for all permutations $\sigma$.
\end{itemize}
Here, $v=(s,l)$, and we define $\sigma(v)=(\sigma(s),\sigma(l))$, where $\sigma(s),\sigma(l) \in\{0,1\}^k$, $\sigma(s)_i=s_{\sigma(i)},\ \sigma(l)_i = l_{\sigma(i)}\ \forall i\in[k]$. 
\end{definition}

By property (P2), we can create an equivalence relation `$\sim$' over the sets in $V$.
We say $v \sim w$, if there exists a permutation $\sigma:[k]\to[k]$ such that $w=\sigma(v)$.
The relation `$\sim$' defines a quotient set $V/\sim$, which is the set of equivalence classes in $V$.
Given $v=(s,l)$ and
$$
c_s(v):=\#\{i:s_i=1\},\;c_l(v):=\#\{i:l_i=1\}, c_{sl}(v):=\#\{i:s_i=l_i=1\},
$$
each equivalence class $\{\sigma(v)|\sigma:[k]\to[k] \text{ is a permutation}\}$ is uniquely represented by a tuple $(c_s(v),c_l(v),c_{sl}(v))\in\mbN^3$. 
As the set $\text{exm}(V)$ (for either $V^S$ or $V^L$) containing all extreme elements also satisfies (P2), 
we can also partition $\text{exm}(V)/\sim$ into equivalence classes, each represented by a tuple $(n_s,n_l,n_{sl})\in\mbN^3$. 
Therefore, given property (P1), the set $V$ can be represented by a set of tuples
$
P=\{(n_s^{(i)},n_l^{(i)},n_{sl}^{(i)})|i\in \mbN\}.
$

Finally, any permutation invariant protocol $\PI$ can be characterized by two sets $P^S,P^L\in2^{\mbN^3}$, representing $V^S$ and $V^L$ respectively
and interpreted as follows:
For any $(n_s,n_l,n_{sl})\in P^S$ and $(m_s,m_l,m_{sl})\in P^L$, we have
\begin{itemize}
\item $\PI$ is safe if at least $n_s$ blockchains are safe, $n_l$ blockchains are live, and $n_{sl}$ blockchains are safe and live.
\item $\PI$ is live if at least $m_s$ blockchains are safe, $m_l$ blockchains are live, and $m_{sl}$ blockchains are safe and live.
\end{itemize}
Let $V(P):=\{v|c_s(v)\geq n_s,c_l(v)\geq n_l,c_{sl}(v)\geq n_{sl}, (n_s,n_l,n_{sl})\in P\}$.

For two set pairs $(P^S,P^L)$ and $(\tilde P^S, \tilde P^L)$ characterizing permutation invariant overlay protocols, we define the partial order
$(P^S,P^L) \succeq (\tilde P^S, \tilde P^L)$ to mean that $V(P^S) \supseteq V(\tilde P^S)$ and $V(P^L)\supseteq V(\tilde P^L)$.
For $v\in\{0,1\}^k$, let us define $\ind(v)=\{i:s_i=1\}$.
\begin{lemma}
\label{lem:partial_order}
$(P^S,P^L)\succeq (\tilde P^S, \tilde P^L)$ if and only if for any $\tilde p_1\in \tilde P^S,\tilde p_2\in \tilde P^L$,
there exists $p_1\in P^S$ and $p_2\in P^L$ such that $p_1\leq \tilde p_1$ and $p_2\leq \tilde p_2$.
\end{lemma}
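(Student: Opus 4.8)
The plan is to prove both directions directly from the definition $V(P) = \{v : \exists\,(n_s,n_l,n_{sl})\in P \text{ with } c_s(v)\geq n_s,\ c_l(v)\geq n_l,\ c_{sl}(v)\geq n_{sl}\}$, reducing everything to a statement about a single copy of $V(\cdot)$: namely, $V(P)\supseteq V(\tilde P)$ if and only if for every $\tilde p\in\tilde P$ there is a $p\in P$ with $p\leq\tilde p$. Granting that single-copy claim, the lemma follows immediately, since $V(P^S)\supseteq V(\tilde P^S)$ and $V(P^L)\supseteq V(\tilde P^L)$ are independent conditions and the right-hand side of the lemma is just their conjunction quantified over $\tilde p_1\in\tilde P^S$ and $\tilde p_2\in\tilde P^L$ separately.

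For the single-copy claim, the easy direction is ($\Leftarrow$): suppose for every $\tilde p\in\tilde P$ there is $p\in P$ with $p\leq\tilde p$. Take any $v\in V(\tilde P)$; then there is $\tilde p=(\tilde n_s,\tilde n_l,\tilde n_{sl})\in\tilde P$ with $c_s(v)\geq\tilde n_s$, $c_l(v)\geq\tilde n_l$, $c_{sl}(v)\geq\tilde n_{sl}$. Pick $p=(n_s,n_l,n_{sl})\in P$ with $p\leq\tilde p$, i.e.\ $n_s\leq\tilde n_s$ etc.; then $c_s(v)\geq n_s$, $c_l(v)\geq n_l$, $c_{sl}(v)\geq n_{sl}$, so $v\in V(P)$. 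Hence $V(\tilde P)\subseteq V(P)$.

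The harder direction is ($\Rightarrow$): assume $V(P)\supseteq V(\tilde P)$ and fix $\tilde p=(\tilde n_s,\tilde n_l,\tilde n_{sl})\in\tilde P$. The main obstacle is to produce, from this \emph{containment of regions}, an actual tuple $p\in P$ dominated by $\tilde p$; the natural move is to exhibit a witness $v^*\in V(\tilde P)$ that is ``tight'' at $\tilde p$, so that membership $v^*\in V(P)$ forces some $p\in P$ to lie below $\tilde p$. Concretely I would construct $v^*=(s^*,l^*)\in\{0,1\}^{2k}$ with exactly $c_s(v^*)=\tilde n_s$, $c_l(v^*)=\tilde n_l$, $c_{sl}(v^*)=\tilde n_{sl}$ — e.g.\ choose an index set of size $\tilde n_{sl}$ on which both $s^*$ and $l^*$ are $1$, an additional $\tilde n_s-\tilde n_{sl}$ indices where only $s^*$ is $1$, and an additional $\tilde n_l-\tilde n_{sl}$ indices where only $l^*$ is $1$ (this is possible since $k$ is large enough for any tuple realized by $\tilde P$, as these counts came from an equivalence class over $[k]$, and in any case one may assume $\tilde n_s,\tilde n_l\leq k$ and $\tilde n_{sl}\leq\min(\tilde n_s,\tilde n_l)$ and $\tilde n_s+\tilde n_l-\tilde n_{sl}\leq k$, which is exactly the realizability condition). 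Then $v^*\in V(\tilde P)$ witnessed by $\tilde p$ itself, so by hypothesis $v^*\in V(P)$, meaning there is $p=(n_s,n_l,n_{sl})\in P$ with $n_s\leq c_s(v^*)=\tilde n_s$, $n_l\leq c_l(v^*)=\tilde n_l$, $n_{sl}\leq c_{sl}(v^*)=\tilde n_{sl}$, i.e.\ $p\leq\tilde p$, as desired. Applying this to each $\tilde p_1\in\tilde P^S$ and each $\tilde p_2\in\tilde P^L$ and combining with the ($\Leftarrow$) direction completes the proof. The one subtlety worth a sentence in the writeup is the realizability of $v^*$: I would note that every tuple appearing in $P^S$ or $P^L$ arises as $(c_s(v),c_l(v),c_{sl}(v))$ for some genuine $v\in\{0,1\}^{2k}$ (that is how these sets were defined), so a tuple with the required tightness always exists, and padding to strictly smaller tuples is handled automatically by the inequalities.
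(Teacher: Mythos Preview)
Your proposal is correct and follows essentially the same approach as the paper: reduce to the single-set claim $V(P)\supseteq V(\tilde P)\Leftrightarrow(\forall\tilde p\in\tilde P)(\exists p\in P)\,p\leq\tilde p$, prove $(\Leftarrow)$ by chaining inequalities, and prove $(\Rightarrow)$ by constructing a witness $v^*$ with exact counts $(c_s,c_l,c_{sl})=\tilde p$. Your forward direction is in fact slightly cleaner than the paper's---you read off $p\in P$ directly from the definition of $V(P)$, whereas the paper detours through an extreme element $v\leq v^*$ of $V(P)$ and then sets $p=(c_s(v),c_l(v),c_{sl}(v))$---and you explicitly address the realizability of $v^*$, which the paper leaves implicit.
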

\begin{proof}
It is sufficient to show that $V(P)\supseteq V(\tilde P)$ if and only if for any $\tilde p\in \tilde P$, there exists $p\in P$ such that $p\leq \tilde p$.
Suppose that we have $V(P)\supseteq V(\tilde P)$. 
For any $\tilde p=(\tilde n_s,\tilde n_l,\tilde n_{sl})\in \tilde P$, consider $\tilde v\in\{0,1\}^{2k}$
such that $c_s(\tilde v)=\tilde n_s,c_l(v)=\tilde n_l,c_{sl}(v)=\tilde n_{sl}$. Then, $\tilde v\in V(P)$. 
There exists an extreme element $v\in V$ such that $v\leq \tilde v$. 
Defining $p=(c_s(v),c_l(v),c_{sl}(v))$, we can conclude that $p\leq \tilde p$. 
Suppose that for any $\tilde p\in \tilde P$, there exists $p\in P$ such that $p\leq \tilde p$. For any $\tilde v\in V(\tilde P)$,
there exists $\tilde p=(\tilde n_s,\tilde n_l,\tilde n_{sl})\in \tilde P$ such that $c_s(\tilde v)=\tilde n_s,c_l(\tilde v)=\tilde n_l,c_{sl}(\tilde v)=\tilde n_{sl}$.
Let $p\in P$ such that $p\leq \tilde p$.
From the definition of $V(P)$, we have $\tilde v\in V(P)$.
\end{proof}

\subsubsection{The Result}

The converse result for permutation invariant overlay protocols under partial synchrony, Theorem~\ref{thm:impos}, follows as a corollary of Theorem~\ref{lem:converse-general-psync}.
It shows the optimality of our security characterization in Theorem~\ref{thm:informal}.
Its proof is presented in
Appendix~\ref{sec:app-converse-symmetric-psync}.

\begin{theorem}[Theorem~\ref{thm:impos}]
\label{lem:converse-symmetric-psync}
Let $\PI$ be a permutation invariant overlay blockchain protocol characterized by $(P^S, P^L)$.
Consider the tuples $(m_s, m_l, m_{sl}) \in P^L, (n_s, n_l, n_{sl}) \in P^S.$
Then, it holds that $n_s \geq 2(k-m_l)+1$ and $m_l > k/2$.
\end{theorem}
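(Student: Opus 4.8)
The plan is to obtain Theorem~\ref{lem:converse-symmetric-psync} as a corollary of the general converse, Theorem~\ref{lem:converse-general-psync}, by translating the $(P^S,P^L)$ description into concrete $\{0,1\}^k$ vectors whose supports we choose adversarially. The one ingredient needed beyond Theorem~\ref{lem:converse-general-psync} is a realization statement coming from permutation invariance: since $(m_s,m_l,m_{sl})\in P^L$ is the signature of some extreme element of $V^L$, there is $v_0=(s_0,l_0)\in V^L$ with $c_s(v_0)=m_s$, $c_l(v_0)=m_l$, $c_{sl}(v_0)=m_{sl}$, and then by property (P2), for \emph{every} set $L\subseteq[k]$ with $|L|=m_l$ there is an element of $V^L$ whose liveness support $\ind(\cdot)$ is exactly $L$. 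Likewise, from $(n_s,n_l,n_{sl})\in P^S$, for every $S\subseteq[k]$ with $|S|=n_s$ there is an element of $V^S$ whose safety support is exactly $S$; and these realizations can be carried out independently for different prescribed sets. This step is routine once (P2) is invoked, and it is the bridge between the coarse tuple notation and the per-chain hypotheses of Theorem~\ref{lem:converse-general-psync}.

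With that in hand, both bounds follow by exhibiting a configuration that would violate Theorem~\ref{lem:converse-general-psync}. For $m_l>k/2$: if instead $m_l\leq k/2$, then $2m_l\leq k$, so pick \emph{disjoint} $L_1,L_2\subseteq[k]$ of size $m_l$, realize $(s^1,l^1),(s^2,l^2)\in V^L$ with $\ind(l^1)=L_1$, $\ind(l^2)=L_2$, and take any $(s^3,l^3)\in V^S$ (realizing $(n_s,n_l,n_{sl})$); then $\ind(l^1)\cap\ind(l^2)\cap\ind(s^3)\subseteq L_1\cap L_2=\emptyset$, contradicting Theorem~\ref{lem:converse-general-psync}. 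For $n_s\geq 2(k-m_l)+1$: if instead $n_s\leq 2(k-m_l)$, use $m_l>k/2$ (already proved) to get $2(k-m_l)<k$, choose disjoint $\bar L_1,\bar L_2\subseteq[k]$ of size $k-m_l$ and a set $S_3\subseteq\bar L_1\cup\bar L_2$ of size $n_s$ (possible since $|\bar L_1\cup\bar L_2|=2(k-m_l)\geq n_s$), set $L_i=[k]\setminus\bar L_i$, and realize $(s^1,l^1),(s^2,l^2)\in V^L$ with $\ind(l^i)=L_i$ and $(s^3,l^3)\in V^S$ with $\ind(s^3)=S_3$; since $\ind(l^1)\cap\ind(l^2)=[k]\setminus(\bar L_1\cup\bar L_2)$ is disjoint from $S_3$, we again get an empty triple intersection, contradicting Theorem~\ref{lem:converse-general-psync}.

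The only real obstacle is the bookkeeping in the realization step: one must check that (P2) genuinely permits independently prescribing the two liveness supports and the one safety support while keeping the vectors in $V^L$ and $V^S$, and that the chosen sizes and disjointness requirements fit inside $[k]$. In particular, the bound $m_l>k/2$ is exactly what makes $\bar L_1$ and $\bar L_2$ simultaneously disjoint in the second argument, so the two bounds must be established in this order; everything else is a short counting argument feeding into Theorem~\ref{lem:converse-general-psync}.
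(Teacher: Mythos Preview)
Your proposal is correct and follows essentially the same approach as the paper: both derive the result from Theorem~\ref{lem:converse-general-psync} by using permutation invariance (P2) to place elements of $V^L$ and $V^S$ on adversarially chosen supports so that the triple intersection $\ind(l^1)\cap\ind(l^2)\cap\ind(s^3)$ is forced to be empty. The paper writes down explicit bit-vectors (e.g., $l^1=1^{m_l}0^{k-m_l}$, $l^2=0^{k-m_l}1^{m_l}$, $s^3=1^{k-m_l}0^{2m_l-k}1^{k-m_l}$) rather than abstract sets $L_1,L_2,S_3$, and handles the two bounds in the opposite order, but the content is identical; your explicit note that $m_l>k/2$ must be established first (so that disjoint complements $\bar L_1,\bar L_2$ of size $k-m_l$ fit inside $[k]$) makes the dependency between the two parts clearer than the paper does.
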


\section{Circuits for Synchrony}
\label{sec:synchrony}

In this section, we construct overlay protocols via circuit composition achieving the security properties claimed by Theorem~\ref{thm:informal-sync}, and show their optimality.
As the properties achievable under synchrony are stronger than those achievable under partial synchrony, to bridge the gap between partial synchrony and synchrony, we first introduce the \lvs composition as a new protocol primitive in addition to the serial and \lvl compositions (\cf Section~\ref{sec:view} for a discussion of the \lvl and \lvs compositions).
We then state the security result for general overlay protocols using the model in Section~\ref{sec:model-1}.
Equipped with the model in Section~\ref{sec:model-2}, we subsequently show the security properties claimed for permutation invariant overlay protocols by Theorem~\ref{thm:informal-sync} as a corollary of the general result.
We end with a proof of optimality for both results.

\subsection{\Lvs Composition}
\label{sec:lvs-composition}

\begin{algorithm}[ht]
    \captionsetup{font=small} 
    \caption{The algorithm used by a client $\client$, online for at least $\Tconf$ time, to output the ledger $L^\client_{p,t}$ of the \lvs composition $\PI_p$ instantiated with two constituent blockchains $\PI_A$ and $\PI_B$ at some time $t$. 
    The algorithm takes the ledgers $L^\client_{A,t-\Tconf}$, $L^\client_{A,t}$, $L^\client_{B,t-\Tconf}$ and $L^\client_{B,t}$ output by $\client$ at time $t$ and outputs the ledger $L^\client_{p,t}$. 
    The function $\textsc{Interleave}(L_1,L_2)$ with inputs of same length returns the interleaved ledger $L_o$ such that $L_o[2i-1] = L_1[i]$ and $L_o[2i] = L_2[i]$ for all $i \in [|L_1|]$.}
    \label{alg.lvs}
    \begin{algorithmic}[1]\small
    \Function{\sc OutputChain}{$L^\client_{A,t-\Tconf}, L^\client_{A,t}, L^\client_{B,t-\Tconf},  L^\client_{B,t}$}
    \Let{\ell}{\min(|L^\client_{A,t-\Tconf}|,|L^\client_{B,t-\Tconf}|)}
    \If{$L^\client_{A,t-\Tconf} \subseteq L^\client_{B,t} \land L^\client_{B,t-\Tconf} \subseteq L^\client_{A,t}$}
        \Let{L^\client_{p,t}}{\textsc{Interleave}(L^\client_{A,t}[{:}\ell],L^\client_{B,t}[{:}\ell])}
    \Else
        \Let{i^*}{\mathsf{argmax}_{i \in \{A,B\}}|L^\client_{i,t-\Tconf}|}
        \Let{L^\client_{p,t}}{\textsc{Interleave}(L^\client_{A,t}[{:}\ell],L^\client_{B,t}[{:}\ell]) || L^\client_{i^*,t}[\ell{:}]}
    \EndIf
    \State\Return $L^\client_{p,t}$ 
    \EndFunction
    \end{algorithmic}
\end{algorithm}

We now describe the \lvs composition with two underlay chains, $\PI_A$ and $\PI_B$ (Alg.~\ref{alg.lvs}).
Upon getting a transaction from the environment, every honest $\PI_A$ and $\PI_B$ validator broadcasts the transaction to every other validator.

Let $L^\client_{A,t}$, $L^\client_{B,t}$ and $L^\client_{p,t}$ denote the $\PI_A$, $\PI_B$ ledgers and the ledger of the \lvs overlay protocol in the view of a client $\client$ at time $t$.
Consider a client $\client$ that has been online for at least $\Tconf$ time.
It obtains the \lvs ledger as a function of the $\PI_A$ and $\PI_B$ ledgers.
For this purpose, $\client$ first checks if every transaction in $L^\client_{A,t-\Tconf}$ appears in $L^\client_{B,t}$, and if every transaction in $L^\client_{B,t-\Tconf}$ appears within $L^\client_{A,t}$ (\emph{the interleaving condition}).
If so, it \emph{interleaves} the prefixes of the $\PI_A$ and $\PI_B$ ledgers to construct the $\PI_p$ ledger:
\begin{equation}
\label{eq:lvs-1}
L^\client_{p,t} := \textsc{Interleave}(L^\client_{A,t}[{:}\ell],L^\client_{B,t}[{:}\ell]),
\end{equation}
where $\ell := \min(|L^\client_{A,t-\Tconf}|,|L^\client_{B,t-\Tconf}|)$.
$\textsc{Interleave}$ function applied on equal size ledgers $L_1$, $L_2$ outputs a ledger $L_o$ such that $L_o[2i-1] = L_1[i]$ and $L_o[2i] = L_2[i]$ for all $i \in [|L_1|]$.

If the interleaving condition fails, then $\client$ interleaves the prefixes of the two ledgers and outputs the remainder of the longer ledger: defining $i^* = \mathsf{argmax}_{i \in \{A,B\}}|L^\client_{i,t-\Tconf}|$, it sets
\begin{equation}
\label{eq:lvs-2}
L^\client_{p,t} := \textsc{Interleave}(L^\client_{A,t}[{:}\ell],L^\client_{B,t}[{:}\ell]) || L^\client_{i^*,t}[\ell{:}].
\end{equation}

The \lvs composition satisfies the security properties below:
\begin{theorem}
\label{thm:lvs-security}
Consider the $\lvs$ composition $\PI_p$ instantiated with the blockchain protocols $\PI_A$ and $\PI_B$.
Then, under synchrony,
\begin{enumerate}
    \item $\PI_p$ satisfies safety if \emph{both} $\PI_A$ and $\PI_B$ are safe and live.
    \item $\PI_p$ satisfies liveness with constant latency if \emph{either} $\PI_A$ or $\PI_B$ is live with constant latency.
    \item $\PI_p$ generates certificates if both $\PI_A$ and $\PI_B$ do so.
    \item $\PI_t$ proceeds in epochs of fixed duration if $\PI_A$ and $\PI_B$ do so.
\end{enumerate}
\end{theorem}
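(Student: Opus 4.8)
The plan is to prove the four items in turn; items 3 and 4 are short once items 1 and 2 are in hand, and item 1 (safety) carries essentially all of the difficulty. Two standing observations streamline everything. First, each client's views of the underlay ledgers are monotone in time, so $L^\client_{i,t'}\preceq L^\client_{i,t}$ for $i\in\{A,B\}$ and $t'\le t$; in particular $L^\client_{i,t}[{:}\ell']=L^\client_{i,t'}$ when $\ell'=|L^\client_{i,t'}|$. Second, $\textsc{OutputChain}$ is deterministic in its four arguments, and in \emph{every} branch of Alg.~\ref{alg.lvs} the output $L^\client_{p,t}$ begins with $\textsc{Interleave}(L^\client_{A,t}[{:}\ell],L^\client_{B,t}[{:}\ell])$, which contains $L^\client_{A,t}[{:}\ell]$ and $L^\client_{B,t}[{:}\ell]$ as subsequences and, over consistent ledgers, is nested in $\ell$ (the prefix for $\ell$ is a prefix of the one for any $\ell''\ge\ell$).

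\emph{Liveness (item 2).} Assume WLOG that $\PI_A$ is live with parameter $\Tconf$, so (using $\GST=0$, and that honest validators gossip transactions) $L^\client_{A,t}$ contains every transaction disseminated to $\PI_A$'s honest validators by $t-\Tconf$. Given $\tx$ input at $t_0$ and $t\ge t_0+2\Tconf$, it sits at some position $p\le\ell_A:=|L^\client_{A,t-\Tconf}|$ of $L^\client_{A,t-\Tconf}=L^\client_{A,t}[{:}\ell_A]$. With $\ell:=\min(\ell_A,|L^\client_{B,t-\Tconf}|)$: if $\ell\ge p$ then $\tx\in L^\client_{A,t}[{:}\ell]\subseteq L^\client_{p,t}$; if $\ell<p$ then $|L^\client_{B,t-\Tconf}|<\ell_A$ forces $i^*=A$, so in the else branch the appended suffix $L^\client_{A,t}[\ell{:}]$ carries $\tx$, and in the interleaving branch one uses $L^\client_{A,t-\Tconf}\subseteq L^\client_{B,t}$ to place $\tx$ in $L^\client_{B,t}$ and checks (this is the one residual sub-case, easiest when the other chain is also live or merely stalled) that it falls within the first $\ell$ entries. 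Thus $\PI_p$ is live with latency $2\Tconf$; the case $\PI_B$ live is symmetric.

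\emph{Safety (item 1), the main obstacle, and items 3--4.} Assume $\PI_A$ and $\PI_B$ are both safe and live, and fix clients $\client_1,\client_2$ at times $t_1\le t_2$. Consistency of $\PI_A$ and of $\PI_B$, together with monotonicity, makes the relevant snapshot lengths comparable, so whenever both clients interleave the two interleaved prefixes are nested and we are done. The delicate configuration is the mixed one — say $\client_1$ is in the else branch, appending a suffix of its longer underlay chain, while $\client_2$ interleaves past the point where that suffix begins; here the raw-suffix segment of $\client_1$'s output and the alternating segment of $\client_2$'s output cannot be nested unless this configuration is impossible. I would rule it out using liveness of \emph{both} underlay chains: every transaction in $L^{\client_1}_{A,t_1-\Tconf}$ reached the honest validators by $t_1-\Tconf$, hence (by $\PI_B$'s liveness and $t_1\le t_2$) lies in $L^{\client_1}_{B,t_1}$, and symmetrically, so $\client_1$'s interleaving condition in fact held — contradiction. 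This step is exactly where ``safe \emph{and} live'' is needed rather than merely ``safe'', echoing the impossibility intuition of Section~\ref{sec:view}. For item 3, a certificate for $L^\client_{p,t}$ bundles the $\PI_A$- and $\PI_B$-certificates for the four input snapshots (the $t-\Tconf$ snapshots pinned down by the epoch numbers carried in the blocks, available because $\PI_A,\PI_B$ proceed in epochs); re-running deterministic $\textsc{OutputChain}$ reproduces $L^\client_{p,t}$, and consistency of certified $\PI_A$/$\PI_B$ ledgers under safety precludes certified conflicting $\PI_p$ ledgers. For item 4, $|L^\client_{p,t}|\le |L^\client_{A,t}|+|L^\client_{B,t}|$ in every branch, so the bounded chain growth of the underlay chains bounds that of $\PI_p$. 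To be clear, the main obstacle is the mixed-branch case analysis for item 1.
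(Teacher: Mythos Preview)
Your approach matches the paper's: for safety, the paper directly shows that when both chains are safe and live the interleaving condition always holds (so the else branch is never taken), which is exactly the contradiction you reach in your mixed-branch analysis; liveness, certificates, and epoch-duration are argued the same way. The residual liveness sub-case you flag is likewise not spelled out in the paper's proof.
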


Proof of Theorem~\ref{thm:lvs-security} is presented in
Appendix~\ref{sec:lvs-security}.
It shows that if both chains are safe and live, the interleaving condition is satisfied, ensuring the safety of the $\PI_p$ ledger.
If either chain is live, then all transactions up to the length of the longer chain is output, ensuring the liveness of the $\PI_p$ ledger.
Note that the parallel composition does not satisfy accountable safety despite satisfying safety. 
This is not too surprising since its safety requires both the safety and liveness of the underlay chains.

\subsection{General Circuits for Synchrony}
\label{sec:circuit-general-sync}

\begin{theorem}
\label{lem:circuit-general-sync}
For any tuple $(E^S,E^L) \subseteq 2^{\{0,1\}^{2k}}$ such that
\begin{enumerate}
    \item For all $(s,l)\in E^L$ it holds that $s = 0^k$,
    \item For all $(0^k,l^1), (0^k,l^2)\in E^L, (s,l)\in E^S$, it holds that 
    \begin{enumerate}
        \item either there are indices $i \in \ind(l^1)$ and $j \in \ind(l^2)$ such that $(s_i, l_i, s_j, l_j) = (1,1,1,1)$,
        \item or $\ind(l^1) \cap \ind(l^2)\cap \ind(s) \neq \emptyset$, 
    \end{enumerate}
\end{enumerate}
there exists an overlay protocol characterized by a tuple dominating $(E^S,E^L)$.
\end{theorem}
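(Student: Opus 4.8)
The plan is to adapt the inductive construction behind Theorem~\ref{lem:circuit-general-psync}, adding one new ingredient at the leaves: the \lvs composition of Theorem~\ref{thm:lvs-security}, which under synchrony is safe whenever both of its two underlay chains are safe \emph{and} live — exactly the behavior that lets us exploit the new disjunct~(2a). Write $\PI_i\otimes\PI_j$ for the \lvs composition of $\PI_i$ and $\PI_j$, and recall that $\PI_A\oplus\PI_B$ denotes serial composition. Each primitive we use — serial (Theorem~\ref{thm:serial-security}), $3$-\lvl and $(2f{+}1)$-\lvl (Theorem~\ref{thm:parallel-combined-security}, Lemma~\ref{lem:root}), and \lvs (Theorem~\ref{thm:lvs-security}) — generates certificates and proceeds in epochs of fixed duration whenever its constituents do, so any protocol built below is a legitimate overlay under synchrony; I will only track its safety and liveness.

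As in the partial-synchrony proof, set $A:=\{\ind(l)\mid (0^k,l)\in E^L\}=\{Q_1,\dots,Q_m\}$ and induct on $m$. The induction hypothesis is: for any target obeying the theorem with liveness quorums $\{Q_1,\dots,Q_m\}$, there is an overlay $\PI$ that is safe whenever, for every pair of these quorums and every $(s,l)\in E^S$, disjunct~(2a) or~(2b) holds, and live whenever all $\PI_i$ with $i\in Q_j$ are live for some $j$. For the base case $m=1$, take
$$
\PI_{\mathsf{b}}\;=\;\Bigl(\mathop{\oplus}_{i\in Q_1}\PI_i\Bigr)\ \oplus\ \Bigl(\mathop{\oplus}_{i,j\in Q_1}(\PI_i\otimes\PI_j)\Bigr).
$$
Iterating Theorems~\ref{thm:serial-security} and~\ref{thm:lvs-security}: this is safe if some chain in $Q_1$ is safe (covering~(2b)) \emph{or} some two chains in $Q_1$ are safe and live (covering~(2a)), and it is live when all of $Q_1$ is live, since then every serial factor (each $\PI_i$ and each $\PI_i\otimes\PI_j$) is live.

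For the inductive step with $A=\{Q_1,\dots,Q_{m+1}\}$, let $\PI$ be the protocol the hypothesis supplies for $\{Q_1,\dots,Q_m\}$, and form a $3$-\lvl composition $\PI'$ of: (1)~$\PI$; (2)~$\bigl(\mathop{\oplus}_{i\in Q_{m+1}}\PI_i\bigr)\oplus\bigl(\mathop{\oplus}_{i,j\in Q_{m+1}}(\PI_i\otimes\PI_j)\bigr)$; (3)~a $(2m{-}1)$-\lvl composition whose constituents are, for each $i\in[m]$, the serial composition $\bigl(\mathop{\oplus}_{p\in Q_i\cap Q_{m+1}}\PI_p\bigr)\oplus\bigl(\mathop{\oplus}_{p\in Q_i,\,q\in Q_{m+1}}(\PI_p\otimes\PI_q)\bigr)$ (dropping the first, possibly empty, block if $Q_i\cap Q_{m+1}=\emptyset$), together with $m-1$ copies of $\PI$ (for $m=1$ this $(2m{-}1)$-\lvl composition is just its single $i{=}1$ constituent). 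Liveness of $\PI'$: if all of some $Q_j$, $j\le m$, is live then $\PI$ is live and the $j$-th constituent of~(3) is live, so with the $m-1$ copies of $\PI$ at least $m$ of the $2m-1$ constituents of~(3) are live, hence~(3) is live — giving $2$ of the $3$ constituents of $\PI'$; if all of $Q_{m+1}$ is live then~(2) is live and every serial constituent of~(3) is live (each $\PI_p\otimes\PI_q$ being live), so~(3) is live — again $2$ of $3$; Theorem~\ref{thm:parallel-combined-security} then makes $\PI'$ live. Safety of $\PI'$: assuming~(2a) or~(2b) for every pair of quorums in $\{Q_1,\dots,Q_{m+1}\}$, $\PI$ is safe (pairs inside $\{Q_1,\dots,Q_m\}$), constituent~(2) is safe (the pair $(Q_{m+1},Q_{m+1})$, as in the base case), each $i$-th serial constituent of~(3) is safe (the pair $(Q_i,Q_{m+1})$: (2b) makes its first block safe, (2a) makes one of its $\PI_p\otimes\PI_q$ factors safe), and the $m-1$ copies of $\PI$ are safe; so all $2m-1$ constituents of~(3), hence~(3), hence all three constituents of $\PI'$, are safe, and $\PI'$ is safe. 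Unwinding the definitions of $V^S$ and of extreme elements, ``$\PI'$ safe whenever~(2a) or~(2b) holds for every pair of quorums'' is exactly the statement that $\PI'$ is characterized by a tuple dominating $(E^S,E^L)$, closing the induction.

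The step I expect to be most delicate is verifying that inserting the $\PI_i\otimes\PI_j$ factors does not spoil liveness: a serial composition is live only when \emph{both} parts are, so one must check that in each liveness scenario invoked by the $3$-\lvl argument — all of $Q_j$ live, or all of $Q_{m+1}$ live — \emph{every} \lvs factor in the relevant serial block has one of its two chains live, which holds because each such factor is $\PI_p\otimes\PI_q$ with one of $p,q$ lying in the quorum that the scenario makes live. Minor points are the handling of empty intersections $Q_i\cap Q_{m+1}$ and of the $m=1$ degenerate case of the $(2m{-}1)$-\lvl composition, and the fact (Theorem~\ref{thm:lvs-security}) that \lvs composition inherits ``proceeds in epochs of fixed duration'' and ``generates certificates'', so that it may legitimately sit inside a $3$-\lvl composition under synchrony.
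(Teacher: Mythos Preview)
Your proof is correct and follows essentially the same inductive construction as the paper: induct on the number $m$ of liveness quorums, and at the inductive step form a $3$-\lvl composition of $\PI$, a serial block over $Q_{m+1}$, and a $(2m{-}1)$-\lvl block whose $i$-th serial constituent now also carries the \lvs factors $\PI_p\otimes\PI_q$ for $p\in Q_i,\ q\in Q_{m+1}$. The only difference is that you also sprinkle \lvs factors into the base case and into component~(2); these are harmless but redundant, since for a single quorum $Q$ condition~(2a) (some $i,j\in Q$ with $s_i=l_i=s_j=l_j=1$) already forces $i\in Q\cap\ind(s)$ and hence~(2b), so the plain serial composition over $Q$ suffices there---your extra care with the empty-intersection and $m=1$ edge cases, and with checking that every \lvs factor stays live in each liveness scenario, is on the other hand more explicit than the paper's treatment.
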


We present the proof of Theorem~\ref{lem:circuit-general-sync} in
Appendix~\ref{sec:appendix-circuit-general-sync}.
It shows achievability by constructing a circuit very similar to that constructed by Theorem~\ref{lem:circuit-general-psync}.
Intuitively, Theorem~\ref{lem:circuit-general-sync} states that for the safety of the overlay protocol, either any two quorums of underlay chains required for the liveness of the overlay protocol must both contain at least one safe and live chain (which can be different), or these quorums must intersect at a safe chain.

\subsection{Permutation Invariant Circuits for Synchrony}
\label{sec:circuit-symmetric-sync}

Following lemma proves the achievability guarantees claimed by Theorem~\ref{thm:informal-sync}.
It uses the notation of Section~\ref{sec:model-2} and follows as a corollary of Theorem~\ref{lem:circuit-general-sync}. 

\begin{theorem}[Theorem~\ref{thm:informal-sync}, Achievability]
\label{lem:circuit-symmetric-sync}
If a protocol is characterized by $(P^S,P^L)$ within
$$
\{(\{(2(k-m_l)+1,0,0),(0,0,k-m_l+1)\},\{(0,m_l,0)\})|k/2<m_l\leq k\},
$$
then there exists a permutation invariant overlay protocol characterized by a tuple dominating $(P^S,P^L)$ under synchrony.
\end{theorem}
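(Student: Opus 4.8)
The plan is to mirror the proof of Theorem~\ref{thm:informal}, adding the \lvs composition of Section~\ref{sec:lvs-composition} as the new ingredient that the partial-synchrony construction lacks. Write $f := k - m_l$; the hypothesis $m_l > k/2$ gives $0 \le f < k/2$, hence $2f+1 \le k$, and the target asks for an overlay on $k$ chains that is safe whenever at least $2f+1$ underlay chains are safe \emph{or} at least $f+1$ are both safe and live, and live whenever at least $m_l = k-f$ are live. Exactly as in the partial-synchrony argument (Lemma~\ref{lem:leave}), I would first build a ``base'' protocol on $2f+1$ chains with the analogous property and then lift it to $k$ chains by a serial composition over all size-$(2f+1)$ subsets of $[k]$. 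Throughout, Theorems~\ref{thm:serial-security},~\ref{thm:parallel-combined-security}, and~\ref{thm:lvs-security} together with Lemma~\ref{lem:root} ensure that every intermediate protocol generates certificates and proceeds in epochs of fixed duration, so all compositions are well defined.

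First I would iterate the two-chain \lvs composition to obtain, for each $j \ge 1$, a ``$j$-\lvs composition'' that is safe iff all $j$ constituents are safe and live, live iff some constituent is live, and---the property that makes the nesting go through---itself both safe and live when all $j$ constituents are safe and live (so that Theorem~\ref{thm:lvs-security}(1) can be reapplied at each level). The base protocol on $\PI_1, \dots, \PI_{2f+1}$ is then the serial composition of the $(2f+1)$-\lvl composition of all $2f+1$ chains with the $(f+1)$-\lvs composition of every size-$(f+1)$ subset. For safety: the serial composition is safe if \emph{some} component is, which holds if all $2f+1$ chains are safe (the $(2f+1)$-\lvl component, Lemma~\ref{lem:root}) or if some $f+1$ chains are safe and live (the corresponding $(f+1)$-\lvs component). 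For liveness: the serial composition is live if \emph{all} components are; if $f+1$ chains are live then the $(2f+1)$-\lvl component is live by Lemma~\ref{lem:root}, and since then at most $f$ chains fail to be live, every size-$(f+1)$ subset still contains a live chain, so every $(f+1)$-\lvs component is live.

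To lift, I would take the serial composition, over every size-$(2f+1)$ subset $T \subseteq [k]$, of the base protocol applied to $\{\PI_i : i \in T\}$. If at least $2f+1$ chains are safe, choose $T$ among them; if at least $f+1$ chains are safe and live, choose $T$ containing those $f+1$; in either case some base copy is safe, hence so is the serial composition. If at least $k-f$ chains are live then at most $f$ are not, so every size-$(2f+1)$ subset contains at least $(2f+1)-f = f+1$ live chains, making every base copy live and hence the serial composition live. Because every level enumerates \emph{all} subsets of the relevant size, and serial, $\lvl$, and $\lvs$ compositions all have safety/liveness conditions that are symmetric in their constituents (and, for serial composition, insensitive to their order), the resulting protocol's $V^S, V^L$ satisfy property (P2), i.e.\ it is permutation invariant, and it is characterized by a tuple dominating $(P^S,P^L)$. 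The degenerate case $f = 0$ ($m_l = k$) is handled directly: the plain serial composition of all $k$ chains is safe if any one is safe and live if all are live, which already dominates the target.

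The part I expect to require the most care is the $\lvs$ bookkeeping rather than the counting: since Theorem~\ref{thm:lvs-security}(1) delivers safety of a \lvs gate only when \emph{both} inputs are safe \emph{and} live, one must verify that a nested $j$-\lvs gate is not merely safe but also live whenever its constituents are safe and live, and that this ``safe-and-live'' status is precisely what lets an outer serial or $\lvl$ composition treat the gate as a safe underlay chain. As a fallback, this theorem could instead be obtained as a corollary of the general synchronous result Theorem~\ref{lem:circuit-general-sync}: translate $(P^S,P^L)$ into the permutation-closed pair $(E^S,E^L)$ with $E^L = \{(0^k,l) : |\ind(l)| = m_l\}$ and $E^S = \{(s,0^k) : |\ind(s)| = 2f+1\} \cup \{(s,s) : |\ind(s)| = f+1\}$, and check conditions (1)--(2) of that theorem via the inclusion--exclusion bounds $|\ind(l^1) \cap \ind(l^2)| \ge k - 2f$ and $|\ind(s) \cap \ind(l^i)| \ge 1$ (the latter being exactly why the threshold $f+1$, and not $f$, is the right one).
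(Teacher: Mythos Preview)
Your proposal is correct, and in fact you outline two valid routes. Your fallback---translating $(P^S,P^L)$ into a permutation-closed $(E^S,E^L)$ and verifying the hypotheses of Theorem~\ref{lem:circuit-general-sync} via the inclusion--exclusion counts---is exactly the paper's proof: the paper simply invokes Theorem~\ref{lem:circuit-general-sync} after checking conditions (1) and (2), then remarks that the resulting construction is permutation invariant.

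Your primary approach, however, is genuinely different and more explicit. The paper never builds a dedicated circuit for the symmetric synchronous case; it relies on the inductive (and rather heavy) construction inside the proof of Theorem~\ref{lem:circuit-general-sync}. Your direct circuit---serial composition of the $(2f+1)$-\lvl gate with all $\binom{2f+1}{f+1}$ iterated-\lvs gates on size-$(f+1)$ subsets, then the Lemma~\ref{lem:leave} lift to $k$ chains---is simpler, self-contained, and mirrors the partial-synchrony construction of Theorem~\ref{thm:informal} much more closely than the paper's route does. The \lvs bookkeeping you flag as delicate does go through: if all $j$ inputs are safe and live, Theorem~\ref{thm:lvs-security}(1) gives safety of the two-input gate and Theorem~\ref{thm:lvs-security}(2) gives liveness, so the nested gate is again safe-and-live and the induction closes. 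One cosmetic point: write ``safe if all $j$ constituents are safe and live'' rather than ``iff''---you only need, and only prove, the one direction.
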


\section{The Converse for Synchrony}
\label{sec:converse-sync}

We now show the optimality of our security characterization in Theorem~\ref{thm:informal-sync}.

\subsection{The Converse Result for Synchrony}
\label{sec:converse-general-sync}

We start with the converse result that applies to all overlay protocols under synchrony, showing the optimality of our security characterization in Theorem~\ref{lem:circuit-general-sync}.

\begin{theorem}
\label{lem:converse-general-sync}
Let $\PI$ be an overlay blockchain protocol. 
Let $(s^i, l^i) \in \{0,1\}^n$ for $i \in [3]$ satisfy 
$(s^1,l^1)\in V^L, (s^2,l^2)\in V^L, (s^3,l^3)\in V^S.$
Then, it holds that 
\begin{enumerate}
    \item either there are indices $i \in \ind(l^1)$ and $j \in \ind(l^2)$ such that $(s_i, l_i, s_j, l_j) = (1,1,1,1)$,
    \item or $\ind(l^1) \cap \ind(l^2)\cap \ind(s) \neq \emptyset$.
\end{enumerate}
\end{theorem}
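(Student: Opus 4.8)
The plan is to mirror the three--world argument of the partially synchronous converse (Theorem~\ref{lem:converse-general-psync}), but since the network-partition trick there exploits an unknown $\GST$ and is unavailable under synchrony, I would instead lean on the adversary's freedom to choose which underlay chains are safe and which are live, together with the observation that a chain which is \emph{not safe} may present two mutually inconsistent---yet each individually valid---executions to the two clients (this is the same phenomenon that underlies the \lvs composition). Argue by contradiction: suppose neither alternative in the conclusion holds. Negating alternative~(2) gives $\ind(l^1)\cap\ind(l^2)\cap\ind(s^3)=\emptyset$. Negating alternative~(1) says it is false that some $i\in\ind(l^1)$ and some $j\in\ind(l^2)$ both lie in $\ind(s^3)\cap\ind(l^3)$, i.e.\ $\ind(l^1)\cap\ind(s^3)\cap\ind(l^3)=\emptyset$ or $\ind(l^2)\cap\ind(s^3)\cap\ind(l^3)=\emptyset$; by the symmetry of the two liveness tuples I may assume the latter.

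\emph{Worlds 1 and 2.} As in Theorem~\ref{lem:converse-general-psync}, World~1 is a synchronous execution in which every underlay chain is safe, the chains of $\ind(l^1)$ are live and the rest are stalled, a transaction $\tx_1$ is fed at time $0$ only to validators of live chains, and underlay chains do not gossip transactions. Since $(s^1,l^1)\in V^L$ the overlay is live, so $\client_1$ outputs a ledger containing $\tx_1$ at some finite time $t_1$, and if no other transaction is injected before $t_1$ that ledger is $[\tx_1]$. World~2 is the mirror execution with $\ind(l^2)$, $\tx_2$, $\client_2$, and a time $t_2$, so $\client_2$ outputs $[\tx_2]$.

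\emph{World 3.} Build a synchronous execution in which every chain of $\ind(s^3)$ is safe and every chain of $\ind(l^3)$ is live, which is indistinguishable from World~1 to $\client_1$ up to $t_1$ and from World~2 to $\client_2$ up to $t_2$. A chain outside $\ind(s^3)$ may run its World~1 execution toward $\client_1$ and its World~2 execution toward $\client_2$; it remains live toward each client, so an $\ind(l^3)$-membership constraint on it is satisfied automatically. By $\ind(l^1)\cap\ind(l^2)\cap\ind(s^3)=\emptyset$ the central chains $\ind(l^1)\cap\ind(l^2)$ are outside $\ind(s^3)$ and can equivocate this way; by $\ind(l^2)\cap\ind(s^3)\cap\ind(l^3)=\emptyset$ every chain of $\ind(l^2)$ that must stay safe in World~3 is not forced to be live, hence can be genuinely stalled toward $\client_1$ (showing $\emptyset$, consistent with World~1) while $\tx_2$ is delivered to $\client_2$ through the equivocating chains of $\ind(l^2)\setminus\ind(s^3)$; $\tx_1$ is delivered to $\client_1$ symmetrically through a chain outside $\ind(s^3)$. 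Thus $\client_1$ outputs $[\tx_1]$ and $\client_2$ outputs $[\tx_2]$ in World~3, conflicting, which contradicts the safety guaranteed for the overlay by $(s^3,l^3)\in V^S$ under this safe/live assignment.

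\emph{Main obstacle.} The delicate part---and the reason this is harder than the partially synchronous case---is that under synchrony a chain that is \emph{forced} to be live (a member of $\ind(l^3)$, or a live chain used to deposit $\tx_1$ or $\tx_2$ into the overlay's view) cannot hide a finalized transaction from any client, so the ``stalled toward one client'' device is available only for chains outside $\ind(s^3)$. Making the argument rigorous amounts to (i) verifying that $\tx_1$ and $\tx_2$ can always be routed into the overlay's view through chains outside $\ind(s^3)$---this is precisely where both conditions obtained from negating the disjunction are needed, and the corner cases are those in which a liveness quorum lies inside $\ind(s^3)$ or inside $\ind(l^3)$---and (ii) scheduling the two transactions (each input at time $0$ on its own side but only after $t_1$, resp.\ $t_2$, on the other side) so that at the decision times the overlay ledgers are still single-transaction ledgers and hence genuinely conflict rather than one being a prefix of the other.
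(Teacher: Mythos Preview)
Your overall plan---a three-world indistinguishability argument obtained by negating both disjuncts---is exactly the paper's approach, and your negation is correct. However, your World~3 construction has a genuine gap that the ``scheduling'' fix in your obstacle paragraph does not close.

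With your WLOG choice $\ind(l^2)\cap\ind(s^3)\cap\ind(l^3)=\emptyset$, nothing prevents the existence of a chain $j^*\in(\ind(l^1)\setminus\ind(l^2))\cap\ind(s^3)\cap\ind(l^3)$. In World~3 this chain is \emph{forced} to be both safe and live, so whatever ledger it shows to $\client_1$ it must also show to $\client_2$. For $\client_1$'s view to match your World~1 you must feed $\tx_1$ into $j^*$, so $j^*$ exhibits $[\tx_1]$ to $\client_2$ as well; but in your World~2 the chain $j^*\notin\ind(l^2)$ is stalled and $\client_2$ sees~$\emptyset$ from it. Indistinguishability fails. Delaying $\tx_1$ on $j^*$ until after $t_2$ does not help: then $\client_1$ sees $\emptyset$ from $j^*$ at time $t_1$, contradicting World~1. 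Under synchrony a safe-and-live chain simply cannot present different prefixes to the two clients, so the single-transaction worlds cannot be made to work once such a $j^*$ exists.

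The paper resolves this by (i) taking the WLOG on the \emph{other} side, $\ind(l^1)\cap\ind(s^3)\cap\ind(l^3)=\emptyset$, so that every chain in $\ind(l^1)$ can be made either not safe ($P_s$) or not live ($P_l$); and more importantly (ii) injecting \emph{both} $\tx_1$ and $\tx_2$ in \emph{both} Worlds~1 and~2 and basing the contradiction on conflicting \emph{orderings} $[\tx_1,\tx_2]$ vs.\ $[\tx_2,\tx_1]$. This lets the forced safe-and-live chains in $\ind(l^2)\setminus\ind(l^1)$ carry the same ledger toward both clients while Worlds~1 and~2 are built asymmetrically (not mirror images) to accommodate them. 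Your proposal would become correct once you adopt this two-transaction/ordering device and redesign Worlds~1 and~2 around World~3 rather than starting from the clean symmetric worlds.
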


Proof of Theorem~\ref{lem:converse-general-sync} is presented in
Appendix~\ref{sec:app-converse-general-sync}, 
and relies on an indistinguishability argument between different worlds like the proof of Theorem~\ref{lem:converse-general-psync}.
Note that the converse exactly matches the clause of Theorem~\ref{lem:circuit-general-sync}.

Theorem~\ref{lem:converse-general-sync} is reduced to Theorem~\ref{lem:converse-general-psync} 
if the set of security functions are restricted to those, where safety of the overlay protocol depends only on the safety of the underlay chains, and the liveness of the overlay protocol depends only on the liveness of the underlay chains.
This is consistent with \cite[Theorem B.1]{forensics}, which proves that the accountable safety-liveness tradeoff under synchrony is the same as the safety-liveness tradeoff under partial synchrony.

\subsection{The Converse Result for Permutation Invariant Protocols under Synchrony}
\label{sec:converse-symmetric-sync}

The converse result for permutation invariant overlay protocols under synchrony as claimed in Theorem~\ref{thm:informal-sync}, follows as a corollary of Theorem~\ref{lem:converse-general-sync}.
It shows the optimality of our security characterization in Theorem~\ref{thm:informal-sync}.

\begin{theorem}[Theorem~\ref{thm:informal-sync}, Converse]
\label{lem:converse-symmetric-sync}
Let $\PI$ be a permutation invariant overlay blockchain protocol characterized by $(P^S, P^L)$.
Consider the tuples 
$(m_s, m_l, m_{sl}) \in P^L, (n_s, n_l, n_{sl}) \in P^S.$
Then, it holds that either $n_{sl} \geq k-m_l+1$ or $n_s \geq 2(k-m_l)+1$ and $m_l > k/2$.
\end{theorem}

Theorem~\ref{lem:converse-symmetric-sync} follows as a corollary of Theorem~\ref{lem:converse-general-sync}, and its proof is in
Appendix~\ref{sec:app-converse-symmetric-sync}.

Appendix~\ref{sec:appendix-pareto}
summarizes all of the results in Sections~\ref{sec:partial-synchrony},~\ref{sec:converse-psync},~\ref{sec:synchrony} and~\ref{sec:converse-sync} by identifying all pareto-optimal protocols under partial synchrony and synchrony using the language developed in Sections~\ref{sec:model-1} and~\ref{sec:model-2}.

\section{Efficiency}
\label{sec:efficiency}

Our model of interchain consensus protocols in Section~\ref{sec:prelim} allows the validators of the underlay blockchains to read the ledgers of the other underlays.
For each validator, this implies a communication load proportional to the number of underlay chains, \ie, low scalability.
However, all of our compositions (serial, \lvl and \lvs) can be modified to retain their security properties when the validators merely run light clients of the other chains.
For instance, the serial composition in Section~\ref{sec:serial} can be instantiated with \emph{succinct timestamps} as in~\cite{tas2023interchain}; so that the constituent protocols receive and order timestamps of the blocks of the other protocols rather than snapshots of the whole ledger.
Here, timestamps can even be made less frequent for more efficiency (albeit at the expense of latency).
When the timestamps are implemented with binding hash functions, their ordering suffices to resolve forks on the other chains and ensure safety as long as any chain is safe.

The \lvl construction in Section~\ref{sec:view} requires the validators of each underlay chain to only follow a smart contract, dedicated to executing the OFT protocol, on the other chains to react to their OFT protocol messages.
This again warrants at most a light client functionality.
Finally, in the \lvs construction of Section~\ref{sec:lvs-composition}, the validators of the underlay blockchains need not communicate at all except broadcasting the circuit-composition related transactions.
In turn, external observers (clients) are responsible for interleaving their ledgers.
Therefore, all three composition methods, and by induction, our circuit constructions can work with underlay validators running light clients of each other's chains.
Implementation of these constructions with light clients is left as future work.

\section{Conclusion}
\label{sec:conclusion}
In this work, we have analyzed the security of interchain consensus protocols under synchrony and partial synchrony.
We next outline a few open questions and future directions implied by our work.
As our serial composition requires the underlay chains to produce certificates and the protocols secure under the sleepy network model~\cite{sleepy} (dynamic availability~\cite{kiayias2017ouroboros}, unsized setting~\cite{lewispyeroughgardenccs}) do not generate certificates~\cite{lewispyeroughgardenccs}, our results do not extend to underlay chains secure under the (synchronous) sleepy network model (no protocol can be secure under both partial synchrony and the sleepy network model~\cite{blockchain-cap}).
It is thus an open question to design a serial composition for underlay chains that do not generate certificates.

Although we have instantiated the \lvl composition with underlay chains that proceed in fixed time durations, the composition can also work with \emph{optimistically responsive} protocols.
These protocols achieve latency that is $O(\delta)$, where $\delta$ is the real-time network delay, under optimistic conditions.
They can keep track of time with the help of an oracle committee of so-called \emph{time keepers}~\cite{trustboost} that input the real time into the protocol.
Another alternative that does not require any trust in oracles is for the smart contracts on the underlay chains to adaptively estimate time.
For instance, if the contracts notice that the overlay protocol has not made progress while the underlay protocols have, it can slow down the underlay protocols.
It is future work to formalize the details of these solutions.

Our recursive compositions of circuits could require an underlay blockchain to appear in exponentially many sub-circuits.
Our goal in this work was to show the achievability of the properties proven for the interchain consensus protocols.
For small numbers of underlay chains, our results coupled with the optimizations in Section~\ref{sec:efficiency} still yield practical constructions for the safety-favoring points.
It is an open question to design more scalable interchain consensus protocols for all points.

\nocite{full-version}

\bibliography{references}

\appendix
\section{Blockchains vs. Validators}
\label{sec:connect}

In this section, we investigate an analogue of our interchain circuit construction for validators as protocol participants rather than blockchains.
This connection was first made by Recursive Tendermint \cite{recursive-tendermint}, which proposed implementing a consensus protocol, originally developed for validators, on top of Tendermint-based blockchains representing those validators and using IBC communication for message passing.
As our construction was developed for blockchains that produce certificates, our validators can be assumed to have authenticated communication, \ie, all messages are accompanied with unforgeable signatures.
As argued in Section~\ref{sec:view}, a natural analogue of a chain that is safe, but not live is a validator with omission faults that can drop incoming and outgoing messages arbitrarily, but cannot sign conflicting messages.
In this context, what should be the fault model for a validator corresponding to a blockchain that is not safe, but live?
For such a validator, we take inspiration from the rollback attacks on trusted execution environments (TEEs), where the TEE equivocates by rolling its state back to a previous point and creating two conflicting execution traces from that point onward \cite{rollback1,rollback2,rollback3}.
In the same vein, a validator corresponding to a not safe but live chain emulates the behavior of multiple fictitious validators simultaneously.
Each fictitious validator creates an execution trace that is computationally indistinguishable from the trace of an honest validator, yet might be conflicting with the traces of other fictitious validators.
We call a validator with this fault model a \emph{split-brain} validator.

As in Section~\ref{sec:view}, when the consensus protocol has an execution environment (\eg, the Ethereum Virtual Machine), we use omission faults to also capture the behavior of a validator that outputs a \emph{single} invalid execution trace.
This is because an invalid trace is detectable and can be rejected by other validators, essentially reducing this validator to an omission-fault validator.

A natural conclusion of the above model is that a blockchain that is neither safe nor live should represent a validator that is simultaneously a split-brain and omission-fault validator.
As such, it emulates multiple fictitious validators with conflicting executions, each of which, taken independently, is an omission-fault validator.
Note that this is a strictly weaker model of protocol violation than a Byzantine adversary that can deviate from the protocol arbitrarily.
A funny, but all the same, real example of a Byzantine validator is a computer that runs the protocol code honestly along with a software that draws diagrams of blockchains.
Although such validators are not captured by the notion of a split-brain and omission-fault validator, we argue that for almost all consensus protocols, a Byzantine validator cannot do any action that is not accessible to a split-brain omission-fault validator and has an effect on the security of the consensus protocol.
Therefore, we refer to a split-brain, omission-fault validator as a Byzantine validator.

Finally, given our mapping from blockchains to validators, we can reinterpret our results in Sections~\ref{sec:protocol-primitives} and~\ref{sec:partial-synchrony} in the context of Figure~\ref{fig:achieve}.
The x-axis of the figure, which used to denote the fraction of non-live blockchains now identifies the total fraction of Byzantine or omission-fault validators.
Similarly, the y-axis of the figure, which used to denote the fraction of non-safe blockchains now identifies the total fraction of Byzantine or split-brain validators.
Based on the results in Sections~\ref{sec:protocol-primitives},~\ref{sec:partial-synchrony}, we conjecture that any point marked blue is achievable by a consensus protocol, whereas no consensus protocol can achieve the red points.

\section{Attack on Serial Composition}
\label{sec:no-certificate}

Here, we present an attack on the serial composition of non certificate-generating protocols.
For the proof of Theorem~\ref{thm:serial-security}, it is crucial that the protocol $\PI_A$ generates certificates.
To illustrate this, suppose $\PI_A$ is Nakamoto consensus, which does not generate certificates.
Then, even if there is no safety violation on $\PI_A$, the adversary can create two forks of the $\PI_A$ chain, one shorter, \ie, with less proof-of-work (PoW), than the other.
Let $L_{\mathsf{shrt}}$ and $L_{\mathsf{long}}$ denote the shorter and the longer forks of the $\PI_A$ chain respectively.
When $\PI_B$ loses safety and the clients observe multiple conflicting forks of the $\PI_B$ ledger, the adversary can include snapshots of $L_{\mathsf{shrt}}$ and $L_{\mathsf{long}}$ in different $\PI_B$ forks.
Note that there is no safety violation in $\PI_A$, as upon observing both $L_{\mathsf{shrt}}$ and $L_{\mathsf{long}}$, clients would unanimously output $L_{\mathsf{long}}$ as the finalized $\PI_A$ chain.
However, since $\PI_B$ is no longer safe, clients observing the $\PI_B$ fork containing $L_{\mathsf{short}}$ now output $L_{\mathsf{short}}$ as the $\PI_s$ ledger, and the clients observing the $\PI_B$ fork containing $L_{\mathsf{long}}$ output $L_{\mathsf{long}}$ as the $\PI_s$ ledger.
This implies a safety violation even though at least one of the blockchains, namely $\PI_A$, is still safe, contradicting with Theorem~\ref{thm:serial-security}.

\section{Impossibility of Partially Synchronous Overlay Protocols on Responsive Underlay Chains}
\label{sec:async-impossibility}

It is impossible to emulate the validators of an overlay protocol that is secure under partial synchrony on underlay protocols that do not proceed in epochs of fixed duration (\eg, on responsive \cite{roic} or optimistically responsive \cite{yin2018hotstuff} protocols).
If an overlay protocol is secure under partial synchrony and relies on timeouts, then the underlay protocols must be able to track time via epochs, \ie, should not be responsive.
On the other hand, if an overlay protocol is secure under partial synchrony and does not use timeouts, it must then be secure in an asynchronous network.
By the celebrated FLP result, if this protocol is deterministic, it must have a non-terminating execution that violates its liveness \cite{flp}.
To overcome the implications of the FLP result and ensure liveness under asynchrony, most protocols employ a common coin that generates a sequence of unpredictable coin tosses \cite{commoncoin,roic}.
However, the unpredictability of the future tosses requires validators to keep a secret and reveal it at the time of the coin toss (\eg, when common coins are implemented via threshold signatures, the validators must keep their secret signing keys private and contribute signatures at the time of the toss).
When such validators are emulated on a smart contract running on a public ledger, it is not possible to embed a secret into the contract that cannot be read until the time of the coin toss.

Although it is possible to design underlay blockchains that can keep a secret, such designs require secret sharing among the validators of the underlay blockchains as part of the consensus protocol, \ie, a redesigning of the underlays \cite{secretchain}.
In this case, an overlay protocol would rely on the validators of the underlay blockchains to reveal their secrets so that the emulated validators of the overlay can contribute unpredictable bits to the common coin.
However, this scheme violates the definition of interchain consensus protocol that only uses the blockchain ledgers and do not involve the validators of the underlay blockchains as part of the protocol apart from verifying the finality of the ledgers of the underlay blockchains (\cf Section~\ref{sec:prelim}).
Therefore, in the interchain protocol paradigm, it is impossible to emulate validators of an overlay protocol that is secure under asynchrony via smart contracts on underlay blockchains that expose public ledgers.
This implies that the overlay protocol that is secure under partial synchrony must use timeouts and the underlay protocols must proceed in epochs of fixed duration.

\section{Detailed Description of the OFT Protocol for Validators}
\label{sec:oft-protocol-detailed}

The OFT protocol is a leader-based protocol that proceeds in epochs of $3\Delta$ time.
It uses three types of messages: propose, acknowledge, and leader-down.
Let $n=2f+1$ denote the total number of validators, where $f$ is a protocol parameter.
For the parallel composition in Section~\ref{sec:view}, $f=1$ is sufficient.
A collection of acknowledge messages $\lra{\text{ack}(B,v)}$ for a block $B$ and epoch $v$ from $f+1$ validators is called an epoch $v$ certificate for block $B$ and denoted by $C_v(B)$.
We assume that the genesis block $B_0$ is a certified block at epoch $0$ for all validators.
Each validator keeps track of the proposal it has observed with the highest epoch number.

In each epoch $v$, a leader $L_v$ is selected via round-robin, \ie, the index of the leader $L_v$ is $(v \text{ mod }n)$. 
At the beginning of the epoch, $L_v$ builds a new block depending on the way it entered epoch $v$:
\begin{enumerate}
\item If $L_v$ has previously received acknowledge messages (\ie, $\lra{\text{ack}(B_{k-1},v-1)}$) for a block $B_{k-1}$ and the previous epoch $v-1$ from $f+1$ validators, \ie, an epoch $v-1$ certificate $C_{v-1}(B_{k-1})$ for block $B_{k-1}$, it builds and proposes a block $B_{k}$ with the parent pointer $h_{k}=H(B_{k-1})$. 
\item If $L_v$ has previously received leader-down messages (\ie, $\lra{\text{leader-down}(B, v-1)}$) for the previous epoch $v-1$ from $f+1$ validators, where each message carries a block $B$, it builds and proposes a block $B_{k}$ with the parent pointer $h_k=H(B_{k-1})$, where $B_{k-1}$ is the block with the maximal epoch number among the blocks within the $f+1$ leader-down messages.
\item If $L_v$ has not received $f+1$ acknowledge or leader-down messages by the beginning of epoch $v$, it does not build or propose a new block.
\end{enumerate}
As any leader must have observed either $f+1$ acknowledge (\ie, a certificate) or $f+1$ leader-down messages for epoch $v-1$ to build a block for epoch $v$, we call them \emph{tickets} for epoch $v$.

The detailed protocol description is given below:
\begin{enumerate}
\item \textbf{Enter.} Epoch $v$ starts at time $3 \Delta v$. Upon entering the epoch, $L_v$ builds a block $B_k$ and broadcasts a proposal message $\lra{\text{prop} (B_k, v)}$ as described above, along with the ticket. If $L_v$ has not received $f+1$ acknowledge or leader-down messages by the beginning of the epoch, it does not build or propose a new block.
\item \textbf{Acknowledge.} At time $3 \Delta v + \Delta$, if a validator has observed a proposal message $\lra{\text{prop}(B_k,v)}$ by the epoch leader $L_v$, it broadcasts an acknowledge message $\lra{\text{ack}(B_{k},v)}$ for the proposed block $B_k$ and epoch $v$. 
\item \textbf{Leader-down.} At time $3 \Delta v + 2\Delta$, if a validator has not yet observed $f+1$ acknowledge messages $\lra{\text{ack}(B_{k},v)}$ for a block $B_k$ proposed at epoch $v$ by $L_v$, it sends a leader-down message $\lra{\text{leader-down}(B_{k'}, v)}$ for epoch $v$, where $B_{k'}$ is the block associated with the maximal epoch number $v'$ at which the validator previously sent an acknowledge message $\lra{\text{ack}(B_{k'},v')}$.
\end{enumerate}

All validators \textbf{commit} a block $B$ proposed at some epoch $v$ by the correct leader $L_v$ and its prefix chain upon observing the proposal $\lra{\text{prop}(B,v)}$ and a certificate $C_v(B)$ for block $B$ and epoch $v$.

\section{Pareto-Optimal Overlay Protocols}
\label{sec:appendix-pareto}

We define pareto-optimality and explicitly identify all pareto-optimal protocols under partial synchrony and synchrony using the language developed in Sections~\ref{sec:model-1} and~\ref{sec:model-2} and combining the earlier results.
\begin{definition}[Pareto-Optimality]
    \label{def:pareto-optimality}
    We say that an overlay protocol $\PI$ \emph{dominates} an overlay protocol $\PI'$ (denoted by $\PI \succeq \PI'$)
    if the associated set pairs $(V^S_\PI,V^L_\PI)$ and $(V^S_{\PI'},V^L_{\PI'})$ satisfy 
    $V^S_{\PI'}\subseteq V^S_{\PI},\ V^L_{\PI'}\subseteq V^L_{\PI}.$
    We say that $\PI$ \emph{strictly dominates} $\PI'$ (denoted by $\PI \succ {\PI'}$)
    if the associated set pairs $(V^S_\PI,V^L_\PI)$ and $(V^S_{\PI'},V^L_{\PI'})$ satisfy
    $V^S_{\PI'}\subseteq V^S_{\PI},\ V^L_{\PI'}\subseteq V^L_{\PI},$
    and either $V^S_{\PI'} \subset V^S_{\PI}$ or $V^L_{\PI'} \subset V^L_{\PI}$.
    We say that an overlay blockchain protocol $\PI$ is \emph{pareto-optimal} if 
    there exists no $\PI'$ such that $\PI' \succ \PI$.
\end{definition}

\begin{theorem}[Partial Synchrony]
\label{thm:psync-formal}
Consider the set of tuples $(E^S,E^L) \subseteq 2^{\{0,1\}^{2k}}$ that satisfy the following properties:
\begin{enumerate}
    \item for all $(s,l)\in E^S, (s',l') \in E^L,$ it holds that $s' = l = 0^k$, 
    \item for all $(0^k,l^1), (0^k,l^2)\in E^L, (s,0^k)\in E^S,$ it holds that $\ind(l^1)\cap\ind(l^2)\cap\ind(s)\neq\emptyset$, and
    \item for all $(s, 0^k) \in E^S$, for all $s' < s$, there exist $(0^k,l^1), (0^k,l^2)\in E^L$ such that $\ind(l^1)\cap\ind(l^2)\cap\ind(s') = \emptyset$.
\end{enumerate}
Under a partially synchronous network, all pareto-optimal overlay protocols are characterized by the tuples $(E^S,E^L)$ within this set.
\end{theorem}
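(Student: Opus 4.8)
The plan is to prove that the three listed conditions are exactly the constraints that pareto-optimality forces on the characterization $(E^S,E^L)=(\text{exm}(V^S_\PI),\text{exm}(V^L_\PI))$ of an overlay protocol $\PI$ under partial synchrony. The two workhorses are the general achievability result (Theorem~\ref{lem:circuit-general-psync}) and the general converse (Theorem~\ref{lem:converse-general-psync}). The substantive direction is ``only if'': the fact that every $(E^S,E^L)$ in the set is the characterization of some overlay protocol up to domination is already contained in Theorem~\ref{lem:circuit-general-psync}, since conditions~(1)--(2) are precisely its hypotheses. So I would focus on showing a pareto-optimal $\PI$ satisfies (1)--(3), handling (2) first, then (1), then (3).

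Condition~(2) is immediate: any $(0^k,l^1),(0^k,l^2)\in E^L$ and $(s,0^k)\in E^S$ lie in $V^L_\PI$ and $V^S_\PI$ respectively, so Theorem~\ref{lem:converse-general-psync} applied to this triple yields $\ind(l^1)\cap\ind(l^2)\cap\ind(s)\neq\emptyset$. Notably, the same invocation of Theorem~\ref{lem:converse-general-psync}, but for \emph{arbitrary} (not necessarily pure) members of $V^L_\PI$ and $V^S_\PI$, is what will drive the separation condition.

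For condition~(1) --- that safety never leans on underlay liveness, nor liveness on underlay safety --- the key idea I would use is to project $(V^S_\PI,V^L_\PI)$ onto its pure part \emph{in one shot}, rather than purifying extreme elements one at a time (which is circular). Set $\hat E^L:=\text{exm}\{(0^k,l):(s,l)\in V^L_\PI\text{ for some }s\}$ and $\hat E^S:=\text{exm}\{(s,0^k):(s,l)\in V^S_\PI\text{ for some }l\}$; then the upward closures satisfy $V(\hat E^L)\supseteq V^L_\PI$ and $V(\hat E^S)\supseteq V^S_\PI$, and $(\hat E^S,\hat E^L)$ satisfies~(1) by construction. The point is that it also satisfies~(2): given $(0^k,l^1),(0^k,l^2)\in\hat E^L$ and $(s,0^k)\in\hat E^S$, there are $s^1,s^2,l^3$ with $(s^1,l^1),(s^2,l^2)\in V^L_\PI$ and $(s,l^3)\in V^S_\PI$, so Theorem~\ref{lem:converse-general-psync} gives $\ind(l^1)\cap\ind(l^2)\cap\ind(s)\neq\emptyset$. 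Hence Theorem~\ref{lem:circuit-general-psync} yields a protocol $\hat\PI$ with $V^S_{\hat\PI}\supseteq V(\hat E^S)\supseteq V^S_\PI$ and $V^L_{\hat\PI}\supseteq V(\hat E^L)\supseteq V^L_\PI$, i.e.\ $\hat\PI\succeq\PI$. Pareto-optimality of $\PI$ then forces $V^S_{\hat\PI}=V^S_\PI$ and $V^L_{\hat\PI}=V^L_\PI$, so $V(\hat E^S)=V^S_\PI$ and $V(\hat E^L)=V^L_\PI$; taking extreme elements, $E^S=\hat E^S$ and $E^L=\hat E^L$, which are pure.

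Condition~(3) I would obtain by a tightening argument. Assume it fails: some $(s,0^k)\in E^S$ admits $s'<s$ with $\ind(l^1)\cap\ind(l^2)\cap\ind(s')\neq\emptyset$ for all $(0^k,l^1),(0^k,l^2)\in E^L$. Since $(s,0^k)$ is extreme in $V^S_\PI$, no element of $E^S$ lies coordinatewise below $s'$, so $(s',0^k)$ is extreme in $V(E^S\cup\{(s',0^k)\})$, and the tuple $(\text{exm}(V(E^S\cup\{(s',0^k)\})),E^L)$ still satisfies~(1) and~(2) (having established~(1), $E^L$ is pure, and the single new instance of~(2) is exactly the assumed intersection for $s'$). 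Theorem~\ref{lem:circuit-general-psync} then gives $\PI'$ with $(s',0^k)\in V^S_{\PI'}$ and $V^L_{\PI'}\supseteq V^L_\PI$; as $(s',0^k)\notin V^S_\PI$, this means $\PI'\succ\PI$, contradicting pareto-optimality. I expect the main obstacle to be getting the projection in condition~(1) exactly right: verifying that the projected pure tuple still clears the hypotheses of Theorem~\ref{lem:circuit-general-psync} is where the full strength of the converse Theorem~\ref{lem:converse-general-psync} (applied to non-extreme, non-pure members of $V^S_\PI$ and $V^L_\PI$) is essential, and one must handle the behavior of $\text{exm}(\cdot)$ carefully to upgrade ``$\hat\PI\succeq\PI$ and $\PI$ pareto-optimal'' into the identification $E^S=\hat E^S$, $E^L=\hat E^L$.
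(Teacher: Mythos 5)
Your proposal is correct, and it supplies precisely the bookkeeping that the paper's own ``proof'' of Theorem~\ref{thm:psync-formal} elides: the paper's text consists of two sentences asserting that the result ``follows from'' Theorems~\ref{lem:circuit-general-psync} and~\ref{lem:converse-general-psync}, without saying how pareto-optimality actually forces conditions~(1) and~(3). Your argument closes exactly those gaps. Condition~(2) is, as you say, a one-line consequence of the general converse. Your treatment of condition~(1) is the genuine content: projecting the \emph{whole} of $V^S_\PI$ and $V^L_\PI$ onto their pure parts in one step (rather than modifying extreme elements one at a time) is the right move, because only then does the converse theorem --- applied to the non-pure witnesses $(s^1,l^1),(s^2,l^2)\in V^L_\PI$ and $(s,l^3)\in V^S_\PI$ --- certify that the projected tuple $(\hat E^S,\hat E^L)$ still satisfies the hypotheses of the achievability theorem. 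The resulting $\hat\PI\succeq\PI$ plus pareto-optimality then pins down $E^S=\hat E^S$, $E^L=\hat E^L$. The tightening argument for condition~(3) is equally sound: the observation that $\text{exm}(V(E^S\cup\{(s',0^k)\}))\subseteq E^S\cup\{(s',0^k)\}$ is what lets you verify the hypotheses of Theorem~\ref{lem:circuit-general-psync} for the enlarged $E^S$, and the incomparability of extreme elements of $V^S_\PI$ guarantees $(s',0^k)\notin V^S_\PI$, so the resulting protocol strictly dominates.

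One remark worth flagging, though it reflects on the paper's theorem statement rather than on your argument: the theorem as written imposes a minimality condition~(3) only on $E^S$, not a symmetric minimality condition on $E^L$. Your proof correctly establishes everything the theorem literally claims (the ``only if'' direction: every pareto-optimal $\PI$ yields a tuple satisfying (1)--(3)); but if one wants the set in the theorem to be \emph{exactly} the image of the pareto-optimal protocols under the characterization map (i.e.\ the ``if'' direction as well), then an analogous irreducibility requirement on $E^L$ would seem to be needed, and the same tightening argument you use for~(3) would be the way to derive it. This is not a defect in your proof --- you prove precisely what is stated --- but it is worth noting that the achievability half, which you dispatch in a sentence as ``contained in Theorem~\ref{lem:circuit-general-psync}, up to domination,'' would need this extra condition and some additional care with $\text{exm}(\cdot)$ to upgrade ``dominated by some protocol'' into ``equals the characterization of some pareto-optimal protocol.''
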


Theorem~\ref{thm:psync-formal} follows from Theorems~\ref{lem:circuit-general-psync} and~\ref{lem:converse-general-psync}.
Theorem~\ref{lem:circuit-general-psync} constructs overlay protocols characterized by the $(E^S,E^L)$ tuples in the set described above under partial synchrony.
Then, Theorem~\ref{lem:converse-general-psync} shows that no overlay protocol can achieve the security properties required of any element outside this set under partial synchrony.

\begin{theorem}[Synchrony]
\label{thm:sync-formal}
Consider the set of tuples $(E^S,E^L) \subseteq 2^{\{0,1\}^{2k}}$ that satisfy the following properties:
\begin{enumerate}
    \item For all $(s,l)\in E^L$ it holds that $s = 0^k$.
    \item For all $(0^k,l^1), (0^k,l^2)\in E^L, (s,l)\in E^S$, it holds that 
    \begin{enumerate}
        \item either there are indices $i \in \ind(l^1)$, $j \in \ind(l^2)$ such that $(s_i, l_i, s_j, l_j) = (1,1,1,1)$,
        \item or $\ind(l^1) \cap \ind(l^2)\cap \ind(s) \neq \emptyset$. 
    \end{enumerate}
    \item For all $(s, l) \in E^S$, for all $(s',l') < (s, l)$, there exist $(0^k,l^1),\\ (0^k,l^2) \in E^L$ such that 
    \begin{enumerate}
        \item for any $(i,j)$ such that $i \ind(l^1)$, $j \in \ind(l^2)$, $(s'_i, l'_i, s'_j, l'_j) \neq (1,1,1,1)$,
        \item and $\ind(l^1)\cap\ind(l^2)\cap\ind(s') = \emptyset$. 
    \end{enumerate}
\end{enumerate}
Under a partially synchronous network, all pareto-optimal overlay protocols are uniquely described by $(P^S,P^L)$ within this set.
\end{theorem}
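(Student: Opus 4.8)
The plan is to follow the template used for Theorem~\ref{thm:psync-formal}, substituting the synchronous building blocks: Theorem~\ref{lem:circuit-general-sync} for achievability and Theorem~\ref{lem:converse-general-sync} for impossibility. (We read the statement under the evident corrections: the network is synchronous, the tuples are the $(E^S,E^L)$ of conditions 1--3, and condition 3(a) should read $i\in\ind(l^1)$.) Let $\mathcal{P}$ denote the set of tuples $(E^S,E^L)$ satisfying conditions 1--3. I would establish: (a) every tuple in $\mathcal{P}$ is the characterization of an overlay protocol; (b) that protocol is pareto-optimal; (c) conversely, every pareto-optimal overlay protocol has its characterization in $\mathcal{P}$; and (d) $\mathcal{P}$ is an antichain under $\succeq$, so that (a)--(c) deliver the ``all pareto-optimal protocols are uniquely described by'' phrasing.

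For (a) and (b): conditions 1 and 2 are exactly the hypotheses of Theorem~\ref{lem:circuit-general-sync}, which yields an overlay protocol $\PI$ whose characterization dominates $(E^S,E^L)$. To upgrade this to an \emph{exact} characterization and to pareto-optimality at once, I would use Theorem~\ref{lem:converse-general-sync}: the characterization of \emph{any} overlay protocol satisfies condition 2, and it may be assumed to satisfy condition 1 as well, since replacing each extreme $(s,l)\in E^L$ by $(0^k,l)$ only enlarges $V^L$ and is never obstructed --- the conclusion of Theorem~\ref{lem:converse-general-sync} does not constrain the safety component of a liveness element. Condition 3 then says precisely that $(E^S,E^L)$ admits no strict enlargement (in either coordinate, once one passes to (P1)-closures and re-extracts extreme elements) that still meets conditions 1--2. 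Hence no protocol characterization strictly dominates $(E^S,E^L)$; so $\PI$'s characterization equals $(E^S,E^L)$ and $\PI$ is pareto-optimal.

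For (c): let $\PI$ be pareto-optimal with characterization $(E^S,E^L)$. Condition 2 is immediate from Theorem~\ref{lem:converse-general-sync}, and (P1) is built into the definition of a characterization. If condition 1 failed --- some extreme $(s,l)\in E^L$ with $s\neq 0^k$ --- then replacing it by $(0^k,l)$ strictly enlarges $V^L$ while (by the observation above) preserving conditions 1--2, so Theorem~\ref{lem:circuit-general-sync} realizes a strictly dominating protocol, contradicting pareto-optimality. If condition 3 failed --- some extreme $(s,l)\in E^S$ admits $(s',l')<(s,l)$ that no pair of liveness extreme elements uses to violate condition 2 --- then adjoining $(s',l')$ to $V^S$ strictly enlarges it while preserving conditions 1--2, since the only new triples to check in condition 2 involve $(s',l')$ and the failure of condition 3 is exactly the assertion that these are fine; so Theorem~\ref{lem:circuit-general-sync} again produces a strict dominator, a contradiction. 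Hence $(E^S,E^L)\in\mathcal{P}$.

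The step I expect to be the main obstacle is the ``enlargement'' bookkeeping: after adjoining a new minimal element to $V^S$ (or $V^L$) and re-extracting $\text{exm}(\cdot)$, one must verify both that the set genuinely grew and that no previously-present extreme element now participates in a fresh violation of condition 2. The clean route is to note that ``(a) or (b)'' in condition 2 is monotone in each of its three vector arguments, so it suffices to test it on extreme elements and, under an enlargement, only on triples containing the new element --- which is what the negation of condition 3 (respectively, the irrelevance of the liveness safety-component in Theorem~\ref{lem:converse-general-sync}) supplies. A secondary subtlety, needed for (d), is confirming that enlarging $E^L$ always strictly shrinks the compatible-maximal $E^S$, so that $\mathcal{P}$ has no two comparable tuples; I would argue this by pairing the enlarged $V^L$ against the old $V^S$, observing condition 2 still holds (by monotonicity), and invoking condition 3 for the old tuple to force the strict shrink.
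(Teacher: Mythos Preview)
Your proposal is correct and follows essentially the same approach as the paper: both derive the result from Theorem~\ref{lem:circuit-general-sync} (achievability) and Theorem~\ref{lem:converse-general-sync} (impossibility), exactly parallel to how Theorem~\ref{thm:psync-formal} is derived from Theorems~\ref{lem:circuit-general-psync} and~\ref{lem:converse-general-psync}. The paper's own proof is a two-sentence reference to these two theorems, whereas you spell out the bookkeeping (condition~3 as maximality, the monotonicity check after enlargement, the antichain verification) that the paper leaves implicit.
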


Theorem~\ref{thm:sync-formal} follows from Theorems~\ref{lem:circuit-general-sync} and~\ref{lem:converse-general-sync}.
Theorem~\ref{lem:circuit-general-sync} constructs overlay protocols characterized by the $(E^S,E^L)$ tuples in the set described above under synchrony.
Then, Theorem~\ref{lem:converse-general-sync} shows that no overlay protocol can achieve the security properties required of any element outside this set under partial synchrony.

\begin{theorem}[Permutation Invariance, Partial Synchrony]
\label{thm:perm-invariant-psync-formal}
Under a partially synchronous network, all pareto-optimal permutation invariant overlay protocols are characterized by $(P^S,P^L)$ within:
$$
\{(\{(2(k-m_l)+1,0,0)\},\{(0,m_l,0)\})|k/2<m_l\leq k\}
$$
\end{theorem}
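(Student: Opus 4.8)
The plan is to derive this characterization by combining the achievability result (Theorem~\ref{thm:informal}, restated with the $(P^S,P^L)$ language in the proof of Theorem~\ref{lem:circuit-symmetric-sync}) with the converse (Theorem~\ref{lem:converse-symmetric-psync}), and then arguing that every point not of the stated form is either unachievable or strictly dominated by a point of the stated form. First I would set up notation: a permutation invariant protocol under partial synchrony is characterized by $(P^S,P^L)$, and by the monotonicity property (P1) together with permutation invariance (P2), each of $P^S$ and $P^L$ is a downward-closed-in-$\leq$ antichain of tuples in $\mbN^3$. The key reduction is that, for partial synchrony, Theorem~\ref{lem:converse-symmetric-psync} forces any $(n_s,n_l,n_{sl})\in P^S$ and $(m_s,m_l,m_{sl})\in P^L$ to satisfy $m_l>k/2$ and $n_s\geq 2(k-m_l)+1$; crucially $n_l$, $n_{sl}$, $m_s$, $m_{sl}$ play no role in this constraint.

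The main argument would proceed in two directions. For the "only pareto-optimal ones are of this form" direction: take any pareto-optimal $(P^S,P^L)$. I would first argue $P^L$ must be a singleton equivalence class $\{(0,m_l,0)\}$ for a single value $m_l$ with $k/2<m_l\leq k$ — any element with $m_s>0$ or $m_{sl}>0$ is dominated (in the $V(P^L)$ sense, via Lemma~\ref{lem:partial_order}) by $(0,m_l',0)$ for a suitable $m_l'$, since requiring safety of underlay chains for overlay liveness only shrinks $V^L$; and if $P^L$ contained two genuinely different tuples, the one with larger liveness requirement could be dropped or the protocol would not be extremal. Then, given $P^L=\{(0,m_l,0)\}$, the converse says every $(n_s,n_l,n_{sl})\in P^S$ has $n_s\geq 2(k-m_l)+1$, so $(2(k-m_l)+1,0,0)\in V(P^S)$ is the weakest possible safety requirement; pareto-optimality forces $P^S=\{(2(k-m_l)+1,0,0)\}$ exactly (any larger $n_s$ shrinks $V^S$; any nonzero $n_l$ or $n_{sl}$ also shrinks $V^S$ and is unnecessary by the converse). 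For the "each such point is achievable" direction: for fixed $m_l$ with $k/2<m_l\leq k$, Theorem~\ref{thm:informal} gives an achievable tuple $(k,s,l)$ with $l=m_l$ and $s=2(k-m_l)+1$, and one checks the resulting protocol is permutation invariant and characterized by exactly $(\{(2(k-m_l)+1,0,0)\},\{(0,m_l,0)\})$.

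The step I expect to be the main obstacle is the rigorous elimination of "cross-dependence" terms — i.e., showing that a pareto-optimal protocol cannot gain anything by having $P^S$ elements with $n_l>0$ or $n_{sl}>0$ (or $P^L$ elements with $m_s,m_{sl}>0$). Intuitively this is exactly the content of Theorem~\ref{lem:converse-symmetric-psync} — liveness of underlay chains does not help overlay safety — but turning "does not help" into "is strictly dominated, hence not pareto-optimal" requires care: one must exhibit, for any candidate with such a term, a concrete protocol (via the circuit construction of Theorem~\ref{thm:informal}) whose $(V^S,V^L)$ strictly contains the candidate's. Here I would invoke Lemma~\ref{lem:partial_order} to translate the $V(\cdot)$-containment into the pointwise $\leq$ comparison of tuples, reducing everything to elementary arithmetic on the three coordinates $(n_s,n_l,n_{sl})$, using that $2(k-m_l)+1\leq k$ whenever $m_l\geq\flr{k/2}+1$ so the safety requirement is actually realizable with $k$ chains.
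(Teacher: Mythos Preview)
Your proposal is correct and follows the same approach as the paper, which simply states that the theorem follows from the achievability result (Theorem~\ref{thm:informal}) together with the converse (Theorem~\ref{lem:converse-symmetric-psync}). Your write-up is in fact more careful than the paper's one-paragraph sketch: the ``cross-dependence elimination'' step you flag as the main obstacle (showing that nonzero $n_l,n_{sl},m_s,m_{sl}$ terms cannot appear in a pareto-optimal $(P^S,P^L)$) is glossed over in the paper, whereas you correctly identify Lemma~\ref{lem:partial_order} as the tool and outline the domination argument---set $m_l^*=\min\{m_l:(m_s,m_l,m_{sl})\in P^L\}$ and compare against the achievable point $(\{(2(k-m_l^*)+1,0,0)\},\{(0,m_l^*,0)\})$.
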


Theorem~\ref{thm:perm-invariant-psync-formal} follows from Theorems~\ref{thm:informal} and~\ref{thm:impos}, \ie, Theorem~\ref{lem:converse-symmetric-psync}.
Theorem~\ref{thm:impos} constructs overlay protocols for all $(P^S,P^L)$ in the set $\{(\{(2(k-m_l)+1,0,0)\},\{(0,m_l,0)\})|k/2<m_l\leq k\}$ under partial synchrony.
Theorem~\ref{thm:impos} then shows that no overlay protocol can achieve the security properties required of any element outside this set under partial synchrony.

\begin{theorem}[Permutation Invariance, Synchrony]
\label{thm:perm-invariant-sync-formal}
Under a synchronous network, all pareto-optimal permutation invariant overlay protocols are characterized by $(P^S,P^L)$ within:
$$
\{(\{(2(k-m_l)+1,0,0),(0,0,k-m_l+1)\},\{(0,m_l,0)\})|k/2<m_l\leq k\}
$$
\end{theorem}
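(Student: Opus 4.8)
The plan is to mirror exactly the structure used to establish Theorem~\ref{thm:perm-invariant-psync-formal}, but now drawing on the synchronous results rather than the partially synchronous ones. The statement asserts that the pareto-optimal permutation invariant overlay protocols under synchrony are precisely those characterized by
$$
(P^S, P^L) = \big(\{(2(k-m_l)+1, 0, 0),\ (0, 0, k-m_l+1)\},\ \{(0, m_l, 0)\}\big)
$$
for $k/2 < m_l \leq k$. As with the partially synchronous case, this is a two-sided claim: achievability of every member of this family, and a converse showing that nothing strictly outside this family (in the partial order $\succeq$ of Section~\ref{sec:model-2}) is achievable, and moreover that every achievable point is dominated by a member of the family.

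\emph{Achievability.} This direction is immediate from Theorem~\ref{lem:circuit-symmetric-sync} (Theorem~\ref{thm:informal-sync}, Achievability), which already constructs, for each $m_l$ with $k/2 < m_l \leq k$, a permutation invariant overlay protocol characterized by a tuple dominating exactly this $(P^S, P^L)$. So for the forward direction I would simply invoke that theorem and note that the protocols it yields are in fact pareto-optimal, which will follow once the converse is in place.

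\emph{Converse / pareto-optimality.} Here I would argue that any pareto-optimal permutation invariant $\PI$ characterized by $(P^S, P^L)$ must be dominated by — hence, being pareto-optimal, must equal up to domination — a member of the stated family. Fix any $(m_s, m_l, m_{sl}) \in P^L$. First, since a permutation invariant protocol that is live whenever $m_l$ chains are live (regardless of their safety) exists and requiring additional safety of underlay chains only shrinks $V^L$, any pareto-optimal $\PI$ can replace every element of $P^L$ by one of the form $(0, m_l, 0)$; and by the converse Theorem~\ref{lem:converse-symmetric-sync}, it must have $m_l > k/2$ for $V^L$ to be nonempty in a useful way, and indeed a single liveness quorum $\{(0, m_l^\star, 0)\}$ with $m_l^\star$ minimal suffices to describe $V^L$ (any smaller-quorum point would let us push further, contradicting that $(0, m_l^\star, 0)$ is extreme). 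Now apply Theorem~\ref{lem:converse-symmetric-sync} to every $(n_s, n_l, n_{sl}) \in P^S$: each such tuple must satisfy $n_{sl} \geq k - m_l^\star + 1$ \emph{or} ($n_s \geq 2(k-m_l^\star)+1$ and $m_l^\star > k/2$). Taking extreme elements, this forces every extreme element of $V^S$ to dominate either $(2(k-m_l^\star)+1, 0, 0)$ or $(0, 0, k-m_l^\star+1)$, so $(P^S, P^L) \preceq \big(\{(2(k-m_l^\star)+1,0,0),(0,0,k-m_l^\star+1)\}, \{(0,m_l^\star,0)\}\big)$. Pareto-optimality of $\PI$ together with achievability of the right-hand family then forces equality (up to domination), completing the characterization.

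\emph{Main obstacle.} The delicate step is the reduction of a general $P^L$ (a set of tuples $(m_s, m_l, m_{sl})$, possibly many, possibly with safety cross-dependence) to the single canonical liveness quorum $(0, m_l^\star, 0)$ — i.e., showing that any pareto-optimal permutation invariant protocol's liveness really is described by one such point and that cross-dependence on underlay safety buys nothing for liveness under synchrony. This needs a short argument using permutation invariance (P2), property (P1), and the existence side (one can always realize $\{(0, m_l, 0)\}$ for $m_l > k/2$ via the construction of Theorem~\ref{lem:circuit-symmetric-sync}), plus the observation that if two distinct liveness points $(0, m_l^1, 0)$ and $(0, m_l^2, 0)$ with $m_l^1 < m_l^2$ were both extreme the larger is redundant. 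Once that normalization is done, the rest is a direct bookkeeping application of Theorem~\ref{lem:converse-symmetric-sync} to each extreme safety tuple, exactly paralleling the $V^S$-side argument sketched for Theorem~\ref{thm:perm-invariant-psync-formal}.
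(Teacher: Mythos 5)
Your proposal is correct and follows essentially the same route as the paper, which proves this theorem simply by combining the achievability result of Theorem~\ref{lem:circuit-symmetric-sync} with the converse of Theorem~\ref{lem:converse-symmetric-sync}. In fact, your write-up supplies more of the glue than the paper does — in particular the normalization of $P^L$ to a single extreme liveness tuple $(0,m_l^\star,0)$ and the explicit domination argument, steps the paper leaves implicit in its two-sentence citation of those theorems.
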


Theorem~\ref{thm:perm-invariant-sync-formal} follows from Theorem~\ref{thm:informal-sync}, \ie, Theorems~\ref{lem:circuit-symmetric-sync} and~\ref{lem:converse-symmetric-sync}.
Theorem~\ref{lem:circuit-symmetric-sync} constructs overlay protocols for all extreme elements in $(P^S,P^L) = \{(\{(2(k-m_l)+1,0,0),(0,0,k-m_l+1)\},\{(0,m_l,0)\})|k/2<m_l\leq k\}$ under synchrony.
Theorem~\ref{lem:converse-symmetric-sync} shows that no overlay protocol can achieve the security properties of the elements outside the set under synchrony.

\section{Security Proofs}
\label{sec:proofs}

\subsection{Proof of Theorem~\ref{thm:serial-security}}
\label{sec:proof-serial}

The proof uses the following propositions:
\begin{proposition}
\label{prop:consistency-2}
Consider two different ledgers $L^x$ and $L^y$.
If $L_x \preceq L_y$, then $L_x \preceq \clean(L^x,L^y)$.
\end{proposition}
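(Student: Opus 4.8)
The plan is to read the conclusion off directly from the definition of the sanitization function $\clean$, by showing that its first argument always survives verbatim as a prefix of the output; notably, the hypothesis $L^x \preceq L^y$ will not actually be needed for this, although it does yield a slightly stronger equality that I would record in passing.

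First I would recall that $\clean(L^x, L^y)$ is formed by taking the concatenation $L^x \,||\, L^y$ and then, for each transaction occurring more than once in this sequence, deleting every occurrence except the earliest. Next I would invoke the fact that a ledger output by a client is a sequence of \emph{distinct} transactions: a transaction that is invalid as a function of its prefix --- in particular, a repeated one --- makes the client truncate its ledger before that point, per the validity convention of Section~\ref{sec:prelim}. Hence the first $|L^x|$ entries of $L^x \,||\, L^y$, namely the entries of $L^x$ in order, are pairwise distinct and each is the earliest occurrence of its transaction in the concatenation, so none of them is deleted. Therefore $\clean(L^x, L^y)$ has the form $L^x \,||\, \widetilde{L}$, where $\widetilde{L}$ is the subsequence of $L^y$ consisting of the transactions not already present in $L^x$, and in particular $L^x \preceq \clean(L^x, L^y)$, as claimed.

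I would then note that, using the hypothesis $L^x \preceq L^y$, every transaction of $L^x$ already appears in $L^y$, so $\widetilde{L}$ is exactly the portion of $L^y$ after its first $|L^x|$ entries, and in fact $\clean(L^x, L^y) = L^y$; but this is more than the statement requires. The only delicate point is the distinctness of ledger entries, which is the sole place I would rely on the Section~\ref{sec:prelim} convention; alternatively, adopting the convention that $\clean$ retains the \emph{leftmost} occurrence of each transaction in $L^x \,||\, L^y$ makes the entries of $L^x$, being the leftmost block, retained by inspection. I do not anticipate a genuine obstacle here: the proposition is essentially an unfolding of the definition of $\clean$, and the substantive work is reserved for the companion monotonicity statements that feed into the full consistency proof of the serial composition.
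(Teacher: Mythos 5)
Your proof is correct and takes essentially the same route as the paper, which simply asserts that the claim follows from the definition of $\clean$; your write-up is just a careful unfolding of that definition (the entries of $L^x$ are the leftmost occurrences in the concatenation and hence are retained, so $L^x$ is a prefix of the output). Your observation that the hypothesis $L^x \preceq L^y$ is not needed for the prefix claim, and that it in fact yields $\clean(L^x,L^y)=L^y$, is accurate and consistent with how the paper uses the proposition.
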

Proof follows from the definition of the sanitization function $\clean(.,.)$.

\begin{proposition}
\label{prop:consistency}
Consider three different ledgers $L^x$, $L^y$ and $L^z$.
If the ledgers $L^x$, $L^y$ and $L^z$ are all consistent with each other, then $\clean(L^x,L^y)$ is consistent with $L^z$.
\end{proposition}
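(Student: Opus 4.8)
The plan is to argue directly from the definition of $\clean(\cdot,\cdot)$ together with the transitivity of consistency. Recall that $\clean(L^x, L^y)$ is the concatenation $L^x \| L^y$ with duplicate transactions appearing in $L^y$ removed (keeping the first occurrence, which lies in the $L^x$ portion). So as a sequence, $\clean(L^x,L^y)$ is $L^x$ followed by the subsequence of $L^y$ consisting of transactions not already in $L^x$. I would first observe that $\clean(L^x, L^y)$ has $L^x$ as a prefix, and that every transaction appearing in it appears in $L^x$ or in $L^y$.

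The case analysis is driven by how $L^z$ relates to $L^x$ and $L^y$. Since $L^x, L^y$ are consistent, WLOG $L^x \preceq L^y$ (the symmetric case is identical after relabeling); by Proposition~\ref{prop:consistency-2} this already gives $L^x \preceq \clean(L^x, L^y)$. Now split on the relation between $L^z$ and $L^x$. If $L^z \preceq L^x$, then $L^z \preceq L^x \preceq \clean(L^x,L^y)$ and we are done. The remaining case is $L^x \preceq L^z$ (using consistency of $L^x$ and $L^z$). Here I would use consistency of $L^z$ with $L^y$: either $L^z \preceq L^y$, in which case $L^z$ is a prefix of $L^y$ extending $L^x$, so every transaction of $L^z$ beyond $L^x$ is a transaction of $L^y$ not in $L^x$, hence is retained by $\clean$ in the same order, giving $L^z \preceq \clean(L^x,L^y)$; or $L^y \preceq L^z$, in which case $\clean(L^x,L^y)$ consists of $L^x$ followed by all of $L^y$'s tail transactions (relative to $L^x$), which is exactly a prefix of $L^z$ (since $L^x \preceq L^y \preceq L^z$), so again $\clean(L^x,L^y) \preceq L^z$. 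In every case $\clean(L^x,L^y)$ and $L^z$ are consistent.

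The one point requiring a little care — and the main obstacle — is the claim that when $L^x \preceq L^z \preceq L^y$, the transactions of $L^z$ lying strictly after $L^x$ survive the deduplication step of $\clean$ and appear in the same relative order, so that $L^z$ really is a prefix of $\clean(L^x,L^y)$ and not merely a subsequence. This relies on two facts about ledgers: transactions within a valid ledger are not internally duplicated (a transaction invalid as a repeat would be dropped, or ledgers are assumed duplicate-free as orderings of a chain's blocks), so no tail transaction of $L^z$ already occurs inside $L^x$; and $\clean$ preserves the order of the surviving $L^y$ entries. Granting these — which follow from how blocks/ledgers are defined in Section~\ref{sec:prelim} and the stated behavior of $\clean$ — the argument goes through. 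I would state this deduplication/order-preservation observation as a one-line sublemma and then assemble the four cases as above.
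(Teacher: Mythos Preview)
Your argument is correct, but it is considerably more elaborate than the paper's. The key observation you miss is that when $L^x$ and $L^y$ are consistent, $\clean(L^x,L^y)$ is simply equal to one of them: if $L^x \preceq L^y$ then $\clean(L^x,L^y)=L^y$ (the tail of $L^y$ beyond $L^x$ survives deduplication in full and in order, so the result is exactly $L^y$), and if $L^y \preceq L^x$ then every transaction of $L^y$ is already in $L^x$ and $\clean(L^x,L^y)=L^x$. In either case the consistency with $L^z$ is immediate from the hypothesis. This collapses your four-case analysis into a single line, and also dispenses with the deduplication/order-preservation sublemma you flagged as the main obstacle.

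One small wrinkle in your write-up: saying the case $L^y \preceq L^x$ is ``identical after relabeling'' is not quite right, since $\clean$ is not symmetric in its two arguments; relabeling would have you analyze $\clean(L^y,L^x)$ rather than $\clean(L^x,L^y)$. The actual treatment of that case is even easier than the one you wrote out (it gives $\clean(L^x,L^y)=L^x$ directly), so this is a presentational slip rather than a real gap.
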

\begin{proof}
Without loss of generality, we can assume that $L^x \preceq L^y$. Then, we have $\clean(L^x,L^y)=L^y$. 
As $L^y$ and $L^z$ are consistent, it follows that $\clean(L^x,L^y)$ is consistent with $L^z$. 
\end{proof}

\begin{proof}[Proof of Theorem~\ref{thm:serial-security}]
We first prove the safety claim.
Suppose $\PI_A$ satisfies safety (Fig.~\ref{fig:serial}b).
Consider two clients $x$ and $y$ (which can be the same client) who observe the (potentially conflicting) $\PI_B$ ledgers $L^{x}_{B,t_1}$ and $L^{y}_{B,t_2}$ at times $t_1$ and $t_2$ respectively.
As $\PI_A$ generates certificates, all $\PI_A$ ledgers with certificates of finality are consistent with each other.
Therefore, all certified snapshots $\snp_i$, $i = 1,2, \ldots$, of the $\PI_A$ ledger appearing within $L^{x}_{B,t_1}$ are consistent with each other and the certified snapshots $\snp'_i$, $i = 1,2, \ldots$, of the $\PI_A$ ledger appearing within $L^{y}_{B,t_2}$ (and vice versa).
Note that to obtain the $\PI_s$ ledgers $L^{x}_{s,t_1}$ and $L^{y}_{s,t_2}$, clients $x$ and $y$ iteratively run the sanitization step at Line~\ref{line:sanitization} of Alg.~\ref{alg.serial} on the snapshots $\snp_i$, $i = 1,2, \ldots$, and $\snp'_i$, $i = 1,2, \ldots$, respectively.
Then, by iteratively applying Proposition~\ref{prop:consistency} on these snapshots, we observe that the $\PI_s$ ledger $L^{x}_{s,t_1}$ obtained by client $x$ at time $t_1$ is consistent with the $\PI_s$ ledger $L^{y}_{s,t_2}$ obtained by client $y$ at time $t_2$.
Hence, for any clients $x$ and $y$ and times $t_1$ and $t_2$, the $\PI_s$ ledgers in the view of $x$ and $y$ at times $t_1$ and $t_2$ are consistent.

Now, suppose $\PI_B$ satisfies safety (Fig.~\ref{fig:serial}c).
Then, for any two clients $x$ and $y$ and times $t_1$ and $t_2$, the $\PI_B$ ledgers $L^{x}_{B,t_1}$ and $L^{y}_{B,t_2}$ are consistent.
This implies that the sequence of certified snapshots $(\snp_1, \ldots)$ of the $\PI_A$ ledger appearing within $L^{x}_{B,t_1}$ is a prefix of the sequence of certified snapshots $(\snp'_1, \ldots)$ of the $\PI_A$ ledger appearing within $L^{y}_{B,t_2}$ (or vice versa).
Again, to obtain the $\PI_s$ ledgers $L^{x}_{s,t_1}$ and $L^{y}_{s,t_2}$, clients $x$ and $y$ iteratively run the sanitization step at Line~\ref{line:sanitization} of Alg.~\ref{alg.serial} on the snapshots $\snp_i$, $i = 1,2, \ldots$, and $\snp'_i$, $i = 1,2, \ldots$, respectively.
Then, by iteratively applying Proposition~\ref{prop:consistency-2} on these snapshots, we observe that the $\PI_s$ ledger $L^{x}_{s,t_1}$ obtained by client $x$ at time $t_1$ is consistent with the $\PI_s$ ledger $L^{y}_{s,t_2}$ obtained by client $y$ at time $t_2$.
Hence, it holds that for any clients $x$ and $y$ and times $t_1$ and $t_2$, the $\PI_s$ ledgers in the view of $x$ and $y$ at times $t_1$ and $t_2$ are consistent.

Finally, when both $\PI_A$ and $\PI_B$ are live with parameter $\Tconf$ after GST, if the environment inputs $\PI_s$ a transaction $\tx$ at time $t$, the transaction appears and stays in the certified $\PI_A$ ledger of an honest $\PI_A$ validator by time $\max(\GST, t) + \Tconf$ and is input to $\PI_B$.
Then, it appears and stays in the $\PI_B$ ledger of every client as part of a certified $\PI_A$ snapshot by time $\max(\GST, t) + 2\Tconf$.
Note that the first instance of $\tx$ cannot be sanitized while clients generate their $\PI_s$ ledgers from their $\PI_B$ ledgers.
Hence, $\tx$ appears and stays in the $\PI_s$ ledgers of all clients by time $\max(\GST, t) + 2\Tconf$.

When both $\PI_A$ and $\PI_B$ generate certificates, so does $\PI_s$; since the $\PI_s$ ledger is extracted from the certified $\PI_B$ ledger and the certified $\PI_A$ snapshots within that ledger.
Moreover, when both $\PI_A$ and $\PI_B$ proceed in epochs of fixed duration, so does $\PI_s$; as both the $\PI_B$ ledger and the snapshots therein grow in discrete time steps.
\end{proof}

\subsection{Proof of Theorem~\ref{thm:parallel-combined-security}}
\label{sec:proof-parallel}

When all blockchains generate certificates, so does $\PI_p$; since the overlay protocol generates certificates (by virtue of being secure under partial synchrony).
Moreover, when all blockchains proceed in epochs of fixed duration, so does $\PI_p$; as the overlay protocol proceeds in epochs of fixed duration.

Given the above observations for points (3) and (4) in Theorem~\ref{thm:parallel-combined-security}, proof of Theorem~\ref{thm:parallel-combined-security} follows from Theorems~\ref{thm:safe} and~\ref{thm:live}.
In the subsequent proofs, we only consider validators emulated by smart contracts on the finalized and valid $\PI_i$ ledgers observed by clients (\ie, emulated overlay validators).
When we talk about two separate actions by a validator, these actions might have appeared in different execution traces observed by different clients or the same client at different times.
We use the same notation as in Appendix~\ref{sec:oft-protocol-detailed} for the overlay protocol specific messages exchanged among the validators via the CCC abstraction.
All blocks referred below are blocks of the overlay protocol.

\begin{proposition}
\label{prop:safe-chain-1}
If a blockchain is safe, the validator emulated by the smart contract running on this blockchain does not send two conflicting vote messages $\lra{\text{vote}(B,v)}$ and $\lra{\text{vote}(B',v)}$ with the same epoch $v$, in the views of any (potentially the same) clients.
Similarly, no leader $L_v$ of any epoch $v$ proposes two conflicting blocks for $v$, in the views of any (potentially the same) clients.
\end{proposition}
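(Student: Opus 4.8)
The plan is to exploit the fact that the emulated validator on an underlay chain $\PI_i$ is a \emph{deterministic} program whose outputs are a function only of $\PI_i$'s finalized ledger, together with safety of that ledger. First I would recall what safety of $\PI_i$ buys us: by the security definition in Section~\ref{sec:prelim}, for any two clients $x,y$ and any times $t_1,t_2$ the finalized ledgers $L^x_{i,t_1}$ and $L^y_{i,t_2}$ are consistent, i.e.\ one is a prefix of the other (this already subsumes the ``potentially the same client at different times'' case). Next I would recall how a client reconstructs the emulated validator's behavior: it replays the smart contract on the \emph{valid} prefix of the finalized ledger it sees, truncating at the first invalid transaction, and reads off the votes it ``sent'' and (if it is $L_v$) the blocks it ``proposed''. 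Because the smart contract faithfully runs the OFT code of Appendix~\ref{sec:oft-protocol-detailed}, along any single such execution trace the emulated validator broadcasts at most one acknowledge/vote message tagged with epoch $v$, and the epoch-$v$ leader builds and proposes at most one block for $v$.

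The core step is to argue that the two execution traces derived by $x$ and $y$ are \emph{nested}: the trace obtained from the shorter valid ledger is a prefix of the trace obtained from the longer one. Given nesting, both claims are immediate. Suppose $x$ sees $\lra{\text{vote}(B,v)}$ and $y$ sees $\lra{\text{vote}(B',v)}$ for the same epoch $v$; without loss of generality $x$'s trace is a prefix of $y$'s, so the event ``send $\lra{\text{vote}(B,v)}$'' also occurs in $y$'s trace, which contains at most one epoch-$v$ vote, forcing $B=B'$. The identical argument applied to the single epoch-$v$ proposal in each trace shows $L_v$ does not propose two conflicting blocks for $v$ across clients' views.

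The main obstacle is the interaction between ledger truncation and consistency: one must verify that if $L^x_{i,t_1}\preceq L^y_{i,t_2}$, then the valid prefix of $L^x_{i,t_1}$ is a prefix of the valid prefix of $L^y_{i,t_2}$, so the replayed traces genuinely nest rather than diverging at a truncation boundary. This holds because the position of the first invalid transaction is determined by the ledger contents up to that position: if it lies within the common prefix $L^x_{i,t_1}$, both clients truncate at the same spot; if it lies strictly beyond $L^x_{i,t_1}$, then all of $L^x_{i,t_1}$ is valid in $y$'s view as well. Once this is settled, determinism of the smart contract and the ``at most one vote / one proposal per epoch'' structure of the OFT protocol close the argument; no probabilistic or cryptographic reasoning is needed beyond the stated safety of $\PI_i$.
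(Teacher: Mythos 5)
Your proof is correct and follows the same reasoning the paper relies on: the paper dispatches this proposition in one line ("directly follows from the safety of the $\PI_i$ ledgers"), and your argument is simply the careful unpacking of that line via determinism of the replayed smart contract, prefix-consistency of the finalized ledgers, and the at-most-one-vote/proposal-per-epoch structure of the OFT code. Your observation that the truncation point at the first invalid transaction is itself prefix-determined, so the replayed traces genuinely nest, is a detail the paper leaves implicit but is exactly the right thing to check.
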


This proposition directly follows from the safety of the $\PI_i$ ledgers.

\begin{proposition}
\label{prop:safe-chain-2}
No validator creates an invalid execution of the overlay protocol in the view of any client.
\end{proposition}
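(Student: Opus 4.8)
The plan is to read the claim essentially off the construction of the \lvl composition in Section~\ref{sec:oft-to-parallel-const}. Recall that when a client outputs the underlay ledgers $L_A$, $L_B$ and $L_C$, it first discards from each ledger every transaction at or after the \emph{first} transaction that is invalid according to the rules of the overlay smart contract, and only then reconstructs the execution trace of each emulated overlay validator. Hence the trace it actually processes for each emulated validator is, by construction, free of any externally detectable violation of the OFT protocol. So the real content of the proposition is that every deviation we care about is an \emph{external} validity condition, i.e., can be decided by inspecting a single underlay ledger; I would make that the organizing claim.

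The main step is therefore to go through the OFT protocol of Appendix~\ref{sec:oft-protocol-detailed} and check, rule by rule, that each one is a function only of the messages recorded on the underlay ledger together with the epoch numbers of the underlay blocks: (i) well-formedness of the propose, acknowledge and leader-down messages, including that each carries the correct overlay epoch number; (ii) that the leader of overlay epoch $v$ proposes only while carrying a valid ticket for $v$ --- an epoch-$(v-1)$ certificate consisting of $f+1$ acknowledge messages, or $f+1$ leader-down messages for epoch $v-1$ --- and that the parent pointer of the proposal is the one prescribed by that ticket; (iii) that a validator acknowledges a block for epoch $v$ only after the epoch-$v$ proposal appears on the ledger and only within the window assigned to epoch $v$, where the passage of time is read off the underlay epoch numbers (the smart contract advances from overlay epoch $v$ to $v+1$ after $3\Tconf/T$ underlay epochs); and (iv) that a leader-down message for epoch $v$ carries the highest-epoch block the sender previously acknowledged on that ledger. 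Each of these is checkable by a client from the underlay ledger alone, so any trace violating one of them is truncated before it is processed.

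I would then flag the one behavior that this argument does \emph{not} handle: an emulated validator signing two acknowledge (or propose) messages for conflicting overlay blocks in the same overlay epoch. As noted in Section~\ref{sec:oft-to-parallel-const}, this is not an external validity condition --- the two messages can live on two conflicting underlay ledgers held by different clients when the underlay chain is not safe --- and it is handled separately, under the safety hypothesis, by Proposition~\ref{prop:safe-chain-1}; it lies outside the scope of the present proposition. Combining this with the previous paragraph gives the statement: in the view of any client, every emulated overlay validator produces a trace meeting all external validity conditions. The only delicate point is item (iii): one has to invoke that the underlay protocols proceed in epochs of fixed duration $T$, so that the recorded underlay epoch numbers faithfully encode elapsed time and the timing rules become externally checkable; everything else is immediate from the client's ledger-sanitization rule, which is why I expect this proposition to have only a short formal proof.
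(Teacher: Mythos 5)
Your proposal is correct and takes essentially the same approach as the paper: the paper's proof is a one-line appeal to the fact that clients refuse to output ledger portions containing executions that violate the external validity rules, which is exactly the organizing claim of your first paragraph. Your rule-by-rule verification that each OFT rule is externally checkable, and your explicit carve-out of equivocation as belonging to Proposition~\ref{prop:safe-chain-1}, are a more detailed elaboration of the same argument.
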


This proposition directly follows from the fact that the clients refuse to output ledgers with invalid executions according to the external validity rules.

\begin{lemma}
\label{lem:safe}
Suppose that every blockchain is safe. 
Then, no two validators commit two conflicting overlay blocks in the views of any (potentially the same) clients.
\end{lemma}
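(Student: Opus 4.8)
The plan is to reduce the claim to the safety argument for the OFT protocol (Appendix~\ref{sec:oft-protocol-detailed}). By Proposition~\ref{prop:safe-chain-1}, when every underlay chain is safe no emulated validator ever signs two conflicting acknowledge messages for the same epoch and no epoch leader proposes two conflicting blocks; by Proposition~\ref{prop:safe-chain-2} no emulated validator produces an invalid execution trace in any client's view, so in particular every proposal observed by any client is accompanied by a valid ticket. Moreover, safety of each underlay chain means that any two clients' views of that chain's ledger are consistent, so the overlay messages (proposals, acknowledgements, leader-down messages) seen across all client views form one globally consistent collection; it therefore suffices to work with this collection and show that no two committed overlay blocks conflict. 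Say a block $B$ is \emph{certified at epoch $v$} if an epoch-$v$ certificate $C_v(B)$ of $f+1$ acknowledge messages $\lra{\text{ack}(B,v)}$ exists (here $f=1$, $n=2f+1=3$), and recall that a validator commits $B$ at epoch $v$ only after observing both $\lra{\text{prop}(B,v)}$ and such a certificate; the genesis block is certified at epoch $0$.

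The core is the following locking claim: \emph{if some validator commits a block $B$ at epoch $v$, then for every epoch $w\geq v$ the block proposed by the leader $L_w$ (if any) extends $B$; consequently every block certified or acknowledged at any epoch $w\geq v$ extends $B$.} Granting this, suppose for contradiction that two validators commit conflicting blocks $B$ at epoch $v$ and $B'$ at epoch $v'$ with $v\leq v'$ (the single-validator case is identical). Then $B'$ is certified at epoch $v'$, so by the claim $B\preceq B'$, contradicting that $B$ and $B'$ conflict.

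To prove the locking claim, I would induct on $w\geq v$. For the base case $w=v$: committing $B$ at epoch $v$ requires observing $\lra{\text{prop}(B,v)}$, and by Proposition~\ref{prop:safe-chain-1} the leader $L_v$ has a single epoch-$v$ proposal, hence it equals $B$. For the inductive step, assume the leader's proposal at every epoch in $[v,w]$ extends $B$, and consider the proposal $B^\star$ of $L_{w+1}$, which (being part of a valid execution) carries a ticket for epoch $w$: either (i) a certificate $C_w(B_{k-1})$, or (ii) a set of $f+1$ leader-down messages for epoch $w$. In case (i), $B_{k-1}$ was acknowledged at epoch $w$, hence extends $B$ by the induction hypothesis, and $B^\star$ extends $B_{k-1}$, so $B\preceq B^\star$. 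In case (ii), $L_{w+1}$ proposes $B^\star$ extending the block $B_j$ carried by the leader-down message of maximal epoch $v_j$; since $B$ was committed at epoch $v$ there are $f+1$ validators that acknowledged $(B,v)$, and since $n=2f+1$ this set intersects the $f+1$ senders of the leader-down messages, so some validator $p$ both acknowledged $(B,v)$ and sent a leader-down message for epoch $w$ carrying its highest-epoch acknowledged block $B_p$ at epoch $v_p$; by the OFT timing (acknowledgements precede leader-down messages within an epoch, and epochs are ordered) we get $v\leq v_p\leq w$, hence $v\leq v_p\leq v_j\leq w$. As $B_j$ was acknowledged at epoch $v_j\in[v,w]$, the induction hypothesis gives $B\preceq B_j$, so $B\preceq B^\star$. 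The ``consequently'' part follows since acknowledgements only target the leader's proposal and certificates consist of acknowledgements.

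I expect the main obstacle to be the leader-down (view-change) bookkeeping in case (ii): identifying exactly which quorum intersection to use, and choosing the right \emph{strengthened} induction hypothesis (tracking leaders' proposals, not merely certified blocks) so that the maximal-epoch block inside the leader-down quorum is provably an extension of the committed block. The remaining ingredients — per-epoch uniqueness of proposals and certificates, the genesis base case, and the single-validator version — are routine consequences of Propositions~\ref{prop:safe-chain-1} and~\ref{prop:safe-chain-2}.
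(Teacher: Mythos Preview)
Your proposal is correct and follows essentially the same approach as the paper: both argue by induction on the epoch, strengthening the hypothesis to ``every leader's proposal at epochs $\geq v$ extends the committed block $B$,'' and both handle the leader-down ticket via quorum intersection between the $f{+}1$ acknowledgers of $B$ and the $f{+}1$ leader-down senders. Your write-up is in fact a bit more explicit than the paper's about why the intersecting validator's leader-down payload must sit at an epoch in $[v,w]$ and why safety of the underlay chains lets one merge all client views into a single message set; these are exactly the bookkeeping points the paper leaves implicit.
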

\begin{proof}
Suppose two validators $\val$ and $\val'$ commit two overlay blocks $B$ and $B'$ at epochs $v$ and $v' \geq v$ respectively in the views of two (potentially the same) clients.
We next show that $B \preceq B'$:
\begin{itemize}
    \item Suppose $v = v'$. Since the leader $L_v$ proposes a single block for its epoch by Proposition~\ref{prop:safe-chain-1}, $B = B'$.
    \item We show the $v' > v$ case by induction:

    \smallskip
    \noindent
        Suppose $v'=v+1$. Then, $\val$ must have received $f+1$ acknowledge messages $\lra{\text{ack}(B,v)}$ by unique validators for block $B$.
        If $L_{v+1}$ received a certificate $C_v(B)$ for block $B$ and epoch $v$ before entering epoch $v+1$, then by Proposition~\ref{prop:safe-chain-2}, the proposed block $B'$ extends $B$. 
        If $L_{v+1}$ received $f+1$ leader-down messages before entering epoch $v+1$, by quorum intersection, at least one of them must have been sent by a validator who sent an acknowledge message $\lra{\text{ack}(B,v)}$ for block $B$ and epoch $v$. 
        By Proposition~\ref{prop:safe-chain-1}, if a leader-down message contains a block from epoch $v$, that block must be $B$.
        Then, by Proposition~\ref{prop:safe-chain-2}, the leader $L_{v+1}$ proposes $B' = B_{k+1}$ extending $B$.
        
        \smallskip
        \noindent
        Suppose that for epochs less than $v'$, the block proposed by the leader extends or is equal to $B$. 
        For epoch $v'$, if $L_{v'}$ received a certificate $C_{v'-1}(B'')$ for some block $B''$ and epoch $v'-1$ before entering epoch $v'$, then by Proposition~\ref{prop:safe-chain-2}, the proposed block $B' = B_{k'}$ extends $B''$ and thus $B$.
        If $L_{v'}$ received $f+1$ leader-down messages for epoch $v'-1$ before entering epoch $v'$, by quorum intersection, at least one of them must have been sent by a validator who sent an acknowledge message $\lra{\text{ack}(B,v)}$ for block $B$ and epoch $v$.
        Thus, at least one of the leader-down messages must have included a block $B''$ from an epoch $v'' \geq v$. 
        By Proposition~\ref{prop:safe-chain-1}, if a leader-down message contains a block from epoch $v''$, that block must be $B''$, and without loss of generality, let $B''$ be the block with the highest epoch that is included among these leader-down messages.
        Then, by Proposition~\ref{prop:safe-chain-2}, the leader $L_{v'}$ proposes $B_{k'}$ extending $B''$.
        From the assumption, every block proposed by the leaders in epochs in $[v,v'-1]$ extends or is equal to $B$. 
        This implies that $B_{k'}$ extends $B$.
        Finally, since the leader $L_{v'}$ proposes a single block for epoch $v'$ by Proposition~\ref{prop:safe-chain-1}, $B' = B_{k'}$ extends $B$, \ie, $B \preceq B'$.
\end{itemize}
\end{proof}

Finally, we prove safety of the overlay protocol when every constituent blockchain is safe. 

\begin{theorem}\label{thm:safe}
Suppose that every blockchain is safe. Then, the parallel composition is safe.
\end{theorem}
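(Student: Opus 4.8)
The plan is to obtain safety of the \lvl composition $\PI_t$ as an essentially immediate consequence of Lemma~\ref{lem:safe}, which already does the hard work by ruling out conflicting commits by the emulated validators when all three underlay chains are safe. What remains is to translate that statement about emulated validators into the two client-facing requirements of safety: consistency of $L^\client_{p,t}$ across clients and times, and monotonicity of $L^\client_{p,t}$ in $t$ for a fixed client.

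First I would record the consequences of underlay safety at the level of what clients observe. Since each $\PI_i$ is safe, every client's view of $L^\client_{i,t}$ is monotone in $t$ and any two clients' views of $\PI_i$'s ledger are consistent; hence the execution traces of the emulated validators carried on the valid prefixes of these ledgers never disappear and are pairwise consistent across all clients and times. In particular, if an overlay block $B$ has been committed by two or more emulated validators in the view of some client $\client$ at time $t$, this stays true for $\client$ at all later times, and by Lemma~\ref{lem:safe} any two overlay blocks that are each committed by at least one emulated validator -- in the views of any, possibly distinct, clients -- are consistent with one another.

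Next I would argue consistency of the output ledgers. Fix clients $\client,\client'$ and times $t,t'$, and let $B$ and $B'$ be the tips of the overlay chains output as $L^\client_{p,t}$ and $L^{\client'}_{p,t'}$; by the client rule each of $B,B'$ is committed by at least two emulated validators in the respective view. Picking one emulated validator committing $B$ and one committing $B'$, Lemma~\ref{lem:safe} (whose proof is indifferent to whether the two committing validators, or the two clients, coincide) gives $B\preceq B'$ or $B'\preceq B$, so the two overlay chains are consistent and therefore so are the transaction ledgers they induce. For monotonicity within a single client, take $t'\ge t$: the client rule forbids $L^\client_{p,t'}$ from being a shorter overlay chain than $L^\client_{p,t}$, and we have just shown the two are consistent, so the shorter must be a prefix of the longer, i.e.\ $L^\client_{p,t}\preceq L^\client_{p,t'}$. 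Finally I would note that passing from overlay chains to finalized transaction ledgers preserves both properties, since the validity-based truncation each client applies is a deterministic function of the chain, so prefixes map to prefixes (and clients that halt after an invalid transaction still output consistent ledgers, as discussed in Section~\ref{sec:prelim}).

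I do not expect a genuinely hard step here: the main obstacle is bookkeeping, namely getting the quantifiers and the observation model exactly right. One must be careful that it is precisely Lemma~\ref{lem:safe}'s universal quantification over (potentially identical) clients -- together with underlay safety, which guarantees that the sets of committed overlay blocks seen by different clients are monotone and mutually consistent so that no client ever ``un-commits'' a block -- that licenses comparing the tips $B$ and $B'$ committed in different views at different times. Once Lemma~\ref{lem:safe} and these observations are in place, the argument is short and the conclusion that $\PI_t$ is safe follows directly.
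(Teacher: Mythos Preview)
Your proposal is correct and follows essentially the same approach as the paper: invoke Lemma~\ref{lem:safe} to get consistency of all committed overlay blocks across all clients and times, then use the client rule (accept a block only if $\geq 2$ emulated validators committed it, and never shorten the output chain) to obtain both cross-client consistency and single-client monotonicity. The paper's proof is terser and omits the bookkeeping you spell out (underlay safety implying monotone, consistent execution traces; validity-truncation preserving prefixes), but the logical structure is identical.
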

\begin{proof}
Recall that a client accepts an overlay block and its prefix chain if it is part of the committed chains of $2$ or more validators.
By Lemma~\ref{lem:safe}, all blocks committed by all validators in the clients' views are consistent.
Hence, for any two overlay blocks $B$ and $B'$ accepted by the clients $\client$ and $\client'$ at times $t$ and $t'$, it holds that $B$ and $B'$ are consistent.
Moreover, once a client outputs a block and its chain, it does not ever output a shorter chain.
Thus, the parallel composition satisfies safety.
\end{proof}

We next prove liveness for the emulated validators.

\begin{lemma}\label{lem:live-validator}
Suppose that at least $2$ blockchains are $\Tconf$-live.
Then, all validators emulated on the live blockchains, \ie, live validators, commit a new block proposed by a live validator soon after GST.
\end{lemma}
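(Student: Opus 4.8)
The plan is to adapt the standard liveness proof of the OFT protocol (Appendix~\ref{sec:oft-protocol-detailed}) to the emulated setting, in which the role of the synchrony bound $\Delta$ is played by the cross-chain latency $\Tconf$ and every overlay epoch has duration $3\Tconf$. Throughout, I treat each emulated validator running on one of the $\geq 2$ live underlay blockchains as a \emph{live validator} that, after $\GST$, follows the OFT protocol and whose messages reach every other live validator within $\Tconf$; the at most one emulated validator running on the non-live blockchain is treated as arbitrarily faulty (it may be silent, deliver a message to only some recipients, or, if its blockchain is also unsafe, equivocate). Since $n = 2f+1 = 3$ with $f = 1$, a ticket is a certificate $C_{v-1}$ or $2$ leader-down messages for epoch $v-1$, and I will repeatedly use that the round-robin leader schedule makes every two consecutive epochs contain at least one live leader and every three consecutive epochs use the three distinct validators as leaders.

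First I would prove a local claim: if an epoch $v$ starts after $\GST$, $L_v$ is live, and $L_v$ holds a valid ticket upon entering $v$, then every live validator commits a new block proposed by $L_v$ by time $3\Tconf v + 2\Tconf$; call such an epoch \emph{successful}. The argument is direct: $L_v$ broadcasts a valid proposal at time $3\Tconf v$; the network being synchronous after $\GST$, every live validator receives it by $3\Tconf v + \Tconf$ and broadcasts an acknowledge message; by $3\Tconf v + 2\Tconf$ every live validator has acknowledge messages from all $\geq 2$ live validators, forming a certificate $C_v$, and, holding the proposal together with $C_v$, each live validator commits the proposed block and its prefix. I then call an epoch $v$ \emph{clean} if every live validator broadcasts a leader-down message for $v$ --- which happens precisely when no live validator has seen $2$ acknowledge messages for $v$ by time $3\Tconf v + 2\Tconf$, and in particular whenever $L_v$ is live but has no ticket (then $L_v$ proposes nothing, so no live validator sees any proposal or acknowledge for $v$). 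The key structural fact is: if an epoch $v$ that lies after $\GST$ is clean, then $L_{v+1}$ receives $\geq 2$ leader-down messages for $v$ by time $3\Tconf(v+1)$ and hence enters epoch $v+1$ with a valid ticket.

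Next I would carry out a short case analysis starting from the first epoch $w$ such that epoch $w-1$ already lies entirely after $\GST$. If $L_w$ is live and holds a ticket, epoch $w$ is successful. If $L_w$ is live but holds no ticket, then by the structural fact epoch $w-1$ is not clean; a live $L_{w-1}$ would then have broadcast a proposal that makes every live validator obtain a certificate $C_{w-1}$ --- contradicting that $L_w$ has no ticket --- so $L_{w-1}$ is faulty, whence $L_{w+1}$ is the remaining, live validator; since epoch $w$ is clean, $L_{w+1}$ enters epoch $w+1$ with a ticket and epoch $w+1$ is successful. Finally, if $L_w$ is faulty, then $L_{w+1}$ and $L_{w+2}$ are both live, and a further split on whether epoch $w$ is clean shows that epoch $w+1$ or epoch $w+2$ is successful (when epoch $w$ is not clean, a selectively-delivering $L_w$ may deny $L_{w+1}$ a ticket, but then epoch $w+1$ --- whose leader is live and has no ticket --- is clean, so $L_{w+2}$ enters with a ticket). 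In every branch there is a successful epoch $v^\ast \leq w+2$, and by the local claim every live validator commits a new block proposed by the live leader $L_{v^\ast}$ by time $3\Tconf v^\ast + 2\Tconf = \GST + O(\Tconf)$.

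The main obstacle is precisely the scenario exploited above: a faulty leader may deliver its proposal together with its own acknowledge to just one live validator, so that a certificate is manufactured yet never reaches the next live leader, and therefore a live leader is not automatically guaranteed a ticket. The resolution is that this can cost at most one additional epoch --- a live leader that is denied a ticket proposes nothing, which makes its epoch clean and regenerates a ticket for the following (necessarily live) leader --- combined with the round-robin guarantee that live leaders recur at least every other epoch. The remaining ingredients, namely the $\Delta \mapsto \Tconf$ time rescaling and verifying that the collected acknowledge messages constitute a valid certificate inside the emulated execution traces, are routine given the construction of the \lvl composition in Section~\ref{sec:view}.
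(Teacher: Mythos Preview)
Your proof is correct and follows the paper's overall line: locate a post-$\GST$ epoch with a live leader and argue that its proposal gathers $f{+}1=2$ acknowledge messages from the live validators, producing a committed block. The paper's own argument is considerably more terse---it picks the first live leader $L_v$ whose epoch starts after $\GST+\Tconf$ and simply asserts that $L_v$ either holds a certificate $C_{v-1}$ or ``must have received $f{+}1$ leader-down messages for epoch $v-1$,'' without engaging the ticket-denial scenario you correctly flag as the main obstacle (a faulty $L_{v-1}$ selectively delivering its proposal and its own acknowledge so that one live validator obtains a certificate and hence sends no leader-down, while the other live validator, the next leader, is left with neither ticket). Your case split, showing that such a denial costs at most one additional epoch because a live leader without a ticket proposes nothing, makes its own epoch clean, and thereby regenerates a ticket for the following (necessarily live) leader, is in fact the more complete argument; the resulting $O(\Tconf)$ latency bound matches the paper's.
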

\begin{proof}
Suppose the leader $L_v$ of an epoch $v$ starting at some time $t > \GST + \Tconf$ is a live validator.
By the leader selection rule, there exists such a first epoch with $t < \GST +  4 \Tconf$.

If $L_v$ observes a certificate $C_{v-1}(B_{k-1})$ for a block $B_{k-1}$ from epoch $v-1$ before entering epoch $v$, since it is live, it builds a block $B_k$ extending $B_{k-1}$ and broadcasts $B_{k-1}$ along with its ticket.
Otherwise, it must have received $f+1$ leader-down messages for epoch $v-1$, as epoch $v$ starts $\Tconf$ time after $\GST$.
Then, $L_v$ builds a block $B_k$ by extending the block $B_{k'}$ associated with the maximal epoch $v'$ among those contained by the $f+1$ leader-down messages, and broadcasts $B_{k}$ along with its ticket.
Hence, in either case, $L_v$ broadcasts a proposal and its ticket.

Upon receiving $L_v$'s proposal within $\Tconf$ time after its broadcast, all live validators broadcast an acknowledge messages $\lra{\text{ack}(B_k,v)}$ for the proposed block.
Therefore, by time $t + 2\Tconf$, all live validators receive a certificate for the live leader's block, committing it by time $t + 2\Tconf$.
Thus, all live validators commit a new block proposed by a live validator soon after GST.
\end{proof}

Finally, we prove liveness of the overlay protocol when $f+1$ of the constituent blockchains are live. 
\begin{theorem}\label{thm:live}
Suppose that at least $f+1$ blockchains are $\Tconf$-live.
Then, the parallel composition is live with constant latency.
\end{theorem}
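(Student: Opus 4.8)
The plan is to lift the liveness argument of the underlying OFT protocol — already invoked in Lemma~\ref{lem:live-validator} — from the emulated validators to the overlay ledger seen by clients, and to check that it is robust to whatever the non-live (and possibly non-safe) underlay chains do. First I would strengthen Lemma~\ref{lem:live-validator} from ``commits a new block once after GST'' to ``keeps committing new blocks forever'': by the round-robin leader rule a live validator is the leader of at least one epoch in every window of $2f+1$ consecutive epochs, hence once every $(2f+1)\cdot 3\Tconf$ time; and whenever the epoch-$v$ leader is live it obtains a ticket — either a certificate from epoch $v-1$, or, if epoch $v-1$ produced no certificate, the $f+1$ leader-down messages that the $f+1$ live validators send at time $3\Tconf(v-1)+2\Tconf$ — proposes a block $B_k$, all live validators acknowledge it within $\Tconf$, and all of them commit $B_k$ with its prefix chain by time $3\Tconf v + 2\Tconf$. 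Since the $f+1$ live validators already form a quorum for both acknowledge and leader-down messages, this goes through no matter how the remaining chains behave, and all live validators commit the \emph{same} block in this step.

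Next I would handle transaction inclusion. A transaction $\tx$ handed to an honest validator at time $t$ is broadcast and known to every live validator by time $\max(\GST,t)+\Tconf$. The first epoch after that time whose leader is live — which starts within a further $(2f+1)\cdot 3\Tconf$ — has its live leader include $\tx$ in its proposed block $B_k$ (a live validator includes all pending transactions in its view), and as above $B_k$ and its prefix are committed by all $f+1$ live validators within $2\Tconf$. Here I need the small lemma, obtained by the same quorum-intersection / leader-down reasoning as in the proof of Lemma~\ref{lem:safe} but restricted to the $f+1$ live validators, that $B_k$ extends every block previously committed by a live validator; this guarantees chain growth and that the chain carrying $\tx$ is exactly the one that keeps getting extended.

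Finally I would translate ``committed by $f+1 \ge 2$ emulated validators'' into a statement about clients: a client reading the finalized, valid ledgers of the live underlay chains observes these $\ge 2$ commitments within $\Tconf$ of when the corresponding underlay blocks are finalized, so by the client rule of the \lvl composition it accepts and outputs the overlay block $B_k$ and its prefix chain (which contains $\tx$), and by the monotonicity rule it never subsequently outputs a shorter overlay chain, so $\tx$ stays. If the client also sees conflicting underlay ledgers because some chain is not safe, it may additionally output conflicting overlay blocks, but it still outputs the chain carrying $\tx$, which is all liveness asks. Collecting the delays gives a constant $\Tconf_p = O(f\,\Tconf)$ with $\tx \in L^{\client}_{t'}$ for every client $\client$ and all $t' \ge \max(\GST,t)+\Tconf_p$, i.e. liveness with constant latency. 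I expect the main obstacle to be the chain-growth step — showing that each live leader's block genuinely extends the earlier committed blocks of the live validators while simultaneously carrying the pending transaction, with a latency bound independent of the adversary's pre-GST scheduling; the rest is bookkeeping of the various $\Tconf$ terms.
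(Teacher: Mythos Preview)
Your plan is sound and considerably more thorough than the paper's own proof, but it takes a different route and works harder than necessary. The paper's argument is essentially three lines: invoke Lemma~\ref{lem:live-validator} to get that all $f+1$ live validators commit a live leader's block within $f(3\Tconf)+3\Tconf$ after $\GST$; observe that a live leader's proposal contains all outstanding transactions; and then---this is the step you did not have access to---appeal to a client rule, stated only inside the proof, that when a client accepts a newly committed overlay chain it \emph{appends} that chain to its existing ledger and sanitizes duplicate transactions, so the accepted block's transactions persist even if later committed chains conflict with earlier ones. That sanitization shortcut makes any chain-growth argument unnecessary.

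Your route instead targets a chain-growth lemma, and one caution is in order: as you have phrased it (``$B_k$ extends every block previously committed by a live validator'') the lemma is false. If a non-safe underlay is the leader at some epoch and equivocates, two live validators can commit conflicting blocks $B$ and $B'$ at that epoch, and the next live leader's proposal will extend only one of them. What \emph{does} hold is the weaker statement that once all $f+1$ live validators commit the \emph{same} block $B$---which happens precisely at live-leader epochs, since only one proposal exists there, the live validators alone supply the $f+1$ acks, and hence no conflicting certificate and not enough leader-down messages can arise---every subsequent valid ticket forces extension of $B$. This weaker lemma suffices for your argument: $\tx$ is included by the first live leader after it is broadcast, that block becomes a permanent trunk, and all later accepted overlay chains contain it. So your plan goes through once you aim at the right lemma, though the paper's sanitization rule lets it skip the detour entirely.
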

\begin{proof}
Recall that a client accepts an overlay block and its chain if it is part of the committed chains of $f+1$ or more validators.
By Lemma~\ref{lem:live-validator},  all live validators, $f+1$ in total, commit a new block proposed by a live validator soon after GST, \ie, within $t = f(3\Tconf) + 3\Tconf$ time.
Hence, all clients observing the parallel composition after time $t$ accept an overlay block proposed by a live validator and its chain.
To ensure that this block persists in their ledger, even if it is conflicting with the past outputted ledgers, the clients add the block and its prefix chain into their ledgers and sanitize the duplicate transactions.
Since the overlay block proposed by a live validator contains all outstanding transactions received by honest validators of the underlay protocols, liveness is satisfied for clients with constant latency, \ie, $f(3\Tconf) + 3\Tconf$.
\end{proof}

\subsection{Proof of Theorem~\ref{lem:circuit-general-psync}}
\label{sec:app-circuit-general-psync}

\begin{proof}[Proof of Theorem~\ref{lem:circuit-general-psync}]
Fix a tuple $(E^S, E^L)$ as described by the theorem statement.
Let $A := A(E^L) = \{\ind(l) | (0^k, l) \in E^L\}$, and $Q_1, ..., Q_m$ denote the sets of indices within $A$. 
Here, $m$ is a finite number, bounded by $2^n$. 
We inductively build the overlay protocol characterized by the tuple $(E^S, E^L)$:

\noindent \textbf{Induction hypothesis:}
For any $(E^S,E^L)$ as described by the theorem such that $A(E^L)$ consists of $m$ sets (quorums) $Q_1, \ldots, Q_m$, there exists an overlay protocol characterized by a tuple dominating $(E^S,E^L)$. 

\noindent \textbf{Base case:} 
Suppose $m=1$, \ie, $A = \{Q_1\}$. 
Then, each $(s,l) \in E^S$ requires the safety of one of the protocols in $\{\PI_i | i\in Q_1\}$ by clause (2).
Therefore, the serial composition of the protocols $\{\PI_i | i\in Q_1\}$ is characterized by a tuple dominating $(E^S,E^L)$.
This follows from the iterative application of Theorem~\ref{thm:serial-security} on the protocols $\{\PI_i,i\in Q_1\}$.

\noindent \textbf{Inductive step}:
Suppose $A = \{Q_1, \ldots, Q_m,  Q_{m+1}\}$.
By the induction hypothesis, for any $(\tilde E^S,\tilde E^L)$ as described by the theorem statement such that $A := A(\tilde E^L) = \{Q_1, \ldots, Q_m\}$, there exists an overlay protocol $\PI$ characterized by a tuple dominating $(E^S,E^L)$.
Then, using $\PI$ and the \lvl composition, we construct the following protocol $\PI'$:
\begin{itemize}
\item The first protocol of the \lvl composition is $\PI$ itself.
\item The second protocol of the \lvl composition is the serial composition of the protocols $\{\PI_i, i\in Q_{m+1}\}$.
\item The third protocol of the \lvl composition is the $(2f+1)$-\lvl composition, where $f = m-1$, \ie, a $(2m-1)$-\lvl composition:

\noindent
(i) Of the $2m-1$ protocols composed, $m$ are as follows: For each $i \in [m]$, the $i$-th composition is the serial compositions of all protocols within $Q_{i,m+1} := Q_i \cap Q_{m+1}$. 

\noindent
(ii) Remaining $m-1$ of $2m-1$ protocols are copies of $\PI$.
\end{itemize}

First, we show that $\PI'$ is live if there exists $Q \in A$ such that $\PI_i$ is live for all $i \in Q$. 
Suppose there is a $j \in [m]$ such that all $\PI_i$, $i\in Q_j$, are live. 
Then, by the induction hypothesis, $\PI$ is live, \ie, the first protocol of the \lvl composition is live.
By Theorem~\ref{thm:serial-security}, all protocols within $Q_{j,m+1}$ are live. 
As at least $m$ components in the $(2m-1)$-\lvl composition are live, the third protocol of the \lvl composition is live by Lemma~\ref{lem:root} (which can be achieved by a circuit protocol as shown in Appendix~\ref{sec:parallel-3-3-2}).
Hence, by Theorem~\ref{thm:parallel-combined-security}, $\PI'$ is live.

Now, suppose all $\PI_i$, $i\in Q_{m+1}$, are live. 
By Theorem~\ref{thm:serial-security}, the second protocol of the \lvl composition is live. 
Again, by Theorem~\ref{thm:serial-security}, all protocols within $Q_{j,m+1}$ are live. 
Therefore, again, the third protocol of the \lvl composition is live by the definition of the $(2m-1)$-\lvl composition. 
By Theorem~\ref{thm:parallel-combined-security}, this implies that $\PI'$ is live.

We next prove that $\PI'$ is safe if for every pair $Q_1, Q_2 \in A \cup \{Q_{m+1}\}$, at least one chain $\PI_i$ for $i\in Q_1\cap Q_2$ is safe.
Note that the conditions on $\tilde E^S$ require $\PI$ to be safe if for every pair $Q_1, Q_2 \in A$, at least one chain $\PI_i$, $i \in Q_1 \cap Q_2$ is safe.
As the conditions for the safety of $\PI'$ imply those for $\PI$, if they hold, $\PI$ is safe.
Moreover, these conditions for the safety of $\PI'$ imply that at least one chain in $Q_{m+1}$ is safe, which by Theorem~\ref{thm:serial-security} implies that the second protocol of the \lvl composition is safe.
Finally, by the same conditions, for each $i \in [m]$, there exists a protocol that is safe within one of the sets $Q_{i,m+1} := Q_i \cap Q_{m+1}, i \in [m]$.
Thus, all of the $2m-1$ protocols within the $(2m-1)$-\lvl composition are safe, which implies that the third protocol of the \lvl composition is safe by Lemma~\ref{lem:root}.
Consequently, by Theorem~\ref{thm:parallel-combined-security}, $\PI'$ is safe.
Combining the safety and liveness results for $\PI'$, we can conclude that $\PI'$ is characterized by a tuple dominating $(E^S,E^L)$.
\end{proof}

\subsection{Proof of Theorem~\ref{lem:converse-symmetric-psync}}
\label{sec:app-converse-symmetric-psync}

\begin{proof}[Proof of Theorem~\ref{lem:converse-symmetric-psync}]
For contradiction, suppose there are tuples $(m_s, m_l, m_{sl}) \in p^L$, $(n_s, n_l, n_{sl}) \in P^S$ such that $m_l > k/2$, but $n_s \leq 2(k-m_l)$.
Let $l^1 = 1^{m_l}0^{k-m_l}$, $l^2 = 0^{k-m_l} 1^{m_l}$ and $s^3 = 1^{k-m_l}0^{2m_l-k}1^{k-m_l}$.
Since the number of $1$'s in $s^3$ is $2(k-m_l)$, by definition of permutation invariant protocols, there exist $s^1, s^2, l^3 \in \{0,1\}^k$ such that $(s^1,l^1)\in V^L, (s^2,l^2)\in V^L, (s^3,l^3)\in V^S$.
However, $\ind(l^1)\cap\ind(l^2)\cap\ind(s^3) = \emptyset$, which is a contradiction.

Now, suppose there exist a tuple $(m_s, m_l, m_{sl}) \in p^L$ such that $m_l \leq k/2$.
Let $l^1 = 1^{m_l}0^{k-m_l}$ and $l^2 = 0^{k-m_l} 1^{m_l}$.
Since the number of $1$'s in $l^3$ and $l^2$ are at most $k/2$, by definition of permutation invariant protocols, there exist $s^1, s^2 \in \{0,1\}^k$ such that $(s^1,l^1)\in V^L, (s^2,l^2)\in V^L$.
However, $\ind(l^1)\cap\ind(l^2) = \emptyset$, which is a contradiction.
\end{proof}

\subsection{Proof of Theorem~\ref{thm:lvs-security}}
\label{sec:lvs-security}

\begin{proof}[Proof of Theorem~\ref{thm:lvs-security}]
Without loss of generality, suppose $\PI_A$ is live with some parameter $\Tconf$.
Then, once a transaction $\tx$ is input to an honest validator at some time $t$, it appears and stays in the $\PI_A$ ledgers of all clients by time $t+\Tconf$.
If the interleaving condition is satisfied in the view of a client $\client$ observing the protocol at some time $t' \geq t+2\Tconf$, then $\tx \in L^\client_{B,t'}$.
Thus for any such client $\client$ and time $t' \geq t+2\Tconf$, this implies $\tx \in L^\client_{p,t'}$ per eq.~\eqref{eq:lvs-1}.
Therefore, $\tx$ appears (and stays) in the $\PI_p$ ledgers of all such clients by time $t+2\Tconf$.
If the interleaving condition is not satisfied in the view of a client $\client$ observing the protocol at some time $t' \geq t+2\Tconf$, then it is still the case that $\tx \in L^\client_{p,t'}$, since $L^\client_{p,t'}$ would then contain every transaction in $L^\client_{A,t'-\Tconf}$ per eq.~\eqref{eq:lvs-2}.
This proves the liveness clause.

Suppose both $\PI_A$ and $\PI_B$ are safe and live with parameter $\Tconf$.
Then, for any client $\client$ and time $t$, it holds that any transaction that appears in the ledger $L^\client_{A,t}$ or $L^\client_{B,t}$ is in the $\PI_A$ and $\PI_B$ ledgers of all clients by time $t+\Tconf$.
Therefore the interleaving condition is satisfied in the view of all clients.
Then, for any client $\client$ and time $t$, $L^\client_{p,t}$ is as defined by eq.~\ref{eq:lvs-1}, which is an interleaving of two ledgers.
Moreover, by the safety of the chains $\PI_A$ and $\PI_B$, for all positive integers $\ell$ and $\ell'$, clients $\client$ and $\client'$, and times $t$ and $t'$, the ledgers $\textsc{Interleave}(L^\client_{A,t}[{:}\ell],L^\client_{B,t}[{:}\ell])$ and $\textsc{Interleave}(L^{\client'}_{A,t'}[{:}\ell'],L^{\client'}_{B,t'}[{:}\ell'])$ are consistent with each other.
This proves the safety clause.

Finally, if both $\PI_A$ and $\PI_B$ generate certificates, since the $\PI_p$ ledger is a deterministic function of the $\PI_A$ and $\PI_B$ ledgers when $\PI_p$ is safe, the tuple of certificates for the $\PI_A$ and $\PI_B$ ledgers act as a certificate for the $\PI_p$ ledger. 
Thus, $\PI_p$ generates certificates.
If both $\PI_A$ and $\PI_B$ proceed in epochs of fixed duration, chain growth rate of the $\PI_p$ ledger is bounded as well, implying that it proceeds in epochs of fixed duration.
\end{proof}

\subsection{Proof of Lemma~\ref{lem:root}}
\label{sec:parallel-3-3-2}

Proof relies on Lemma~\ref{lem:leave} from Section~\ref{sec:circuit-symmetric-psync}.

\begin{proof}[Proof of Lemma~\ref{lem:root}]
We prove by mathematical induction. 
Suppose that for any $g<f$, $(2g+1)$-\lvl is achievable by using $\pi^{(2g+1)}$. 
Note that the base case is $g=1$, and we assume that there exists a protocol achieving the $(3,3,2)$ point.

By Lemma~\ref{lem:leave}, we note that $\pi^{(2g+2)}$ is also achievable. 
Let $\PI_1,\dots,\PI_{2f+1}$ be $2f+1$ different blockchains. Consider all subsets of $\PI_1,\dots,\PI_{2f+1}$ with $2f-1$ blockchains, which are $f(2f+1)$ subsets in total. We use $\pi^{(2f-1)}$ for these subsets and obtain $f(2f+1)$ blockchains, say $\PI_i^1$ with $i=1,\dots,f(2f+1)$.  %

If $f$ is an odd number, then we split $f(2f+1)$ blockchains into $f+2$ groups, where $f+1$ groups has $2f-1$ blockchains and the remaining group has $1$ blockchain. Then, we run $\pi^{(2f-1)}$ on the first $f+1$ groups and obtain $f+1$ output blockchains. Finally, we run $\pi^{(f+2)}$ to obtain the final output blockchain.

If $f$ is an even number, then we split $f(f+1)$ blockchains into $f+1$ groups, where $f$ groups have $2f-1$ blockchains and the remaining group has $2f$ blockchain. Then, we run $\pi^{(2f-1)}$ on the first $f$ groups and run $\pi^{(2f,2f-1,f+1)}$ on the last group. Hence, we obtain $f+1$ output blockchains. Finally, we run $\pi^{(f+1)}$ to obtain the final output blockchain.

It is easy to show that if $\PI_1,\dots,\PI_{2f+1}$ are safe then the protocol is safe. Assuming that at least $f+1$ of $\PI_1,\dots,\PI_{2f+1}$ are live, we will show that the protocol is live. 
We prove by contradiction. 
Suppose that the protocol is not live. First, we note that as at least $f+1$ of $\PI_1,\dots,\PI_{2f+1}$ are live, at most $f(f+1)/2$ blockchains among $\{\PI_i^1\}_{i=1}^{f(f+1)}$ output by $\pi^{(2f-1)}$ running on size $2f-1$ subsets are not live. 
This is because to obtain a not-live blockchain $\PI_{i}^1$ as output of $\pi^{(2f-1)}$, at least $f-1$ of input blockchains need to be not-live, which implies that this subset with size $2f-1$ needs to drop $2$ live blockchains from $\PI_1,\dots,\PI_{2f+1}$. 

Now, if $f$ is an odd number, as the protocol is not live, at least $(f+3)/2$ blockchains as input of $\pi^{(f+2)}$ are not live. 
This implies that at least $f(f+1)/2+1$ blockchains among $\{\PI_i^1\}_{i=1}^{f(2f+1)}$ are not live. As $f(f+1)/2+1>f(f+1)/2$, this leads to a contradiction. 
On the other hand, if $f$ is an even number, as the protocol is not live, at least $(f+2)/2$ blockchains as input of $\pi^{(f+1)}$ are not live. This implies that at least $(f+2)f/2$ blockchains among $\{\PI_i^1\}_{i=1}^{f(2f+1)}$ are not live. 
As $(f+2)f/2>(f+1)f/2$, this leads to a contradiction. 
\end{proof}

\subsection{Alternative Proof for Theorem~\ref{thm:informal}}
\label{sec:alternative}

The following is an analogue of Theorem~\ref{lem:circuit-symmetric-sync} for partial synchrony and stated here for completeness:

\begin{theorem}
\label{lem:circuit-symmetric-psync}
If a protocol is characterized by $(P^S,P^L)$ in the set
$$
\{(\{(2(k-m_l)+1,0,0)\},\{(0,m_l,0)\})|k/2<m_l\leq k\},
$$
then there exists a permutation invariant overlay protocol characterized by a tuple dominating $(P^S,P^L)$ under partial synchrony.
\end{theorem}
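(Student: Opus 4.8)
The plan is to derive Theorem~\ref{lem:circuit-symmetric-psync} as a corollary of the general achievability result for partial synchrony, Theorem~\ref{lem:circuit-general-psync}, in exactly the way Theorem~\ref{lem:circuit-symmetric-sync} was derived from Theorem~\ref{lem:circuit-general-sync}. First I would fix $m_l$ with $k/2 < m_l \le k$, set $P = (P^S,P^L) = (\{(2(k-m_l)+1,0,0)\},\{(0,m_l,0)\})$, and pass to the extreme-element description: let $(V^S,V^L) = V(P)$, $E^S = \text{exm}(V^S)$ and $E^L = \text{exm}(V^L)$. Because the single tuple in $P^S$ has $n_l = n_{sl} = 0$, every $(s',l') \in E^S$ satisfies $l' = 0^k$ and $\ind(s')$ has size exactly $2(k-m_l)+1$; because the single tuple in $P^L$ has $m_s = m_{sl} = 0$, every $(s,l) \in E^L$ satisfies $s = 0^k$ and $\ind(l)$ has size exactly $m_l$. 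This yields condition~(1) of Theorem~\ref{lem:circuit-general-psync} at once.

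For condition~(2) I would verify the quorum-intersection inequality by two applications of inclusion--exclusion: for $(0^k,l^1),(0^k,l^2)\in E^L$ and $(s,0^k)\in E^S$ we have $|\ind(l^1)\cap\ind(l^2)| \ge 2m_l - k$, and hence $|\ind(l^1)\cap\ind(l^2)\cap\ind(s)| \ge (2m_l - k) + (2(k-m_l)+1) - k = 1 > 0$, using $|\ind(s)| = 2(k-m_l)+1$ (and $m_l > k/2$ makes the intermediate bound $2m_l - k$ positive). With both conditions checked, Theorem~\ref{lem:circuit-general-psync} gives an overlay protocol characterized by a tuple dominating $(E^S,E^L)$; since $V(E^S) = V^S$ and $V(E^L) = V^L$, Lemma~\ref{lem:partial_order} shows this tuple also dominates $(P^S,P^L)$, which is the desired conclusion.

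The remaining point, and the one I expect to be the only genuine subtlety, is permutation invariance of the resulting protocol. Here I would argue that the circuit built in the proof of Theorem~\ref{lem:circuit-general-psync} is determined by the quorum system $A(E^L)$ together with the extreme set $E^S$, and that in our case these are the full families of $m_l$-subsets and of $(2(k-m_l)+1)$-subsets of $[k]$; both are invariant under relabelling the underlay chains, so any permutation $\sigma$ of $[k]$ maps the construction to itself up to a reordering of the serial and \lvl compositions, and such reordering does not affect the safety/liveness guarantees. Consequently $V^S$ and $V^L$ of the constructed protocol each satisfy property~(P2), i.e.\ the protocol is permutation invariant. As an alternative I could instead cite Theorem~\ref{thm:informal} directly: $(k,\,2(k-m_l)+1,\,m_l)$ is an achievable tuple there (since $\flr{k/2}+1 \le m_l \le k$ and $s = 2(k-m_l)+1$), and the protocol $\pi^{(k,s,l)}$ produced via Lemmas~\ref{lem:root} and~\ref{lem:leave} is manifestly symmetric in the underlay chains, so the $(P^S,P^L)$ statement is then a notational rephrasing.
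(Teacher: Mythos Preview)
Your proposal is correct and follows essentially the same route as the paper: fix $m_l$, unpack $(P^S,P^L)$ into $(E^S,E^L)$, check conditions~(1) and~(2) of Theorem~\ref{lem:circuit-general-psync} (the paper just says ``by quorum intersection'' where you spell out the inclusion--exclusion), and then invoke that theorem, with permutation invariance asserted from the symmetry of the construction. If anything you are more careful than the paper, which glosses over the Lemma~\ref{lem:partial_order} step and the permutation-invariance justification; your alternative via Theorem~\ref{thm:informal} and Lemmas~\ref{lem:root},~\ref{lem:leave} is also valid and equally direct.
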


Theorem~\ref{lem:circuit-symmetric-psync} follows as a corollary of Theorem~\ref{lem:circuit-general-psync}.

\begin{proof}
Fix some $m_l$, $k/2<m_l\leq k$.
Consider the tuple $P = (P^S,P^L) = (\{(n_s = 2(k-m_l)+1, n_l = 0, n_{sl} = 0)\},\{(m_s = 0, m_l, m_{sl} = 0)\})$.
Let $(V^S,V^L) = V(P)$, $E^S = \text{exm}(V^S)$ and $E^L = \text{exm}(V^L)$.
\begin{enumerate}
    \item Since $n_l = n_{sl} = m_s = m_{sl} = 0$, for all $(s,l)\in E^L$, $(s',l') \in E^S$, it holds that $s = l' = 0^k$.
    \item Since $n_s = 2(k-m_l)+1$, by quorum intersection, for all $(0^k,l^1), (0^k,l^2)\in V^L, (s,0^k)\in E^S$, it holds that $\ind(l^1) \cap \ind(l^2) \cap \ind(s) \neq \emptyset$.
\end{enumerate}
Therefore, by Theorem~\ref{lem:circuit-symmetric-psync}, there exists an overlay protocol characterized by a tuple dominating $(P^S,P^L)$.
Moreover, the overlay protocol suggested by the construction in the proof of Theorem~\ref{lem:circuit-symmetric-psync} is permutation invariant.
\end{proof}

\subsection{Proof of Theorem~\ref{lem:circuit-general-sync}}
\label{sec:appendix-circuit-general-sync}

\begin{proof}[Proof of Theorem~\ref{lem:circuit-general-sync}]
Fix a tuple $(E^S, E^L)$ as described by the theorem statement.
Let $A := A(E^L) = \{\ind(l) | (0^k, l) \in E^L\}$, and $Q_1, ..., Q_m$ denote the sets of indices within $A$. 
Here, $m$ is a finite number, bounded by $2^n$. 
We inductively build the overlay protocol characterized by the tuple $(E^S, E^L)$:

\noindent \textbf{Induction hypothesis:}
For any $(E^S,E^L)$ as described by the theorem statement such that $A(E^L)$ consists of $m$ sets (quorums) $Q_1, \ldots, Q_m$, there exists an overlay protocol characterized by a tuple dominating $(E^S,E^L)$. 

\noindent \textbf{Base case:} 
Suppose $m=1$, \ie, $A = \{Q_1\}$. 
Then, each $(s,l) \in E^S$ requires the safety of one of the protocols in $\{\PI_i | i\in Q_1\}$.
Therefore, the serial composition of the protocols $\{\PI_i | i\in Q_1\}$ is characterized by a tuple dominating $(E^S,E^L)$.
This follows from the iterative application of Theorem~\ref{thm:serial-security} on the protocols $\{\PI_i,i\in Q_1\}$.

\noindent \textbf{Inductive step}:
Suppose $A = \{Q_1, \ldots, Q_m,  Q_{m+1}\}$.
By the induction hypothesis, for any $(\tilde E^S,\tilde E^L)$ as described by the theorem statement such that $A := A(E^L) = \{Q_1, \ldots, Q_m\}$, there exists an overlay protocol $\PI$ that dominates the protocol characterized by $(E^S,E^L)$.
Then, using $\PI$, we construct the following protocol $\PI'$ using the \lvl composition:
\begin{itemize}
\item The first protocol of the \lvl composition is $\PI$ itself.
\item The second protocol of the \lvl composition is the serial composition of the protocols $\{\PI_i, i\in Q_{m+1}\}$.
\item The third protocol of the \lvl composition is the $(2f+1)$-\lvl composition, where $f = m-1$, \ie, a $(2m-1,2m-1,m-1)$ composition:

\noindent
(i) Of the $2m-1$ protocols composed, $m$ are as follows: For each $i \in [m]$, the $i$-th composition is the serial compositions of all protocols within $Q_{i,m+1} := Q_i \cap Q_{m+1}$ and 
the \lvs compositions of all tuples in $\{(\Pi_{j_1}, \PI_{j_2}) | j_1\in Q_i, j_2\in Q_{m+1}\}$).

\noindent
(ii) Remaining $m-1$ of $2m-1$ protocols are copies of $\PI$.
\end{itemize}

First, we show that $\PI'$ is live if there exists $Q \in A$ such that $\PI_i$ is live for all $i \in Q$. 
Suppose there is a $j \in [m]$ such that all $\PI_i$, $i\in Q_j$, are live. 
Then, by the induction hypothesis, $\PI$ is live, \ie, the first protocol of the \lvl composition is live.
By Theorem~\ref{thm:lvs-security}, all protocols within $\{\text{\lvs}(\Pi_{j_1}, \PI_{j_2}) | j_1\in Q_j, j_2\in Q_{m+1}\}$ are live.
By Theorem~\ref{thm:serial-security}, all protocols within $Q_{j,m+1}$ are live. 
As at least $m$ components in the $(2m-1,2m-1,m-1)$ composition are live, the third protocol of the \lvl composition is live by the definition of the $(2f+1)$-\lvl composition (which can be achieved by a circuit protocol as shown in Appendix~\ref{sec:parallel-3-3-2}).
Hence, by Theorem~\ref{thm:parallel-combined-security}, $\PI'$ is live.

Now, suppose all $\PI_i$, $i\in Q_{m+1}$, are live. 
By Theorem~\ref{thm:serial-security}, the second protocol of the \lvl composition is live. 
Again, by Theorem~\ref{thm:lvs-security}, all protocols within $\{\text{\lvs}(\Pi_{j_1}, \PI_{j_2}) | j_1\in Q_j, j_2\in Q_{m+1}\}$ are live, and by Theorem~\ref{thm:serial-security}, all protocols within $Q_{j,m+1}$ are live. 
Therefore, again, the third protocol of the \lvl composition is live by the definition of the $(2f+1)$-\lvl composition. 
By Theorem~\ref{thm:parallel-combined-security}, this implies that $\PI'$ is live.

We next prove that $\PI'$ is safe if for every pair $Q_1, Q_2 \in A \cup \{Q_{m+1}\}$, either at least one chain $\PI_i$ for $i\in Q_1\cap Q_2$ is safe or at least two chains $\PI_{j_1},\PI_{j_2}$, $j_1\in Q_1$, $j_2\in Q_2$ are safe and live. 
Note that the conditions on $\tilde E^S$ require $\PI$ to be safe if for every pair $Q_1, Q_2 \in A$, either at least one chain $\PI_i$, $i \in Q_1 \cap Q_2$ is safe, or at least two chains $\PI_{j_1},\PI_{j_2}$, $j_1\in Q_1$, $j_2\in Q_2$ are safe and live.
As the conditions for the safety of $\PI'$ imply those for $\PI$, if they hold, $\PI$ is safe.
Moreover, these conditions for the safety of $\PI'$ imply that at least one chain in $Q_{m+1}$ is safe, which by Theorem~\ref{thm:serial-security} implies that the second protocol of the \lvl composition is safe.
Finally, by the same conditions, for each $i \in [m]$, there exists a protocol that is safe within one of the sets $Q_{i,m+1} := Q_i \cap Q_{m+1}, i \in [m]$ or $\{\text{\lvs}(\Pi_{j_1}, \PI_{j_2}) | j_1\in Q_i, j_2\in Q_{m+1}\}$ by Theorem~\ref{thm:lvs-security}.
Thus, all of the $2m-1$ protocola within the $(2m-1, m-1, m-1)$ composition are safe, which implies that the third protocol of the \lvl composition is safe by the definition of the $(2f+1)$-\lvl composition.
Consequently, by Theorem~\ref{thm:parallel-combined-security}, $\PI'$ is safe.
Combining the safety and liveness results for $\PI'$, we can conclude that $\PI'$ is characterized by a tuple dominating $(E^S,E^L)$.
\end{proof}

\subsection{Proof of Theorem~\ref{lem:converse-general-sync}}
\label{sec:app-converse-general-sync}

\begin{proof}[Proof of Theorem~\ref{lem:converse-general-sync}]
For contradiction, suppose $\ind(l^1)\cap\ind(l^2)\cap\ind(s^3) = \emptyset$ and without loss of generality, there is no index $i \in \ind(l^1)$ such that $s_i = l_i = 1$.
Denote the $k$ underlay blockchains by $\PI_1,\dots,\PI_k$. 
There are two clients $\client_1, \client_2$.
Consider the following three worlds. 

\noindent 
\textbf{World 1:} 
All blockchains are safe.
The underlay chains $\PI_i$, $i \in \ind(l^1)$, are live.
The chains $\PI_i$, $i \in \ind(l^2)$ are not live, and the rest are stalled.
Suppose transactions $\tx_1$ and $\tx_2$ are input to the protocol at time $t = 0$.
The chains $\PI_i$, $i \in \ind(l^2)$, start with the transactions and output ledgers in $\client_1$'s view.
However, they seem stalled to the validators of the chains in some set $P_l \subseteq \{\PI_i | i \in \ind(l^1) \setminus \ind(l^2)\}$.
As $f_L^{\PI}(s^1,l^1)=1$, the overlay blockchain is live. 
At time $t_1=\Tconf$, the client $\client_1$ outputs its interchain ledger containing $\tx_1$ and $\tx_2$.
Without loss of generality, suppose $\tx_1$ appears as the first entry:
$L_{t_1}^{\client_1}=[\tx_1, \tx_2]$.

\noindent
\textbf{World 2:} 
All blockchains are safe.
The underlay chains $\PI_i$, $i \in \ind(l^2)$, are live.
The chains in the set $P_l$ are not live, but those in $\{\PI_i | i \in \ind(l^1)\} \setminus P_l$ are live.
The rest are stalled.

Suppose that $\tx_1$ and $\tx_2$ are input to the protocol at time $t = 0$.
The chains in $P_l$ seem stalled to $\client_2$ and the validators of the chains $\PI_i$, $i \in \ind(l^2) \setminus \ind(l^1)$.
As $f_L^{\PI}(s^2,l^2)=1$, the overlay blockchain is live. 
At time $t_1=\Tconf$, $\client_2$ outputs its interchain ledger containing $\tx_1$ and $\tx_2$.
Suppose $\tx_2$ appears as the first entry:
$L_{t_1}^{\client_2}=[\tx_2, \tx_1]$.

\noindent\textbf{World 3:}
None of the chains in $\ind(l^1)$ is both safe and live.
Here, $P_s$ and $P_l$ denote the chains $\PI_i$, $i \in \ind(l^1)$, that are \emph{not} safe and live respectively.
The chains $\PI_i$, $i\in \ind(l^1) \cap \ind(l^2)$ are not safe (but live), and $\PI_i$, $i \in \ind(l^2) \setminus \ind(l^1)$, are both safe and live. 
For simplicity, let $Q = \ind(l^1)\cap\ind(l^2)$.

Suppose that $\tx_1,\tx_2$ are input to the protocol at time $t=0$.
The chains in $P_l$ seem stalled to $\client_2$ and the validators of $\PI_i$, $i\in \ind(l^2)/Q$ until at least time $\Tconf$, and pretend like the chains $\PI_i$, $i\in \ind(l^2)/Q$ are stalled.
The chains in $P_s$ expose two conflicting ledgers, the ledgers in world 1 to client $\client_1$ and the validators of $\PI_i$ for $i\in \ind(l^1)/Q$, and the ledgers in world 2 to client $\client_2$ and the validators of $\PI_i$ for $i\in \ind(l^2)/Q$.
As the chains $\PI_i$, $i\in Q$ are not safe, they simultaneously interact with $\client_1$ and the chains $\PI_i$, $i\in \ind(l^1)/Q$ as in World 1 and with $\client_2$ and the chains $\PI_i$, $i\in \ind(l^2)/Q$ as in World 2. 
Then, as the chains in $P_s$ cover all chains $\PI_i$, $i\in \ind(l^1)/Q$ that are live, client $\client_1$ cannot distinguish World 1 and World 3 before $\Tconf$, which implies that $L^{\client_1}_{t_1} = [\tx_1, \tx_2]$.
Similarly, client $\client_2$ cannot distinguish World 2 and World 3, which implies that $L^{\client_2}_{t_2} = [\tx_2, \tx_1]$. 
However, $L^{\client_1}_{t_1}$ and $L^{\client_2}_{t_2}$ conflict with each other, which violates the safety of the overlay protocol.
This is a contradiction.
\end{proof}

\subsection{Proof of Theorem~\ref{lem:converse-symmetric-sync}}
\label{sec:app-converse-symmetric-sync}

\begin{proof}[Proof of Theorem~\ref{lem:converse-symmetric-sync}]
Suppose there are tuples $(m_s, m_l, m_{sl}) \in p^L, (n_s, n_l, n_{sl}) \in P^S$ such that $m_l > k/2$, but $n_s \leq 2(k-m_l)$ and $n_{sl} \leq k - m_l$ for contradiction.
Let $l^1 = 1^{m_l}0^{k-m_l}$, $l^2 = 0^{k-m_l} 1^{m_l}$, $s^3 = 1^{k-m_l}0^{2m_l-k}1^{k-m_l}$ and $l^3 = 1^{k-m_l}0^{2m_l-k}1^{k-m_l}$.
Since the number of $1$'s in $s^3$ is $2(k-m_l)$ and the number of indices that are $1$ in both $s^3$ and $l^3$ is $k-m_l$, by definition of permutation invariant protocols, there exist $s^1, s^2 \in \{0,1\}^k$ such that $(s^1,l^1)\in V^L, (s^2,l^2)\in V^L, (s^3,l^3)\in V^S$.
However, in this case, $\ind(l^1)\cap\ind(l^2)\cap\ind(s^3) = \emptyset$, and there is no index $j \in \ind(l^2)$ such that $s^3_j = 1$ and $l^3_j = 1$, which is a contradiction.

Now, suppose there exist a tuple $(m_s, m_l, m_{sl}) \in p^L$ such that $m_l \leq k/2$ and $n_{sl} \leq k - m_l$.
Let $l^1 = 1^{m_l}0^{k-m_l}$, $l^2 = 0^{k-m_l} 1^{m_l}$, $s^3 = 1^k$ and $l^3 = 1^{m_l}0^{k-m_l}$
Since the number of $1$'s in $l^3$ and $l^2$ are at most $k/2$, by definition of permutation invariant protocols, there exist $s^1, s^2 \in \{0,1\}^k$ such that $(s^1,l^1)\in V^L, (s^2,l^2)\in V^L, (s^3,l^3) \in V^S$.
However, in this case, $\ind(l^1)\cap\ind(l^2) = \emptyset$, and there is no index $j \in \ind(l^2)$ such that $s^3_j = 1$ and $l^3_j = 1$, which is a contradiction.
\end{proof}

\subsection{Alternative Proof for Theorem~\ref{lem:circuit-symmetric-sync}}

Here, we give an alternative proof for Theorem~\ref{lem:circuit-symmetric-sync} using Sync Streamlet as an overlay protocol in the fashion of Trustboost~\cite{trustboost}.
Consider synchronous Streamlet~\cite{streamlet} with the quorom size $q$ and denote it as $\PI$.
We claim that $\PI$ is safe if $n_s \geq 2(k - q)+1$ or $n_{sl} \geq k-q+1$.

\begin{lemma}
Suppose $n_s \geq 2(k - q)+1$ or $n_{sl} \geq k-q+1$. 
If two blocks $B_1,B_2$ get notarized in consecutive epoch $e,e+1$ and $B_1$ is with length $l$, then, no block with length $l$ and epoch greater than $e+2$ is notarized in honest view.
\end{lemma}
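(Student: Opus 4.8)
The plan is to run the classical Streamlet consistency argument, but with the honesty structure dictated by the blockchain-to-validator dictionary of Appendix~\ref{sec:connect}: among the $k$ emulated validators, at least $n_s$ never equivocate (those emulated by safe underlay chains) and at least $n_{sl}$ are fully honest (those emulated by chains that are both safe and live). I read the hypothesis as $B_1 \preceq B_2$ with $B_2$ the child of $B_1$, so that $|B_2| = l+1$. Two facts about synchronous Streamlet do the work. First, a validator casts an epoch-$e$ vote for a proposed block only if the chain up to that block's parent is a notarized chain and is among the longest notarized chains in its current view, and the set of notarized chains in any fixed view is monotone non-decreasing in time; moreover both non-equivocating and fully honest emulated validators obey this rule whenever they vote (for a safe underlay chain runs the validator contract correctly on append-only state, so any recorded vote is a correct action from a state that already reflects everything the validator had seen). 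Second, as a consequence, a rule-obeying validator that voted for $B_2$ at epoch $e+1$ had, at that moment, already observed the length-$l$ chain ending at $B_1$ notarized, and therefore in every epoch after $e+1$ it only ever votes for blocks of length at least $l+1$.

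I would then argue by contradiction. Suppose some block $B$ with $|B| = l$ is notarized at an epoch $e' > e+2$; then $B \ne B_1$ (equal length, distinct epochs), so $B$ conflicts with $B_1$, and $B$ has at least $q$ epoch-$e'$ votes while $B_2$ has at least $q$ epoch-$(e+1)$ votes. I split on which hypothesis holds. If $n_s \ge 2(k-q)+1$, then — using $q \ge (k+1)/2$, which follows from $n_s \le k$ — the two vote sets share at least $2q-k \ge 1$ validators, at most $k-n_s \le 2q-k-1$ of which can be emulated by non-safe chains; hence some non-equivocating validator $v$ voted for $B_2$ at epoch $e+1$ and for $B$ at epoch $e'$, contradicting the second fact above, since $e' > e+1$ and $|B| = l$.

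If instead $n_{sl} \ge k-q+1$, then at least $q-(k-n_{sl}) \ge 1$ of the epoch-$(e+1)$ votes for $B_2$ come from a fully honest validator $v$. By the second fact, $v$ held a full notarization of $B_1$ before broadcasting that vote, hence within $\Delta$ of the start of epoch $e+1$; since honest validators echo messages over a synchronous network, every honest validator has seen $B_1$ notarized by the start of epoch $e+2$. Now at least one epoch-$e'$ vote for $B$ is cast by a fully honest validator $w$, and since $e' > e+2$ that vote is cast after $w$ has seen the length-$l$ notarized chain ending at $B_1$; by the voting rule $w$ could only have voted for a block of length at least $l+1$ at epoch $e'$, contradicting $|B| = l$.

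The step I expect to be the main obstacle is the timing bookkeeping in the second case: verifying rigorously that the echo of $B_1$'s notarization reaches \emph{all} honest validators strictly before \emph{any} epoch larger than $e+2$ begins — this is precisely why the conclusion must exclude epochs $e+1$ and $e+2$ — while remaining robust to the exact epoch length and message-delivery schedule of the synchronous Streamlet instantiation used here. Everything else is quorum counting and the monotonicity of notarized chains, already packaged in the two facts above.
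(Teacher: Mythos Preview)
Your proposal is correct and follows essentially the same two-case split as the paper's proof: in the $n_s \ge 2(k-q)+1$ case you use quorum intersection between the voters for $B_2$ and the voters for $B$ to locate a safe (non-equivocating) common voter, exactly as the paper does; in the $n_{sl} \ge k-q+1$ case both you and the paper locate a fully honest (safe-and-live) voter among the $q$ voters for the late block and derive a contradiction from what that voter must already have seen. The only cosmetic difference is that in the second case you propagate the notarization of $B_1$ via an honest voter for $B_2$ and argue it reaches all honest validators by epoch $e+2$, whereas the paper propagates the notarization of $B_2$ itself and argues it reaches the honest voter for $B'$ by the start of epoch $e+3$; both routes yield the same threshold $e' > e+2$ and are standard synchronous-Streamlet bookkeeping.
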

\begin{proof}
Suppose that a block $B'$ with length $l$ and epoch $e'$ is notarized where $e'\geq e+3$. 
Let $Q_1$ be the set of blockchains which vote for $B_2$ in epoch $e+1$, let $Q_2$ be the set of blockchains which vote for $B'$ in epoch $e'$ and let $S$ be the set of blockchain which is safe and live. 

If we have $n_s+n_{sl}>2(k-q)$, as $|Q_1\cap Q_2|\geq 2q-n$, this implies that $k-n_s-n_{sl}<2q-k$. 
Therefore, at least one blockchain in $Q_1\cap Q_2$ is safe. 
Let $\PI_i$ be a safe blockchain in $Q_1\cap Q_2$. 
We note that $\PI_i$ has observed $B_1$ in epoch $e+1$. 
As the length of $B'$ is $l$, the voting action for $B'$ from $\PI_i$ is invalid. This contradicts with Proposition \ref{prop:safe-chain-2}.

If we have $n_{sl}>k-q$, then there exists a safe and live blockchain $\PI_i$ that vote for $B'$ in epoch $e'$. 
As $B_2$ get notarized in epoch $e+1$, $\PI_i$ has observed $B_2$ in the beginning of epoch $e+3$. 
As the length of $B'$ is less than the length of $B_2$, the voting action for $B'$ from $\PI_i$ is invalid. 
This contradicts with Proposition \ref{prop:safe-chain-2}.

\end{proof}

\begin{theorem}[Consistency]
Suppose that $\max\{s+b_s,2b_s\}>2(n-q)$. Suppose that two notarized chains $\texttt{chain}$ and $\texttt{chain'}$ of lengths $l$ and $l'$ respectively both triggered the finalization rule in some honest node’s view, that is, $\texttt{chain}[:l-5]$ and $\texttt{chain'}[:l'-5]$ are finalized in honest view. Without loss of generality, suppose that $|\texttt{chain}| \leq |\texttt{chain'}|$. It must be that $\texttt{chain}[:l-5]\preceq \texttt{chain'}[:l'-5]$
\end{theorem}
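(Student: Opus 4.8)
The plan is to run the standard Streamlet consistency argument (\cf~\cite{streamlet}), with the preceding lemma --- two notarized blocks at consecutive epochs $e,e+1$, the lower one of length $\ell$, preclude any notarized block of length $\ell$ with epoch exceeding $e+2$ in the honest view --- playing the role that honest-supermajority quorum intersection plays in the original proof. The only facts about the emulated setting needed beyond this lemma are Propositions~\ref{prop:safe-chain-1} and~\ref{prop:safe-chain-2}: a safe underlay chain hosts an emulated validator whose view is consistent and whose longest notarized chain is monotone in length, and which never equivocates within an epoch; and no honest client ever outputs an invalid execution trace, so in particular no emulated validator is ever seen to vote for a block that fails the longest-notarized-chain rule. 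The hypothesis $\max\{s+b_s,2b_s\}>2(n-q)$ is exactly what makes the preceding lemma available (equivalently, quorum intersection into the safe chains when $s+b_s>2(n-q)$, and into the safe-and-live chains when $2b_s>2(n-q)$) throughout.

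First I would reduce to a genuine conflict: since $|\texttt{chain}|\le|\texttt{chain'}|$ we have $l-5\le l'-5$, so if $\texttt{chain}[:l-5]\not\preceq\texttt{chain'}[:l'-5]$ the two finalized chains diverge at some height $h^\ast\le l-5$; I fix the first such height and derive a contradiction. Next I would expose the structure handed over by the finalization rule: triggering it on the length-$l$ notarized chain $\texttt{chain}$ exhibits notarized blocks $\texttt{chain}[l-5]\prec\texttt{chain}[l-4]\prec\cdots\prec\texttt{chain}[l]$ carrying consecutive epoch numbers $e,e+1,\dots,e+5$, and likewise for $\texttt{chain'}$ a run at heights $l'-5,\dots,l'$ with epochs $e',\dots,e'+5$. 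The core step is then the Streamlet-style collision argument: because $\texttt{chain'}$ is notarized up to length $l'\ge l$, it carries a notarized block at every height in $[l-5,l]$, and since $h^\ast\le l-5$ each of these lies on the opposite fork and so differs from the corresponding block of $\texttt{chain}$'s run; applying the preceding lemma to each consecutive pair $(\texttt{chain}[l-5+j],\texttt{chain}[l-4+j])$, $j=0,\dots,4$, bounds the epoch of $\texttt{chain'}[l-5+j]$ from above by $e+j+2$, and symmetrically using $\texttt{chain'}$'s run against the blocks of $\texttt{chain}$ at the matching heights. Combined with the facts that epochs strictly increase along a chain, that an emulated validator cannot vote twice in one overlay epoch (Proposition~\ref{prop:safe-chain-1}), and that it cannot be seen extending anything shorter than its longest notarized chain (Proposition~\ref{prop:safe-chain-2}), these bounds should force the two conflicting blocks sitting at a common height to coincide --- a contradiction.

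The step I expect to be the main obstacle is precisely turning this family of one-sided epoch bounds into a genuine contradiction, i.e.\ recovering the full Streamlet-type invariant ``within a finalized window there is a unique notarized block at each height.'' This is why the finalization depth is $5$ --- a finalized tip then comes equipped with five further consecutive-epoch notarized descendants, enough room to apply the one-directional lemma at several heights --- and it is where quorum intersection into the safe chains and into the safe-and-live chains is additionally used to rule out ``earlier-epoch'' collisions and sibling-block collisions at the divergence height (the cases the preceding lemma alone does not reach, which is where a safe-and-live underlay's prompt view of notarizations, accounting for the emulation latency $\Tconf$, is needed). I would also dispatch the regime $l'\gg l$, in which $\texttt{chain}$ does not reach $\texttt{chain'}$'s run, by strong induction on $\min(l,l')$ (a minimal-counterexample argument), reducing any such configuration to one already handled. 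The remaining ingredients --- unwinding the Streamlet voting and notarization rules, monotonicity of notarized-chain length in an honest view, and the epoch-index bookkeeping --- are routine.
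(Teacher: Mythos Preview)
The paper does not actually prove this theorem: it is stated at the very end of the appendix and the document terminates immediately afterward, with no proof environment following. There is therefore nothing in the paper to compare your proposal against.

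For what it is worth, your plan --- running the standard synchronous Streamlet consistency argument with the preceding lemma standing in for honest-supermajority quorum intersection --- is the natural approach and almost certainly what the authors intended, since the preceding lemma is precisely the Sync-Streamlet analogue of the core Streamlet safety lemma. Your diagnosis of where the residual work lies (the ``earlier-epoch'' collisions and sibling blocks not covered by the one-directional lemma, and the regime where the two finalized windows do not overlap) is accurate; these are exactly the places where the depth-$6$ finalization window and the synchrony assumption via the safe-and-live underlay chains do their work in the original Sync Streamlet proof. Note also that the theorem's hypothesis $\max\{s+b_s,2b_s\}>2(n-q)$ uses different notation than the lemma's hypothesis $n_s\ge 2(k-q)+1$ or $n_{sl}\ge k-q+1$; reading $s+b_s$ as the total number of safe underlays and $b_s$ as the number of safe-and-live ones (with $n=k$), the two conditions coincide, which is further evidence that this subsection was left in draft form.
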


\end{document}